\keywords{Comodels, residual comodels, bimodels, streams, stream
  processors, trace}
\newcommand{\C}{\mathcal{C}}
\newcommand{\E}{\mathcal{E}}
\newcommand{\thg}{{\mathord{\text{--}}}}
\newcommand{\dbr}[1]{\left\llbracket{#1}\right\rrbracket}
\newcommand{\abs}[1]{{\left|{#1}\right|}}
\newcommand{\alg}[1][X]{{\boldsymbol{#1}}}
\newcommand{\quot}{\delimiter"502F30E\mathopen{}}
\newcommand{\cat}[1]{\mathit{#1}}
\newcommand{\cd}[2][]{\vcenter{\hbox{\xymatrix#1{#2}}}}
\begin{document}

\title{Stream processors and comodels}

\author[R.~Garner]{Richard Garner\lmcsorcid{0000-0003-4475-8721}}
\address{School of Mathematical and Physical Sciences,  Macquarie University, NSW 2109, Australia}
\email{richard.garner@mq.edu.au}
\thanks{Supported by ARC grants FT160100393 and DP190102432.}

\begin{abstract}
  In 2009, Hancock, Pattinson and Ghani gave a coalgebraic
  characterisation of stream processors
  $A^\mathbb{N} \rightarrow B^\mathbb{N}$ drawing on ideas of
  Brouwerian constructivism. Their stream processors have an
  \emph{intensional} character; in this paper, we give a corresponding
  coalgebraic characterisation of \emph{extensional} stream
  processors, i.e., the set of continuous functions
  $A^\mathbb{N} \rightarrow B^\mathbb{N}$. Our account sites both our
  result and that of \emph{op.\,cit.} within the apparatus of
  \emph{comodels} for algebraic effects originating with
  Power--Shkaravska. Within this apparatus, the distinction between
  intensional and extensional equivalence for stream processors arises
  in the same way as the the distinction between \emph{bisimulation}
  and \emph{trace equivalence} for labelled transition systems and
  probabilistic generative systems.
\end{abstract} 

\maketitle

\section{Introduction}
\label{sec:introduction}

As is well known, the type of infinite \emph{streams} of elements of
some type $A$ may be defined to be the final coalgebra
$\nu X.\, A \times X$. If types are mere sets, then this coalgebra is
manifested as the set $A^\mathbb{N}$ of infinite lists of
$A$-elements, with the structure map
\begin{equation}
   \qquad \label{eq:3}
   \alpha \colon \vec a  \mapsto (a_0, \partial \vec a) \qquad \text{where} \qquad \partial(a_0, a_1, a_2, \dots, ) = (a_1, a_2,
   \dots)\rlap{ .}
\end{equation}
Of course, the coalgebra structure describes the corecursive nature of
streams, but also captures their sequentiality: an $A$-stream is
\emph{first} an $A$-value, and \emph{then} an $A$-stream.

If $A$ and $B$ are types, then an \emph{$A$-$B$-stream processor} is a
way of turning an $A$-stream into a $B$-stream. If types are sets,
then the crudest kind of stream processor would simply be a function
$f \colon A^\mathbb{N} \rightarrow B^\mathbb{N}$; however, it is more
computationally reasonable to restrict to those $f$ which are
\emph{productive}, in the sense that determining each $B$-token of the
output should require examining only a finite number of $A$-tokens of
the input.

The productive functions
$f \colon A^\mathbb{N} \rightarrow B^\mathbb{N}$ are in fact precisely
the \emph{continuous} ones for the prodiscrete (= Baire) topologies on
$A^\mathbb{N}$ and $B^\mathbb{N}$. While this representation of stream
processors is mathematically smooth, it fails to make explicit their
sequentiality: we should like to see the fact that determining each
\emph{successive} token of the output $B$-stream requires examining
\emph{successive} finite segments of the input $A$-stream. Much as for
streams themselves, this can be done by presenting stream processors
as a final coalgebra.

Such a presentation was given in~\cite{Hancock2009Representations}.
Therein, the \emph{type of $A$-$B$-stream processors} was taken to be
the final coalgebra $\nu X.\, T_A(B \times X)$, where
$T_A(V) = \mu X.\, V + X^A$; and it was explained how each element of
this type encodes a continuous function
$A^\mathbb{N} \rightarrow B^\mathbb{N}$, and how, conversely, each
such continuous function yields an element of this type. An
interesting aspect of the story is that these assignments are
\emph{not} mutually inverse: distinct elements of
$\nu X.\, T_A(B \times X)$ may represent the same continuous function,
so that elements of this type are really \emph{intensional}
representations of stream-processing algorithms.

While there are many perspectives from which this is a good thing, it
leaves open the question of whether there is a coalgebraic
representation for \emph{extensional} stream processors, i.e.,
for the set of continuous functions
$A^\mathbb{N} \rightarrow B^\mathbb{N}$. In this paper, we
show that there is:

\begin{thm}
  \label{thm:1}
  Let $A$ and $B$ be sets. The set of continuous functions
  $A^\mathbb{N} \rightarrow B^\mathbb{N}$ is the underlying set of the
  terminal $B$-ary comagma in the category of $A$-ary magmas.
\end{thm}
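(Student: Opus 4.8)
The plan is to compute the terminal $B$-ary comagma by the terminal-coalgebra sequence, after first pinning down the relevant copowers. Write $\mathbf{Mag}_A$ for the category of $A$-ary magmas; it is the category of algebras for the (possibly infinitary) monad $T_A$ of the introduction, whose value $T_AV$ is the set of \emph{well-founded} $A$-branching trees with leaves in $V$ --- well-founded, not merely finite-height, which is exactly the point that makes the statement hold for infinite $A$, where a continuous map $A^{\mathbb N}\to B$ may consume an unbounded number of input tokens before committing to an output. A $B$-ary comagma in $\mathbf{Mag}_A$ is precisely a coalgebra for the $B$-fold copower endofunctor $B\cdot(\thg)$ of $\mathbf{Mag}_A$, so the task is to identify $\nu X.\, B\cdot X$ and read off its underlying set. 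The first step is therefore a lemma on the copower: $B\cdot X$ is the set of \emph{reduced} well-founded $A$-branching trees over $B\times UX$, a tree being reduced when no proper subtree has all its leaves carrying the same $B$-label; such a ``monochromatic'' subtree is collapsible --- to a single leaf, using the magma structure of $X$ on the forgotten $X$-components --- because each coproduct injection $X\to B\cdot X$ must be a magma map. Uniqueness of reduced forms is a routine confluence argument for the evident terminating rewriting system (collapsing strictly decreases node count, and overlapping redexes are nested monochromatic subtrees of the same colour, for which collapse commutes by functoriality of evaluation).

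The key observation is then that $B\cdot(\thg)$ preserves cofiltered limits of $\mathbf{Mag}_A$: reduction inspects only the $B$-labels, which a limit leaves untouched, and the connecting maps of a cofiltered system of copowers merely relabel the $X$-components, hence preserve tree shape and reducedness; so a compatible family of reduced trees has constant $B$-skeleton and amounts to a single reduced tree over $B\times\lim UX_i$. By Ad\'amek's theorem, the terminal $B$-ary comagma therefore exists and is the limit $\lim_{n<\omega}(B\cdot)^n 1$ of the terminal sequence, whose underlying set is $\lim_n U\bigl((B\cdot)^n 1\bigr)$.

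It remains to identify this limit with $C(A^{\mathbb N},B^{\mathbb N})$. I would show by induction on $n$ that $(B\cdot)^n 1\cong C(A^{\mathbb N},B^n)$ as $A$-ary magmas --- where $C(A^{\mathbb N},B^n)$ carries the ``dispatch'' operation $(f_a)_{a\in A}\mapsto(\vec x\mapsto f_{x_0}(\partial\vec x))$ --- compatibly with the bonding maps, which on the right are post-composition with the truncations $B^{n+1}\to B^n$. The base case is trivial, and for the step one has $(B\cdot)^{n+1}1 = B\cdot C(A^{\mathbb N},B^n)$, so it suffices to see that a reduced tree $\tau$ over $B\times C(A^{\mathbb N},B^n)$ both determines and is determined by the continuous map $f_\tau\colon A^{\mathbb N}\to B^{n+1}$ obtained by running $\tau$: descend $\tau$ along $\vec x$ to a leaf $(b,g)$ at depth $k$, then output $b$ followed by $g(\partial^k\vec x)$. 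Bijectivity rests on the fact that in a reduced tree each leaf lies at the end of a unique finite access path, so the split of the input between ``consumed en route to the leaf'' and ``passed to the continuation'' is forced --- and it is exactly this bookkeeping, with well-founded but otherwise unconstrained read-patterns between successive output tokens, that recovers \emph{continuity} rather than bounded-depth definability. Passing to the limit, $C(A^{\mathbb N},B^{\mathbb N}) = C(A^{\mathbb N},\lim_n B^n) = \lim_n C(A^{\mathbb N},B^n) = \lim_n(B\cdot)^n 1$ (the middle equality because the product topology on $B^{\mathbb N}$ is the inverse-limit topology of the discrete $B^n$), and the structures one picks up in the limit are the dispatch operation and the ``emit one output token'' co-operation.

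An alternative, more hands-on route avoids Ad\'amek: equip $C(A^{\mathbb N},B^{\mathbb N})$ with those two structures at the outset and verify terminality directly, constructing for an arbitrary $B$-ary comagma $(X,\delta)$ the unique morphism to it as the evident ``run $\delta$ against the input stream'' map (well-definedness and continuity of this run being exactly where well-foundedness of $\delta(x)$ is used). Either way the main obstacle is the same: getting the copower in $\mathbf{Mag}_A$ exactly right --- the well-founded-trees subtlety, the free-product normal form, and its confluence --- and then executing the identification, whose delicate point is precisely why the unique-access-path bookkeeping makes the comodel-theoretic limit coincide with the set of \emph{all} continuous, as opposed to boundedly continuous, stream functions.
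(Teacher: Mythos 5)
Your argument is correct in outline but takes a genuinely different route from the paper's. The paper never runs the terminal-coalgebra sequence: it builds the adjunction $(\thg)\otimes\alg[A^\mathbb{N}] \dashv \cat{Top}(\alg[A^\mathbb{N}],\thg)$ between $\cat{Mod}(\mathbb{T}_A)$ and $\cat{Top}$ (Proposition~\ref{prop:1}), proves that the right adjoint preserves coproducts (Proposition~\ref{prop:4}), lifts the adjunction to categories of $\mathbb{T}_B$-comodels, and concludes because the lifted right adjoint preserves terminal objects and the final topological $\mathbb{T}_B$-comodel is $\alg[B^\mathbb{N}]$. It therefore needs neither the normal-form description of copowers in $\cat{Mod}(\mathbb{T}_A)$ (which appears only afterwards, in Lemma~\ref{lem:5}, to make the structure maps explicit) nor an $\omega^{\mathrm{op}}$-continuity argument for the copower functor; the topological work is concentrated once, in Lemma~\ref{lem:3}, via a dependent-choice argument against continuity. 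Your route buys the normal forms and the terminal sequence as by-products, but is tailored to $\mathbb{T}_A$; the paper's route reuses the adjunction throughout Section~\ref{sec:comp-intens-extens} to obtain the trace and reification maps, and is the pattern that also covers the $\mathbb{P}_f^+$- and $\mathbb{D}$-analogues.

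Two of your steps are asserted exactly where the real content lies. First, your termination measure for the reduction to normal form --- ``collapsing strictly decreases node count'' --- fails for infinite $A$, since a well-founded $A$-ary tree may have infinitely many nodes; you should instead define the reduced form by well-founded recursion on the tree and verify the coproduct's universal property directly against that description, which also dispenses with confluence. Second, and more seriously, your bijection $(B\cdot)^{n+1}1\cong\cat{Top}(A^\mathbb{N},B^{n+1})$ is argued only in the injective direction (the unique-access-path bookkeeping); the surjective direction --- that \emph{every} continuous $f$ is computed by some well-founded tree --- is the entire topological substance of the theorem and is only gestured at. You need the bar argument: the finite words $s$ on which $\pi_0\circ f$ is constant form a bar by continuity, so the tree of unsecured words has no infinite branch and hence (classically, using dependent choice) is well-founded, yielding the required element of $T_A(B\times\cat{Top}(A^\mathbb{N},B^n))$. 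This plays exactly the role of Lemma~\ref{lem:3} in the paper; without it your induction step fails for infinite $A$ (for finite $A$ compactness of $A^\mathbb{N}$ would give a uniform reading depth and the issue disappears). With those two repairs the proof goes through.
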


In this result, an \emph{$A$-ary magma} is a set $X$ with an operation
$\xi \colon X^A \rightarrow X$ satisfying no further axioms. More
generally, we can speak of $A$-ary magmas in any category $\C$ with
products; while, if $\C$ is a category with \emph{co}products, we can
define an $A$-ary \emph{co}magma in $\C$ to be an $A$-ary magma in
$\C^\mathrm{op}$. Explicitly, this involves an object $X \in \C$ and a
map $X \rightarrow X + \dots + X$ into the coproduct of $A$ copies of
$X$, subject to no further conditions.

On the face of it, our Theorem~\ref{thm:1} has no obvious relation
to~\cite{Hancock2009Representations}, nor to anything resembling
computation. Thus, the broader contribution of this paper is to site
both the ideas of~\cite{Hancock2009Representations} and our
Theorem~\ref{thm:1} within the 
well-established machinery of
\emph{comodels}~\cite{Power2004From, Plotkin2008Tensors}, as we now
explain.

The category-theoretic approach to computational effects originates
in~\cite{Moggi1991Notions}: given a monad $\mathsf{T}$ on a category
of types and programs, we view elements of $T(V)$ as computations with
side-effects from $\mathsf{T}$ returning values in $V$. This idea was
refined in~\cite{Plotkin2002Notions}; rather than considering monads
\emph{simpliciter}, we generate them from \emph{algebraic theories}
whose basic operations are the computational primitives for the
effects at issue. A key example, for us, is the theory $\mathbb{T}_A$
of input from an alphabet $A$, which is freely generated by a single
$A$-ary operation $\mathsf{read}$.

The approach via algebraic theories has the virtue of giving a good
notion of \emph{model} in any category with finite powers. In
particular, one has \emph{comodels}, which are models in the opposite
of the category of sets, and a key insight of~\cite{Power2004From} is
that comodels of a theory $\mathbb{T}$ can be seen as coalgebraic
objects for evaluating $\mathbb{T}$-computations to values. We recall
these developments in detail in Section~\ref{sec:streams-as-final},
and in particular, we see that a $\mathbb{T}_A$-comodel is an
$A \times (\thg)$-coalgebra, and that the \emph{final} comodel is the
set of $A$-streams.

A range of authors~\cite{Plotkin2008Tensors, Mogelberg2014Linear,
  Pattinson2015Sound, Pattinson2016Program, Uustalu2015Stateful,
  Katsumata2020Interaction, Ahman2020Runners, Goncharov2020Toward,
  Uustalu2020Algebraic, Garner2020The-costructure-cosemantics} have
taken this attractive perspective on operational semantics further.
Particularly salient for us is the concept, due
to~\cite{Ahman2020Runners, Katsumata2020Interaction,
  Uustalu2020Algebraic} of a \emph{residual comodel}. Given theories
$\mathbb{T}$ and $\mathbb{R}$, an $\mathbb{R}$-residual
$\mathbb{T}$-comodel is, formally speaking, a comodel of $\mathbb{T}$
in the Kleisli category of $\mathbb{R}$ but is, practically speaking,
a coalgebraic entity for evaluating or compiling
$\mathbb{T}$-computations into $\mathbb{R}$-computations. In
particular, we have $\mathbb{T}_A$-residual $\mathbb{T}_B$-comodels,
which translate requests for $B$-input into requests for $A$-input,
and a little thought shows that this is exactly the r\^ ole filled by
an $A$-$B$-stream processor. In fact, the final coalgebra
of~\cite{Hancock2009Representations} turns out to be precisely the
\emph{final $\mathbb{T}_A$-residual $\mathbb{T}_B$-comodel}; in
Sections~\ref{sec:runn-resid-runn} and~\ref{sec:intens-stre-proc}, we
explain this, and show how other aspects
of~\cite{Hancock2009Representations} such as the \emph{composition} of
intensional stream processors flow naturally.

\looseness=-1 To get from here to our Theorem~\ref{thm:1} requires a
new import from category-theoretic universal algebra: the notion of a
\emph{bimodel}~\cite{Freyd1966Algebra, Tall1970Representable,
  Bergman1996Co-groups}. An \emph{$\mathbb{R}$-$\mathbb{T}$-bimodel}
is a comodel of $\mathbb{T}$ in the category of $\cat{Set}$-models of
$\mathbb{R}$. Since this latter category has the Kleisli category as a
full subcategory, bimodels are a generalisation of residual
comodels---one which, roughly speaking, allows additional quotients to
be taken. These quotients are just what one needs to collapse the
intensional stream processors of~\cite{Hancock2009Representations} to
their underlying continuous functions. We develop this theory in
Section~\ref{sec:extens-stre-proc}, culminating in our
Theorem~\ref{thm:1} which we now recognise as describing the
\emph{final $\mathbb{T}_A$-$\mathbb{T}_B$-bimodel}.

An obvious question at this point is whether we have similar
characterisations of the final bimodel on replacing $\mathbb{T}_A$ and
$\mathbb{T}_B$ by more elaborate algebraic theories. One step in this
direction is given in~\cite{Yoshida2022Continuous}, which
characterises the final $\mathbb{R}$-$\mathbb{T}$-bimodel whenever
$\mathbb{R}$ and $\mathbb{T}$ are \emph{free} algebraic theories,
i.e., theories generated by operations subject to no equations.

A different direction of generalisation points towards labelled
transition systems and generative probabilistic
systems~\cite{Glabbeek1990Reactive}. Indeed, (non-terminating,
finitely branching) labelled transition systems with alphabet $A$ are
precisely $\smash{\mathbb{P}_f^+}$-residual $\mathbb{T}_A$-comodels,
for $\smash{\mathbb{P}_f^+}$ the theory of binary non-deterministic
choice; while (finitely supported) generative probabilistic systems
with alphabet $A$ are $\mathbb{D}$-residual $\mathbb{T}_A$-comodels,
for $\mathbb{D}$ the theory of binary probabilistic choice. As is well
known, in these examples, the final residual $\mathbb{T}_A$-comodel
captures states up to \emph{bisimulation} equivalence. What is perhaps
less well known is that the final \emph{bimodel} captures states up to
\emph{trace} equivalence; indeed, as shown in~\cite[\S
7]{Garner2018Hypernormalisation}, the final
$\smash{\mathbb{P}_f^+}$-$\mathbb{T}_A$-bimodel is the set of closed
subsets of $A^\mathbb{N}$; while the final
$\mathbb{D}$-$\mathbb{T}_A$-bimodel is the set of probability measures
on $A^\mathbb{N}$. This fact provides an alternative perspective on
the trace semantics of~\cite{Hasuo2007Generic} (which itself builds
on~\cite{Power1999Coalgebraic}) in which an object of traces is found
as a final object among not all bimodels, but among all \emph{free}
bimodels; in future work, we will give a more careful comparison of
the two notions of trace.


From this perspective, then, the continuous function encoded by an
intensional stream processor can be seen as its ``trace'', and with
this in mind, the final contribution of this paper in
Section~\ref{sec:comp-intens-extens} is to explain from a
comodel-theoretic perspective the fact that intensional stream
processors admit a procedure of ``normalisation-by-trace-evaluation'',
which normalises each intensional stream processor to a
\emph{maximally lazy} stream processor with the same trace; this is a
particular instantiation of a more general schema which is explored
further in~\cite{Garner2018Hypernormalisation}.

\section{Streams as a final comodel}
\label{sec:streams-as-final}

In this background section, we recall how algebraic theories present
notions of effectful computation, how \emph{comodels} of a theory
furnish environments appropriate for evaluating such computations, and
how the type of streams arises as a final comodel.

\begin{defi}[Algebraic theory]
  \label{def:1}
  A \emph{signature} comprises a set $\Sigma$ of \emph{function
    symbols}, and for each $\sigma \in \Sigma$ a set $\abs \sigma$,
  its \emph{arity}. Given a signature $\Sigma$ and a set $V$, we
  define the set $\Sigma(V)$ of \emph{$\Sigma$-terms with variables in
    $V$} by the inductive clauses
  \begin{equation*}
    v \in V \implies v \in \Sigma(V)
    \quad \text{and} \quad
    \sigma \in \Sigma\text{, }t \in \Sigma(V)^{\abs \sigma} \implies
    \sigma(t)
    \in \Sigma(V)\rlap{ .}
  \end{equation*}
  An \emph{equation} over a signature $\Sigma$ is a formal equality $t=u$ between
  terms in the same set of free variables. A (algebraic) \emph{theory}
  $\mathbb{T}$ comprises a signature and a
  set $\E$ of equations over it.
\end{defi}

\begin{defi}[$\mathbb{T}$-terms]
  \label{def:8}
  Given a signature $\Sigma$ and terms $t \in \Sigma(V)$ and
  $u \in \Sigma(W)^V$, we define the \emph{substitution}
  $t(u) \in \Sigma(W)$ by recursion on $t$:
  \begin{equation}\label{eq:20}
    \quad v \in V \implies v(u) = u_v \quad \text{and} \quad 
    \sigma \in
    \Sigma\text{, } t \in \Sigma(V)^{\abs \sigma} \implies
    (\sigma(t))(u) = \sigma(\lambda i.\,t_i(u))\rlap{ .}
  \end{equation}
  Given a theory $\mathbb{T}$ with signature $\Sigma$, we define
  \emph{$\mathbb{T}$-equivalence} as the smallest family of
  substitution-congruences $\equiv_\mathbb{T}$ on the sets $\Sigma(V)$
  such that $t\equiv_\mathbb{T} u$ for all equations $t=u$ of $\mathbb{T}$.
  The set $T(V)$ of \emph{$\mathbb{T}$-terms with variables in $V$} is
  $\Sigma(V) \quot \equiv_\mathbb{T}$.
\end{defi}

When a theory $\mathbb{T}$ is seen as specifying a computational
effect, $T(V)$ describes the set of computations with effects from
$\mathbb{T}$ returning a value in $V$. 

\begin{exa}[Non-deterministic choice]
  The theory $\mathbb{P}_f^+$ of \emph{non-deterministic choice}
  comprises a single binary function symbol $\vee$ (written in infix
  notation) together with the equations
  \begin{equation*}
    x \vee y = y \vee x \qquad \qquad x \vee x = x \qquad \qquad (x \vee y) \vee z = x \vee (y \vee z)\rlap{ .}
  \end{equation*}
  The set of terms $P_f^+(V)$ can be identified with the set of
  non-empty finite subsets of $V$, where the subset $\{v_1, \dots,
  v_n\}$ corresponds to the term $v_1 \vee \cdots \vee v_n$. We view
  this term as encoding a program which chooses non-deterministically
  between one of the return values $v_1, \dots, v_n$.
\end{exa}

\begin{exa}[Probabilistic choice]
  The theory $\mathbb{P}_f^+$ of \emph{probabilistic choice}
  comprises a family of binary function symbols $+_r$ indexed by $r
  \in (0,1)$ together with the equations
  \begin{equation*}
    x +_r y = y +_{r^\ast} x \qquad \qquad x +_r x = x \qquad \qquad (x +_r y) +_s z = x +_{rs} (y +_{r^\ast s/(rs)^\ast} z)
  \end{equation*}
  where we write $r^\ast$ for $1-r$. The set of terms $D(V)$ can be
  identified with the set of finitely supported discrete probability
  distributions on $V$; we see this as a program which chooses
  probabilistically among possible return values in $V$.
\end{exa}

\begin{exa}[Input]
  \label{ex:6} Given a set $A$, the theory $\mathbb{T}_A$ of
  \emph{$A$-valued input} comprises a single $A$-ary function symbol
  $\mathsf{read}$, satisfying no equations, whose action we think of
  as:
  \begin{equation*}
    (t \colon A \rightarrow X) \mapsto 
    \text{\textsf{let read$()$ be $a$.\,$t(a)$}}\rlap{ .}
  \end{equation*}
  The set of terms $T_A(V)$ is, as in the introduction, the initial
  algebra $\mu X.\, V + X^A$, whose elements may be seen
  combinatorially as $A$-ary branching trees with leaves labelled in
  $V$; or computationally as programs which request $A$-values from an
  external source and use them to determine a return value in $V$. For
  example, when $A = \mathbb{N}$, the program which requests two input
  values and returns their sum is presented by
  \begin{equation}\label{eq:15}
    \text{\textsf{let read$()$ be $n$.\,let read$()$ be $m$.\,$n+m$}}
    \ \in \ T(\mathbb{N})\rlap{ .}
  \end{equation}
\end{exa}

We now define the models of an algebraic theory. In the definition, we
say that a category $\C$ has \emph{powers} if it has all set-indexed
self-products $X^A := \Pi_{a \in A} X$.

\begin{defi}[$\Sigma$-structure, $\mathbb{T}$-model]
  \label{def:17}
  Let $\Sigma$ be a signature. A \emph{$\Sigma$-structure $\alg$} in a
  category $\C$ with powers is an object $X \in \C$ with
  \emph{operations}
  $\dbr {\sigma}_{\alg} \colon X^{\abs \sigma} \rightarrow X$ for each
  $\sigma \in \Sigma$. For each $t \in \Sigma(V)$ the
  \emph{derived operation} $\dbr{t}_{\alg} \colon X^V \rightarrow X$
  is then determined by the recursive clauses:
  \begin{equation}\label{eq:24}
    \dbr{v}_{\alg} = \pi_v \qquad \text{and} \qquad
    \smash{\dbr{\sigma(t)}_{\alg} = X^V
      \xrightarrow{(\dbr{t_i}_{\alg})_{i \in \abs \sigma}} X^{\abs
        \sigma} \xrightarrow{\dbr{\sigma}_{\alg}} X}\rlap{ .}
  \end{equation}
  Given a theory $\mathbb{T} = (\Sigma, \E)$, a
  \emph{$\mathbb{T}$-model in $\C$} is a $\Sigma$-structure $\alg$
  which satisfies $\dbr{t}_{\alg} = \dbr{u}_{\alg}$ for all equations
  $t=u$ of $\mathbb{T}$. The unqualified term ``model''
  will mean ``model in $\cat{Set}$''.

  A \emph{homomorphism} $f \colon \alg \rightarrow \alg[Y]$ of
  $\mathbb{T}$-models in $\C$ is a $\C$-map $f \colon X \rightarrow Y$
  such that for all $\sigma \in \Sigma$ we have
  $\dbr{\sigma}_{\alg[Y]} \circ f^{\abs{\sigma}} = f \circ
  \dbr{\sigma}_{\alg}$. We write $\cat{Mod}(\mathbb{T}, \C)$ for the
  category of $\mathbb{T}$-models in $\C$, and $\cat{Mod}(\mathbb{T})$
  for the models in $\cat{Set}$.
\end{defi}

The set of computations $T(V)$ has a structure of $\mathbb{T}$-model
$\alg[T](V)$ with operations given by substitution; and as is well
known, this structure is in fact \emph{free}:
\begin{lem}
  \label{lem:4}
  The inclusion of variables $\eta_V \colon V \rightarrow T(V)$
  exhibits $\alg[T](V)$ as the free $\mathbb{T}$-model on $V$. That
  is, for any $\mathbb{T}$-model $\alg$ and any function
  $f \colon V \rightarrow X$, there is a unique $\mathbb{T}$-model
  homomorphism $f^\dagger \colon \alg[T](V) \rightarrow \alg$ with
  $f^\dagger \circ \eta_V = f$. Explicitly,
  $f^\dagger(t) = \dbr{t}_{\alg}\!(\lambda v.\, f(v))$.
\end{lem}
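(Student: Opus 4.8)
The plan is to check the four things the statement asks for --- that $\alg[T](V)$ is a $\mathbb{T}$-model, that the displayed formula for $f^\dagger$ is well-defined on $\equiv_\mathbb{T}$-classes, that it is a homomorphism over $f$, and that it is the unique such --- all by induction on terms against Definitions~\ref{def:8} and~\ref{def:17}, the only clause with real content being well-definedness. The $\mathbb{T}$-model structure on $\alg[T](V)$ is the one already described: $\dbr\sigma_{\alg[T](V)}$ sends $([t_i])_{i \in \abs\sigma}$ to $[\sigma((t_i)_i)]$, so that the derived operation $\dbr s_{\alg[T](V)}$ of a term $s \in \Sigma(W)$ carries an assignment $(\lambda w.\,[s'_w])$ to the class $[s(s')]$ of the corresponding substitution instance; this is well-defined and satisfies the equations of $\mathbb{T}$ exactly because $\equiv_\mathbb{T}$ is a substitution-congruence containing those equations.

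The crux is well-definedness of $f^\dagger$, for which the key tool is the ``chain rule'' $\dbr{t(u)}_{\alg} = \dbr t_{\alg} \circ (\dbr{u_v}_{\alg})_{v \in V} \colon X^W \to X$ for $t \in \Sigma(V)$ and $u \in \Sigma(W)^V$, proved by a routine induction on $t$ using~\eqref{eq:20} and~\eqref{eq:24}. Granting it, I would introduce the family of relations on the $\Sigma(V)$ defined by $t \approx_V u$ iff $\dbr t_{\alg} = \dbr u_{\alg}$ as maps $X^V \to X$: clause~\eqref{eq:24} makes $\approx$ a congruence for the function symbols, and the chain rule makes it closed under substitution in either argument, so $(\approx_V)_V$ is a family of substitution-congruences, and it contains every equation $t = u$ of $\mathbb{T}$ precisely because $\alg$ is a $\mathbb{T}$-model. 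Minimality of $\equiv_\mathbb{T}$ then gives $\equiv_\mathbb{T} \,\subseteq\, \approx$ on each $\Sigma(V)$, which is exactly the assertion that $t \mapsto \dbr t_{\alg}(\lambda v.\,f(v))$ factors through $T(V) = \Sigma(V)\quot\equiv_\mathbb{T}$ to give $f^\dagger \colon \alg[T](V) \to \alg$.

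The remaining clauses are then bookkeeping. The recursive clause~\eqref{eq:24} gives $f^\dagger(\sigma(t)) = \dbr{\sigma(t)}_{\alg}(\lambda v.\,f(v)) = \dbr\sigma_{\alg}((\dbr{t_i}_{\alg}(\lambda v.\,f(v)))_i) = \dbr\sigma_{\alg}((f^\dagger(t_i))_i)$, so $f^\dagger$ is a homomorphism; and $\dbr v_{\alg} = \pi_v$ gives $f^\dagger \circ \eta_V = f$. For uniqueness, any homomorphism $g \colon \alg[T](V) \to \alg$ with $g \circ \eta_V = f$ satisfies $g([v]) = f(v)$ and $g([\sigma(t)]) = \dbr\sigma_{\alg}((g[t_i])_i)$, so induction on a representing term forces $g = f^\dagger$. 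I expect the only genuine obstacle to be the well-definedness argument of the second paragraph --- concretely, verifying that the chain rule makes $\approx$ closed under substitution --- everything else being a direct unwinding of the inductive definitions.
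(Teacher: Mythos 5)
The paper does not actually prove Lemma~\ref{lem:4}; it is cited as well known, and your argument is the standard one and is correct. The crux is exactly where you locate it: the substitution (``chain rule'') lemma $\dbr{t(u)}_{\alg} = \dbr{t}_{\alg}\circ(\dbr{u_v}_{\alg})_{v\in V}$, which shows that the kernel of $t \mapsto \dbr{t}_{\alg}$ is a family of substitution-congruences containing the axioms of $\mathbb{T}$, so that minimality of $\equiv_{\mathbb{T}}$ yields well-definedness of $f^\dagger$; the remaining clauses are, as you say, direct unwindings of \eqref{eq:20} and \eqref{eq:24}.
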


Taking the full subcategory of $\cat{Mod}(\mathbb{T})$ on the free
$\mathbb{T}$-models yields the well known \emph{Kleisli category} of
$\mathbb{T}$, which we typically present as follows:
\begin{defi}[Kleisli category]\label{def:7}
  The \emph{Kleisli category} $\cat{Kl}(\mathbb{T})$ of a theory
  $\mathbb{T}$ has sets as objects; hom-sets
  $\cat{Kl}(\mathbb{T})(A,B) = \cat{Set}(A,TB)$; the identity at $A$
  being $\eta_A \colon A \rightarrow TA$; and composition 
  $g,f \mapsto g^\dagger \circ f$ with $g^\dagger$ as in
  Lemma~\ref{lem:4} for the free $\mathbb{T}$-model structures. The
  \emph{free functor}
  $F_\mathbb{T} \colon \cat{Set} \rightarrow \cat{Kl}(\mathbb{T})$ is
  the identity on objects and sends $f \in \cat{Set}(X,Y)$ to
  $\eta_Y \circ f \in \cat{Kl}(\mathbb{T})(X,Y)$. The fully faithful \emph{comparison
  functor} $I_\mathbb{T} \colon \cat{Kl}(\mathbb{T}) \rightarrow
  \cat{Mod}(\mathbb{T})$ maps $A \mapsto \alg[T]A$ and $f \mapsto f^\dagger$.
\end{defi}

The Kleisli category captures the compositionality of computations
with effects from $\mathbb{T}$, and allows us to draw the link with
Moggi's monadic semantics~\cite{Moggi1991Notions}; indeed, the free
functor
$F_\mathbb{T} \colon \cat{Set} \rightarrow \cat{Kl}(\mathbb{T})$ and
its right adjoint $\cat{Kl}(\mathbb{T})(1, \thg) \colon
\cat{Kl}(\mathbb{T}) \rightarrow \cat{Set}$ generate an associated monad
$\mathsf{T}$ on $\cat{Set}$ and we have that
$\cat{Kl}(\mathbb{T}) \cong \cat{Kl}(\mathsf{T})$ under $\cat{Set}$.

So far we have said nothing about \emph{non}-free $\mathbb{T}$-models.
It is a basic fact that every such model can be obtained from a free
one by quotienting by some congruence, and so can been seen as a set of
computations identified up to some notion of program equivalence. This
is important, for example, in~\cite{Levy2003Call-by-push-value}, and
will be important for us in \S\ref{sec:extens-stre-proc} below.

We now turn from models to the dual notion of \emph{comodel}. We say a
category $\C$ has \emph{copowers} if if each set-indexed
self-coproduct $A \cdot X = \Sigma_{a \in A} X$ exists in $\C$.

\begin{defi}[$\mathbb{T}$-comodel]
  \label{def:3}\looseness=-1
  Let $\mathbb{T}$ be a theory. A \emph{$\mathbb{T}$-comodel} in a
  category $\C$ with copowers is a model of $\mathbb{T}$ in
  $\C^\mathrm{op}$, comprising an object $S \in \C$ and
  \emph{co-operations}
  $\dbr{\sigma}^{\alg[S]} \colon S \rightarrow \abs \sigma \cdot S$ 
  obeying the equations of $\mathbb{T}$. The
  unqualified term ``comodel'' will mean ``comodel in
  $\cat{Set}$''. We write
  $\cat{Comod}(\mathbb{T}, \C)$ for the category of $\mathbb{T}$-comodels
  in $\C$, and $\cat{Comod}(\mathbb{T})$ for the comodels in $\cat{Set}$.
\end{defi}

As explained in~\cite{Power2004From, Plotkin2008Tensors}, when a
theory $\mathbb{T}$ presents a notion of computation, its comodels
provide deterministic environments for evaluating computations with
effects from $\mathbb{T}$.

\begin{exa}
  \label{ex:38} A comodel $\alg[S]$ of the theory of
  $A$-valued input is a state machine that answers requests for
  $A$-characters; it comprises a set of states $S$ and a map
  $\dbr{\mathsf{read}}^{\alg[S]} = (g,n) \colon S \rightarrow A \times
  S$ giving for each $s \in S$ a next character $g(s) \in A$ and a
  next state $n(s) \in S$.
\end{exa}

While the comodels of the preceding example are just
$A \times (\thg)$-coalgebras, the comodel perspective adds something
to this. The general picture is that a $\mathbb{T}$-comodel allows us
to evaluate $\mathbb{T}$-computations $t \in T(V)$ down to values in
$V$ via the derived operations of Definition~\ref{def:17}. Indeed,
given a comodel $\alg[S]$ and a term $t \in T(V)$, we have the derived
co-operation $\dbr{t}^{\alg[S]} \colon S \rightarrow V \times S$
which, unfolding the definition, is given by the clauses:
\begin{equation}\label{eq:19}
\begin{aligned}
  v \in V &\implies \dbr{v}^{\alg[S]} (s) = (v,s)  \qquad \ \ \,
  \\ \text{and }\sigma 
  \in \Sigma, t \in T(V)^{\abs \sigma} &\implies
  \dbr{\sigma(t)}^{\alg[S]}(s)= \dbr{t_i}^{\alg[S]}(s') \text{ where }
  \dbr{\sigma}^{\alg[S]}(s) = (i,s')\rlap{ .}
\end{aligned}
\end{equation}
The idea is that applying $\dbr{t}^{\alg[S]}$ to a starting state $s \in S$ will
yield the value $v \in V$ and final state $s' \in S$ obtained by
running the computation $t \in T(V)$, using the co-operations of
the comodel $\alg[S]$ to answer the requests posed by the
corresponding operation symbols of $\mathbb{T}$.

\begin{exa}
  \label{ex:39}
  For a comodel $(g,n) \colon S \rightarrow A \times S$ of $A$-valued input, the clauses~\eqref{eq:19} become
  \begin{equation*}
    v \in V \implies \dbr{v}^{\alg[S]} (s) = (v,s)  \qquad \ \ 
    t \in T(V)^A \implies
    \dbr{\textsf{read}(t)}^{\alg[S]}(s)= \dbr{t(g(s))}^{\alg[S]}(n(s))\rlap{ .}
  \end{equation*}
  So if we consider $A = \mathbb{N}$, the term
  $t = \mathsf{read}(\lambda n.\, \mathsf{read}(\lambda m.\, n+m)) \in
  T(\mathbb{N})$ from~\eqref{eq:15}, and the comodel $\alg[S]$ with
  $S = \{s,s',s''\}$ and
  $\dbr{\mathsf{read}}^{\alg[S]} = (g,n) \colon S \rightarrow \mathbb{N} \times S$ given
  by the upper line in:
  \begin{align*}
    \dbr{\mathsf{read}}^{\alg[S]}: \qquad s & \mapsto (3,s') & 
    s' &\mapsto (6,s'') &
    s'' &\mapsto (11,s'') \\
    \dbr{t}^{\alg[S]}: \qquad   s &\mapsto (9,s'') &
    s' & \mapsto (17,s'') &
    s'' & \mapsto (22,s'')\rlap{ ,}
  \end{align*}
  then $\dbr{t}^{\alg[S]} \colon S \rightarrow \mathbb{N} \times S$ is
  given by the lower line. For example,  we calculate that
  $\dbr{t}(s) = \dbr{\mathsf{read}(\lambda n.\, \mathsf{read}(\lambda m.\, n+m))}(s)
  = \dbr{\mathsf{read}(\lambda m.\, 3+m)}(s') = \dbr{3+6}(s'') =
  (9,s'')$.
\end{exa}

As is idiomatic, the \emph{final} comodel of a theory describes 
``observable behaviours'' that states of a comodel may possess. To
make this precise, we define states $s_1 \in \alg[S]_1$ and
$s_2 \in \alg[S]_2$ of two $\mathbb{T}$-comodels to be
\emph{operationally equivalent} if running any
$\mathbb{T}$-computation $t \in T(V)$ starting from the state $s_1$ of
$\alg[S]_1$ or from
the state $s_2$ of $\alg[S]_2$ gives the same value; i.e., 
\begin{equation*} 
  \text{if} \qquad \dbr{t}^{\alg[S]_1}(s_1) = (v_1,s_1') \quad \text{and} \quad 
  \dbr{t}^{\alg[S]_2}(s_2) = (v_2,s_2') \qquad \text{then} \qquad v_1 = v_2\rlap{ .}
\end{equation*}
\begin{lem}
  States $s_1 \in \alg[S_1]$ and $s_2 \in \alg[S]_2$ of two
  $\mathbb{T}$-comodels are operationally equivalent if and only if
  they become equal under the unique maps
  $\alg[S]_1 \rightarrow \alg[F] \leftarrow \alg[S]_2$ to the final
  $\mathbb{T}$-comodel.
\end{lem}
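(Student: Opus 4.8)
The plan is to derive both implications from one observation: a map of $\mathbb{T}$-comodels commutes not only with the basic co-operations but with all \emph{derived} ones. Precisely, if $h \colon \alg[S] \rightarrow \alg[S]'$ commutes with each $\dbr{\sigma}$, then for every $\Sigma$-term $t \in \Sigma(V)$ it commutes with $\dbr{t}$, in the sense that $\dbr{t}^{\alg[S]'}(h(s)) = (v, h(s_0))$ whenever $\dbr{t}^{\alg[S]}(s) = (v, s_0)$; this is a routine induction on $t$ using the clauses~\eqref{eq:19}, and since a $\mathbb{T}$-comodel satisfies the equations of $\mathbb{T}$, it descends to $t \in T(V)$. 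Call this $(\star)$. The ``if'' direction is then immediate: writing $h_1 \colon \alg[S]_1 \rightarrow \alg[F] \leftarrow \alg[S]_2 \colon h_2$ for the maps to the final comodel and assuming $h_1(s_1) = h_2(s_2)$, for any $t \in T(V)$ with $\dbr{t}^{\alg[S]_i}(s_i) = (v_i, s_i')$ we get $(v_1, h_1(s_1')) = \dbr{t}^{\alg[F]}(h_1(s_1)) = \dbr{t}^{\alg[F]}(h_2(s_2)) = (v_2, h_2(s_2'))$ by $(\star)$, so $v_1 = v_2$.

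For the ``only if'' direction I would first show that operational equivalence $\sim$, taken among the states of a single $\mathbb{T}$-comodel $\alg[S]$, is a bisimulation: the set $\alg[S]/{\sim}$ carries a (unique) $\mathbb{T}$-comodel structure for which the quotient $q \colon \alg[S] \rightarrow \alg[S]/{\sim}$ commutes with all co-operations. The heart of this is that $\sim$ respects each co-operation componentwise: if $s \sim s'$, $\dbr{\sigma}^{\alg[S]}(s) = (i, s_0)$ and $\dbr{\sigma}^{\alg[S]}(s') = (i', s_0')$, then $i = i'$ and $s_0 \sim s_0'$. For the first, run the term $\sigma(\lambda j.\, j) \in T(\abs{\sigma})$ from $s$ and from $s'$: its value is $i$, resp.\ $i'$. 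For the second, run, for each $t \in T(V)$, the term $\sigma(\lambda j.\, t) \in T(V)$: one computes $\dbr{\sigma(\lambda j.\, t)}^{\alg[S]}(s) = \dbr{t}^{\alg[S]}(s_0)$ and likewise at $s'$, so operational equivalence of $s$ and $s'$ forces $\dbr{t}^{\alg[S]}$ to give the same value at $s_0$ and $s_0'$, for all $t$. One then puts $\dbr{\sigma}^{\alg[S]/{\sim}}([s]) := (i, [s_0])$, well-defined by the above; and the co-equations of $\mathbb{T}$ (which are between $\Sigma$-terms, so their derived co-operations need no reference to $\equiv_{\mathbb{T}}$) hold in $\alg[S]/{\sim}$ because, by the $\Sigma$-term case of $(\star)$ applied to $q$, from $\dbr{t}^{\alg[S]} = \dbr{u}^{\alg[S]}$ we get $\dbr{t}^{\alg[S]/{\sim}} \circ q = \dbr{u}^{\alg[S]/{\sim}} \circ q$, and $q$ is surjective.

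With this, $\alg[F]$ has no two distinct operationally equivalent states: applying the above to $\alg[S] = \alg[F]$ yields a comodel $\alg[F]/{\sim}$ and a comodel map $q$; finality of $\alg[F]$ gives a comodel map $g \colon \alg[F]/{\sim} \rightarrow \alg[F]$, and $g \circ q$ is a comodel endomorphism of $\alg[F]$, hence the identity, so $f_1 \sim f_2$ forces $f_1 = g(q(f_1)) = g(q(f_2)) = f_2$. Finally, if $s_1 \in \alg[S]_1$ and $s_2 \in \alg[S]_2$ are operationally equivalent, then by $(\star)$ the value of $\dbr{t}^{\alg[F]}$ at $h_1(s_1)$ agrees with its value at $h_2(s_2)$ for every $t \in T(V)$ (each being the value of $\dbr{t}^{\alg[S]_i}$ at $s_i$), so $h_1(s_1)$ and $h_2(s_2)$ are operationally equivalent states of $\alg[F]$, hence equal.

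The step I expect to be the main obstacle is the bisimulation claim in the second paragraph: spotting that the terms $\sigma(\lambda j.\, j)$ and $\sigma(\lambda j.\, t)$ witness that $\sim$ is closed under the co-operations, and then verifying with due care --- in particular keeping track of $\Sigma$-terms versus $\mathbb{T}$-terms --- that the quotient $\alg[S]/{\sim}$ genuinely is a $\mathbb{T}$-comodel. Observation $(\star)$ and the ``endomorphism of a terminal object is the identity'' argument are routine.
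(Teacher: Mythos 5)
Your proof is correct. Note that the paper does not prove this lemma in-line at all: it simply cites Proposition~52 of an earlier paper, and the route taken there (which is visible from the material the paper presents immediately afterwards) is to construct the final comodel concretely as the set of admissible behaviours, with the unique map $\alg[S] \rightarrow \alg[F]$ given by $s \mapsto \beta_s$; the lemma then falls out by unwinding, since $\beta_{s_1} = \beta_{s_2}$ is literally the statement of operational equivalence. Your argument is genuinely different and more self-contained: it uses only the \emph{finality} of $\alg[F]$, never a concrete description of it. Both halves are sound. The ``if'' direction follows from your observation $(\star)$ that a comodel map commutes with all derived co-operations. For ``only if'', your key step --- that operational equivalence on a single comodel is a bisimulation, witnessed by running the terms $\sigma(\lambda j.\, j)$ and $\sigma \gg t$ --- is exactly right, and you correctly handle the one delicate point, namely that the equations of $\mathbb{T}$ descend to $\alg[S]/{\sim}$ because the $\Sigma$-term form of $(\star)$ needs only commutation with the basic co-operations together with surjectivity of $q$; the conclusion via ``every endomorphism of a terminal object is the identity'' is standard. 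What your approach buys is a proof independent of the admissible-behaviours machinery; what the paper's approach buys is economy, since that explicit description of $\alg[F]$ is developed anyway for the identification of the final $\mathbb{T}_A$-comodel with $A^\mathbb{N}$, after which the lemma is immediate.
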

\begin{proof}
  This is~\cite[Proposition~5.3]{Garner2020The-costructure-cosemantics}.
\end{proof}

So in the spirit of~\cite[Theorem~4]{Kupke2009Characterising}, we may
(if we adequately handle the set-theoretic issues) characterise the
final $\mathbb{T}$-comodel as the set of all possible states of all
possible comodels, modulo operational equivalence. However, a more
algebraic approach is also possible. The following is~\cite[Definition~5.5]{Garner2020The-costructure-cosemantics}:
\begin{defi}[Admissible behaviour]
  An \emph{admissible behaviour} $\beta$ for a theory $\mathbb{T}$ is
  a family of functions $\beta_V \colon TV \rightarrow V$, as $V$
  ranges over sets, such that
  \begin{equation*}
    v \in V \implies \beta_V(v) = v \quad \text{and} \quad
    t \in TV, u \in (TW)^V \implies \beta_W(t(u)) = \beta_W(t \gg u_{\beta_V(t)})\rlap{ ,}
  \end{equation*}
  where for terms $f \in T(A)$
  and $g \in T(B)$, we write $f \gg g$ for the term
  $f(\lambda a.\, g) \in T(B)$.
\end{defi}

An admissible behaviour is a way of evaluating
$\mathbb{T}$-computations to values, and from this perspective, the
two axioms are quite intuitive: for example, the second says that, if
the result of evaluating $t \in TV$ is $v \in V$, then the result of
evaluating $t(u) \in TW$ coincides with that of evaluating the
computation which sequences $t$ (discarding the return value) into
$u_v$.

\begin{exa}\label{ex:admissible}
  Any state $s$ of a $\mathbb{T}$-comodel $\alg[S]$
  yields an admissible $\mathbb{T}$-behaviour $\beta_s$, where for
  $t \in TV$ we define $\beta_s(t)$ to be the first component of
  $\dbr{t}^{\alg[S]}\!(s) \in V \times S$.
\end{exa}

\begin{prop}
  The final $\mathbb{T}$-comodel of a theory $\mathbb{T}$ can be
  described as the set $\alg[F]$ of admissible behaviours, under the
  co-operations
  \begin{equation}\label{eq:11}
    \dbr{\sigma}^{\alg[F]} \colon \beta \mapsto (\beta(\sigma), \partial_\sigma \beta)
  \end{equation}
  where $\partial_\sigma \beta$ is the admissible behaviour given by
  $(\partial_\sigma \beta)(t) = \beta(\sigma \gg t)$. For any
  $\mathbb{T}$-comodel $\alg[S]$, the unique comodel homomorphism
  $\alg[S] \rightarrow \alg[F]$ sends $s$ to $\beta_s$ as in
  Example~\ref{ex:admissible}.
\end{prop}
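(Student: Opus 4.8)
The plan is to verify that the proposed structure $\alg[F]$ really is the final $\mathbb{T}$-comodel by checking three things: that $\alg[F]$ with the co-operations~\eqref{eq:11} is a $\mathbb{T}$-comodel; that for each $\mathbb{T}$-comodel $\alg[S]$ the assignment $s \mapsto \beta_s$ is a comodel homomorphism $\alg[S] \to \alg[F]$; and that it is the unique such homomorphism. The final claim about the explicit form of the map $\alg[S] \to \alg[F]$ will fall out of the second and third steps.

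First I would check that $\alg[F]$ is well-defined and is a comodel. For well-definedness, I need that $\partial_\sigma\beta$ is again an admissible behaviour whenever $\beta$ is: the first axiom $(\partial_\sigma\beta)_V(v) = \beta_V(\sigma \gg v)$ should reduce to $v$ using the second admissibility axiom for $\beta$ applied to the term $\sigma \in T(\abs\sigma)$ with the constant family at $v$; and the second axiom for $\partial_\sigma\beta$ should follow by a short computation unwinding $\sigma \gg (t(u)) = (\sigma \gg t)(u)$ and reapplying the second axiom for $\beta$. Then, to see that $\alg[F]$ satisfies the equations of $\mathbb{T}$, I would show that the derived co-operation $\dbr{t}^{\alg[F]} \colon \alg[F] \to V \times \alg[F]$ is given by $\beta \mapsto (\beta_V(t), \partial_t\beta)$, where $\partial_t\beta$ is defined by $(\partial_t\beta)_W(u) = \beta_W(t \gg u)$ — this is proved by induction on $t$ using the clauses~\eqref{eq:19} together with the second admissibility axiom, which is exactly the identity needed to make the inductive step go through. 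Since $\beta_V(t)$ and $t \gg u$ depend on $t$ only through its $\equiv_\mathbb{T}$-class, any equation $t = u$ of $\mathbb{T}$ gives $\dbr{t}^{\alg[F]} = \dbr{u}^{\alg[F]}$, so $\alg[F]$ is a comodel.

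Next I would verify that $s \mapsto \beta_s$ is a comodel homomorphism. By Example~\ref{ex:admissible} this lands in $\alg[F]$, and naturality/homomorphism means that for each $\sigma$ the square relating $\dbr{\sigma}^{\alg[S]}$ and $\dbr{\sigma}^{\alg[F]}$ commutes; unwinding~\eqref{eq:11}, this says $\beta_{s'} = \partial_\sigma(\beta_s)$ where $\dbr{\sigma}^{\alg[S]}(s) = (a, s')$, together with $a = \beta_s(\sigma)$. Both identities are immediate from the defining clauses~\eqref{eq:19} of the derived co-operations. For uniqueness, suppose $h \colon \alg[S] \to \alg[F]$ is any comodel homomorphism; I want $h(s) = \beta_s$, i.e.\ $h(s)_V(t) = \beta_s(t)$ for all $t \in TV$. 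This I would prove by induction on $t$: the variable case uses the first axiom of admissible behaviour for $h(s)$, and the case $t = \sigma(u)$ uses the homomorphism property of $h$ (which forces $h(s)_V(\sigma(u)) = h(s')_V(u_a)$ where $\dbr{\sigma}^{\alg[S]}(s) = (a,s')$) together with the inductive hypothesis at $s'$.

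I expect the main obstacle to be the inductive identification of the derived co-operation $\dbr{t}^{\alg[F]}$, since this is where the admissibility axioms must be deployed in precisely the right form, and where one must be careful that the second axiom is stated in terms of $t \gg u_{\beta_V(t)}$ rather than a naive substitution — matching this to the recursive clause~\eqref{eq:19} for $\dbr{\sigma(t)}^{\alg[S]}$ requires unwinding both sides attentively. Everything else is a routine unwinding of definitions, and the induction on terms is in each case driven by exactly one of the two admissibility axioms, so once the bookkeeping is set up correctly the argument is short.
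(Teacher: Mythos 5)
Your overall architecture --- well-definedness of $\partial_\sigma$, the inductive identification of the derived co-operations as $\beta \mapsto (\beta(t), \partial_t\beta)$, and then the homomorphism and uniqueness arguments by induction on terms --- is sensible, and since the paper only cites an external source for this proof, a self-contained argument is welcome. The second admissibility axiom for $\partial_\sigma\beta$, the comodel equations, the verification that $s \mapsto \beta_s$ is a homomorphism, and the uniqueness induction all go through as you describe.

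There is, however, a genuine gap at your very first step, and it is not just bookkeeping. To see that $\partial_\sigma\beta$ satisfies the first admissibility axiom you must show $\beta_V(\sigma \gg v) = v$, and you propose to get this from the second axiom applied to the term $\sigma = \sigma(\lambda i.\, i) \in T(\abs{\sigma})$ with the constant family $u = \lambda i.\, v$. But that instance reads $\beta_V(\sigma(\lambda i.\, v)) = \beta_V(\sigma \gg u_{\beta(\sigma)}) = \beta_V(\sigma \gg v)$, and since $\sigma(\lambda i.\, v)$ \emph{is} the term $\sigma \gg v$, this is a tautology; it never produces the value $v$. Worse, the needed identity $\beta_W(t \gg w) = w$ is not derivable from the two conditions as displayed in the paper's definition of admissible behaviour. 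Take the theory with a single unary operation $\sigma$ and no equations, whose final comodel is a singleton; fix a choice $c_V \in V$ for each nonempty $V$ and set $\beta_V(\sigma^n(v)) = v$ for $n = 0$ and $\beta_V(\sigma^n(v)) = c_V$ for $n \geq 1$. Both displayed conditions hold (in each case both sides of the second axiom evaluate to the same one of $w$ or $c_W$), yet $\beta_V(\sigma \gg v) = c_V \neq v$ in general, and distinct choices of $c$ give distinct ``admissible behaviours'', so these cannot biject with the one-point final comodel. The missing ingredient is naturality of $\beta_V$ in $V$, i.e.\ $\beta_W(t(\lambda v.\, f(v))) = f(\beta_V(t))$ for $f \colon V \rightarrow W$, which specialises at constant $f$ to exactly $\beta_W(t \gg w) = w$ and which every comodel-induced behaviour $\beta_s$ does satisfy. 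You need either to locate this clause in the cited source's definition or to add it explicitly as a hypothesis; it cannot be extracted from the two displayed axioms, so the proof as written does not close.
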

\begin{proof}
  This is~\cite[Proposition~5.9]{Garner2020The-costructure-cosemantics}.
\end{proof}

For $A$-valued input, the admissible behaviours can be identified (as
in Example~5.10 of \emph{loc.~cit.}) with
streams of $A$-values: given an admissible behaviour $\beta$, the
corresponding stream of values is
\begin{equation*}
  (\beta(\mathsf{read}), \beta(\mathsf{read} \gg \mathsf{read}), \beta(\mathsf{read} \gg \mathsf{read} \gg \mathsf{read}), \dots)\rlap{ .}
\end{equation*}
Under this identification, the co-operation
$\dbr{\mathsf{read}}^{\alg[F]}$ of~\eqref{eq:11} is easily seen to
coincide with the structure map~\eqref{eq:3} on the set of
$A$-streams: and in this way, we re-find the familiar construction of
the final $\mathbb{T}_A$-comodel as the set of $A$-streams
under~\eqref{eq:3}.

The comodel view also allows us to capture the \emph{topology} on the
space of streams. Indeed, any comodel of a theory has a natural
prodiscrete topology (i.e., topologised as a product of discrete
spaces), whose basic open sets describe those states which are
indistinguishable with respect to a finite set of
$\mathbb{T}$-computations. (This definition appears to be novel.)

\begin{defi}[Operational topology]
  \label{def:6}
  Let $\alg[S]$ be a $\mathbb{T}$-comodel. The \emph{operational
    topology} on $S$ is generated by sub-basic open sets
  \begin{equation*}
    [t \mapsto v] := \{s \in S : \dbr{t}^{\alg[S]}(s) = (v, s') \text{ for some } s'\} \qquad \qquad \text{for all $t \in T(V)$ and $v \in V$}\rlap{ .}
  \end{equation*}
\end{defi}

\begin{lem}
  \label{lem:1}
  The operational topology makes any $\mathbb T$-comodel into a
  topological comodel. In the case of the final $\mathbb T$-comodel,
  this yields the \emph{final} topological comodel.
\end{lem}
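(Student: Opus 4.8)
The plan is to treat the two assertions separately, in each case reducing matters to the way the derived co-operations $\dbr{t}^{\alg[S]} \colon S \rightarrow V \times S$ interact with the sub-basic opens of Definition~\ref{def:6}. Throughout, for a set $V$ I give $V \times S$---and in particular the copower $\abs \sigma \cdot S$---the disjoint-union topology of copies of $S$; this is the copower in $\cat{Top}$, so that a \emph{topological} $\mathbb{T}$-comodel is precisely a space $S$ equipped with \emph{continuous} co-operations $\dbr{\sigma}^{\alg[S]} \colon S \rightarrow \abs \sigma \cdot S$, the equations of $\mathbb{T}$ then being satisfied automatically because the forgetful functor $\cat{Top} \rightarrow \cat{Set}$ preserves copowers and so leaves the underlying functions of the derived co-operations unchanged.

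For the first assertion I take a $\mathbb{T}$-comodel $\alg[S]$ with its operational topology and must show each $\dbr{\sigma}^{\alg[S]} = (j,n) \colon S \rightarrow \abs \sigma \cdot S$ is continuous. Since the $[t \mapsto v]$ generate the topology of $S$, the sets $\{i\} \times S$ (for $i \in \abs \sigma$) together with the sets $\bigcup_{k \in \abs \sigma} \bigl(\{k\} \times [t \mapsto v]\bigr)$ (for $t \in T(V)$, $v \in V$) form a sub-basis for $\abs \sigma \cdot S$, so it suffices to show that their $\dbr{\sigma}^{\alg[S]}$-preimages are operationally open. Here the syntax carries the argument. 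Reading off the clauses~\eqref{eq:19}: applying $\sigma$ to the tuple of variables gives $\dbr{\sigma(\lambda a.\,a)}^{\alg[S]} = \dbr{\sigma}^{\alg[S]}$, so the preimage of $\{i\} \times S$ is the sub-basic open $[\sigma(\lambda a.\,a) \mapsto i]$; while applying $\sigma$ to the constant tuple at $t$ gives $\dbr{\sigma(\lambda a.\,t)}^{\alg[S]}(s) = \dbr{t}^{\alg[S]}(n(s))$, so the preimage of $\bigcup_{k \in \abs \sigma} \bigl(\{k\} \times [t \mapsto v]\bigr)$ is the sub-basic open $[\sigma(\lambda a.\,t) \mapsto v]$. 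Hence every $\dbr{\sigma}^{\alg[S]}$ is continuous and $\alg[S]$ is a topological comodel.

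For the second assertion I equip the final $\mathbb{T}$-comodel $\alg[F]$ with its operational topology---a topological comodel, by the first part---and show it is terminal in $\cat{Comod}(\mathbb{T}, \cat{Top})$. Given a topological comodel $\alg[S]$, finality of $\alg[F]$ among set-theoretic comodels yields a unique homomorphism $h \colon \alg[S] \rightarrow \alg[F]$ of the underlying comodels; any morphism of topological comodels is in particular such a homomorphism, so uniqueness is automatic and only the continuity of $h$ is in question. For this I would first record two facts, each a routine induction on terms using~\eqref{eq:19} and the universal property of coproducts: (a) in any topological comodel every derived co-operation $\dbr{t}^{\alg[S]} \colon S \rightarrow V \times S$ is continuous, so that each set $[t \mapsto v]$ is open in the given topology of $\alg[S]$; and (b) any comodel homomorphism commutes with derived co-operations, i.e.\ $\dbr{t}^{\alg[F]} \circ h = (\mathrm{id}_V \times h) \circ \dbr{t}^{\alg[S]}$. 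By (b) the preimage under $h$ of the set $[t \mapsto v]$ formed in $\alg[F]$ equals the set $[t \mapsto v]$ formed in $\alg[S]$, which by (a) is open; so $h$ is continuous, and $\alg[F]$ with its operational topology is terminal.

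I expect the first assertion to be the main obstacle: the operational topology is defined via the derived co-operations of \emph{all} terms, whereas continuity of the comodel structure concerns only the \emph{basic} co-operations $\dbr{\sigma}^{\alg[S]}$, and bridging this gap is exactly what the small syntactic manoeuvres $\sigma(\lambda a.\,a)$ and $\sigma(\lambda a.\,t)$ accomplish. The rest is the same bookkeeping already performed for set-theoretic comodels, lifted one level to $\cat{Top}$; in particular no set-theoretic size issues arise beyond those already attending the set-theoretic final comodel.
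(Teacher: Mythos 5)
Your proof is correct and follows essentially the same route as the paper's: in the first part you compute preimages of sub-basic opens of the copower via the terms $\sigma(\lambda a.\,a)$ and $\sigma(\lambda a.\,t)$ (the paper's $\sigma$ and $\sigma \gg t$, with your choice of sub-basis merely replacing the paper's intersection $[\sigma \mapsto i] \cap [(\sigma \gg t) \mapsto v]$ by two separate single sub-basic preimages), and in the second part you establish continuity of the unique homomorphism by exactly the paper's observation that $[t \mapsto v]$ pulls back to the open set cut out by the continuous derived co-operation $\dbr{t}^{\alg[S]}$. The only cosmetic difference is that you invoke the general commutation of homomorphisms with derived co-operations where the paper uses the explicit admissible-behaviour description of $\alg[F]$; both amount to the same computation.
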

Of course, a topological comodel is simply a comodel in the category
$\cat{Top}$ of topological spaces and continuous maps.
\begin{proof}
  Let $\alg[S]$ be a $\mathbb{T}$-comodel. Each co-operation
  $\dbr{\sigma}^{\alg[S]} \colon S \rightarrow \abs{\sigma} \cdot S$
  is continuous for the operational topology: for indeed, a sub-basic
  open set of the codomain is $\{i\} \times [t \mapsto v]$, and its
  inverse image under $\dbr{\sigma}^{\alg[S]}$ is the open set
  \begin{equation}\label{eq:42}
    \{\, s \in S \mid \exists s', s''.\, \dbr{\sigma}^{\alg[S]}(s) = (i, s') \text{ and }
    \dbr{t}^{\alg[S]}(s') = (v, s'')\} = [\sigma \mapsto i] \cap [(\sigma \gg t) \mapsto v]\text{ ,}
  \end{equation}
  where the equality comes from the fact that, if $\dbr{\sigma}^{\alg[S]}(s) = (i,s')$,
  then $\dbr{t}^{\alg[S]}(s') = \dbr{\sigma \gg t}^{\alg[S]}(s)$. 
  
  So the operational topology makes each comodel $\alg[S]$ into a
  topological comodel. We now show that, in the case of the final
  comodel $\alg[F]$, this yields the final topological comodel.
  Indeed, if $\alg[S]$ is \emph{any} topological comodel, we have by
  finality of $\alg[F]$ \emph{qua} $\cat{Set}$-comodel a unique
  comodel homomorphism $\beta_{(\thg)} \colon S \rightarrow F$ sending
  $s$ to the admisssible behaviour $\beta_s$, and we need only show
  this is continuous. But the inverse image under $\beta_{(\thg)}$ of
  the subbasic open $[t \mapsto v] \subseteq F$ is the set
  $\{s \in S : \beta_s(t) = v\} = \{s \in S : \dbr{t}^{\alg[S]}(s) =
  (v,s') \text{ for some $s' \in S$}\}$, which is open as the inverse
  image of $\{v\} \times S \subseteq V \times S$ under the continuous
  map $\dbr{t}^{\alg[S]} \colon S \rightarrow V \times S$.
\end{proof}

In the case of the theory of $A$-valued input, the subbasic open set
$[t \mapsto v] \subseteq A^\mathbb{N}$ of the final comodel can be
defined by induction on $t \in TV$:
\begin{equation*}
  [w \mapsto v] =
  \begin{cases}
    A^\mathbb{N} & \text{ if $w = v$}\\
    \emptyset & \text{ if $w \neq v \in V$}
  \end{cases} \quad \text{and} \quad [\mathsf{read}(\lambda a.\, t_a) \mapsto v] = \{ aW : a \in A, W \in [t_a \mapsto v]\}\rlap{ .}
\end{equation*}
From this description, we re-find the fact that the final topological
comodel is $A^\mathbb{N}$ endowed with the product topology for
$\mathbb{N}$ copies of the discrete space $A$.

\section{Stream processors as residual comodels}
\label{sec:runn-resid-runn}

In this section, we recall a more general kind of comodel considered
by, among others,~\cite{Ahman2020Runners, Katsumata2020Interaction,
  Uustalu2020Algebraic}, which allows for stateful translations
between different notions of computation. We then explain how this
notion allows us to encode stream processors in the sense
of~\cite{Hancock2009Representations} and also their
\emph{composition}.

\begin{defi}[Residual comodel] Let $\mathbb{T}$ and $\mathbb{R}$
  be theories. An \emph{$\mathbb{R}$-residual $\mathbb{T}$-comodel} is
  a comodel of $\mathbb{T}$ in the Kleisli category
  $\cat{Kl}(\mathbb{R})$.
\end{defi}

The nomenclature ``residual'' comes
from~\cite[\S5.3]{Katsumata2020Interaction}, and we will explain the
connection to \emph{loc.\,cit.} in Proposition~\ref{prop:3} below. For
now, let us spell out in detail what a residual comodel $\alg[S]$
involves. First, there is an underlying set of states $S$. Next, we
have for each operation $\sigma$ in the signature of $\mathbb{T}$ a basic co-operation
$\dbr{\sigma}^{\alg[S]} \colon S \rightarrow R(\abs \sigma \times S)$
assigning to each state $s \in S$ an $\mathbb{R}$-computation
$\dbr{\sigma}^{\alg[S]}\!(s)$ returning values in
$\abs \sigma \times S$---where, as before, we think of these two
components as providing a value answering the request posed by
$\sigma$, and a next state. Now we recursively determine a derived
co-interpretation
$\dbr{t}^{\alg[S]} \colon S \rightarrow R(V \times S)$ for each
$t \in T(V)$ via:
\begin{equation}\label{eq:4}
\begin{aligned}
  v \in V \subseteq T(V) &\implies \dbr{v}^{\alg[S]} (s) = (v,s) \in V \times S \subseteq R(V \times S) \qquad \ \ \,
  \\ \text{and }\sigma 
  \in \Sigma, t \in T(V)^{\abs \sigma} &\implies
  \dbr{\sigma(t)}^{\alg[S]}\!(s)= \dbr{\sigma}^{\alg[S]}\!(s)\bigl(\lambda (i, s').\, \dbr{t_i}^{\alg[S]}\!(s')\bigr)\rlap{ ,}
\end{aligned}
\end{equation}
and the final requirement is that these derived operations must
satisfy the equations of $\mathbb{T}$.


\begin{rem}
  \label{rk:5}
  In the second line of~\eqref{eq:4}, the element
  $\dbr{\sigma}^{\alg[S]}(s) \in R(\abs \sigma \times S)$ is an
  $\mathbb{R}$-term with variables in $\dbr{\sigma} \times S$; and
  substituting each variable $(i,s') \in \dbr{\sigma} \times S$
  therein by the $\mathbb{R}$-term $\dbr{t_i}^{\alg[S]}(s') \in R(V)$ gives
  the value of $\dbr{\sigma(t)}^{\alg[S]}(s) \in R(V)$. This
  amounts to threading
  $\mathbb{R}$-computations together by \emph{monadic binding}; in
  Haskell notation, we would write:
  \begin{equation*}
    \dbr{\sigma(t)}^{\alg[S]}(s) =
    \begin{minipage}[t]{4cm}
      \begin{flushleft}
      \textsf{\textbf{do}} $(i, s') \leftarrow \dbr{\sigma}^{\alg[S]}(s)$\\
      \phantom{\textsf{do}} \textsf{return} $\dbr{t_i}^{\alg[S]}(s')$.
      \end{flushleft}
    \end{minipage}
  \end{equation*}

\end{rem}

\begin{exa}
  A comodel of the theory $\mathbb{T}_B$ of $B$-valued input residual
  on the theory $\mathbb{P}_f^+$ of non-deterministic choice comprises
  a set of states $S$, and a function
  $\gamma \colon S \rightarrow P_f^+(B \times S)$: thus, a
  non-terminating, finitely branching labelled transition system.
\end{exa}

\begin{exa}
  A comodel of the theory $\mathbb{T}_B$ of $B$-valued input residual
  on the theory $\mathbb{D}$ of probabilistic choice comprises a set
  of states $S$, and a function
  $\gamma \colon S \rightarrow D(B \times S)$: thus, a finitely
  branching probabilistic generative system in the sense
  of~\cite{Glabbeek1990Reactive}.
\end{exa}

However, for us the key example is the following one:
\begin{exa}
  A comodel of the theory $\mathbb{T}_B$ of $B$-valued input residual
  on the theory $\mathbb{T}_A$ of $A$-valued input comprises a set of
  states $S$, and a function
  $\gamma \colon S \rightarrow T_A(B \times S)$ assigning to each
  state $s \in S$ a program which uses some number of $A$-tokens from
  an input stream to inform the choice of an output $B$-token and a
  new state in $S$.
\end{exa}

It is easy to see how each state $s_0$ of such a comodel should encode
a stream processor $A^\mathbb{N} \rightarrow B^\mathbb{N}$: given an
input stream $\vec a \in A^\mathbb{N}$, we consume some initial
segment $a_0, \dots, a_k$ to answer the requests posed by the program
$\gamma(s_0)$, so obtaining an element $b_0 \in B$ and a new state
$s_1$. We now repeat starting from $s_1 \in S$ and the remaining part
$\partial^k\vec a$ of the input stream, to obtain $b_1$ and $s_2$
while consuming $a_{k+1}, \dots, a_\ell$; and so on coinductively.
This description was made mathematically precise
in~\cite[\S3.1]{Hancock2009Representations}, but in fact we can obtain
it in a principled comodel-theoretic manner via (a special case
of) a notion given in~\cite[Appendix]{Plotkin2008Tensors}.

\begin{defi}[Tensor of a residual comodel with a comodel]
  \label{def:4}
  Let $\mathbb{T}$ and $\mathbb{R}$ be theories. Let $\alg[S]$ be an
  $\mathbb{R}$-residual $\mathbb{T}$-comodel, and let $\alg[M]$ be an
  $\mathbb{R}$-comodel. The \emph{tensor product} $\alg[S] \cdot \alg[M]$ is the $\mathbb{T}$-comodel with underlying set $S \times M$ and
  co-operations
  \begin{equation}\label{eq:5}
    \dbr{\sigma}^{\alg[S] \cdot \alg[M]} \colon S \times M \xrightarrow{\dbr{\sigma}^{\alg[S]} \times M}
    R(\abs{\sigma} \times S) \times M \xrightarrow{(t,m) \mapsto \dbr{t}^{\alg[M]}(m)} \abs \sigma \times S \times M\rlap{ .}
  \end{equation}
\end{defi}

This definition makes intuitive sense: given a state machine for
translating $\mathbb{T}$-computations into $\mathbb{R}$-computations,
and one for executing $\mathbb{R}$-computations, it threads them
together to yield a state machine for executing
$\mathbb{T}$-computations. We will make this justification rigorous in
Definition~\ref{def:2} below, but for the moment let us simply assume
its reasonability and give:

\begin{defi}[Trace]
  \label{def:5}
  Let $\alg[S]$ be a $\mathbb{T}_A$-residual $\mathbb{T}_B$-comodel.
  The \emph{trace} of a state $s \in \alg[S]$ is the function
  ${\mathsf{tr}(s) \colon A^\mathbb{N} \rightarrow B^\mathbb{N}}$
  obtained by partially evaluating at $s$ the unique map of
  $\mathbb{T}_B$-comodels
  $\alg[S] \cdot \alg[A^\mathbb{N}] \rightarrow
  \alg[B^\mathbb{N}]$, where $\alg[A^\mathbb{N}]$ and
  $\alg[B^\mathbb{N}]$ are endowed with their final comodel
  structures.
\end{defi}

We now unfold this definition. Firstly, for any term $t \in T_A(V)$,
the derived co-operation
$\dbr{t}^{\alg[A^\mathbb{N}]}\!(\vec a) \colon A^\mathbb{N}
\rightarrow V \times A^\mathbb{N}$ is defined recursively by
\begin{equation}\label{eq:18}
  \dbr{v}^{\alg[A^\mathbb{N}]}\!(\vec a) = (v, \vec a) \qquad \qquad \text{and} \qquad \qquad \dbr{\mathsf{read}(t)}^{\alg[A^\mathbb{N}]}\!(\vec a) = \dbr{t_{a_0}}^{\alg[A^\mathbb{N}]}\!(\partial \vec a)\rlap{ .}
\end{equation}
If we view $t$ as an $A$-ary branching tree with leaves labelled in
$V$, then $\dbr{t}^{\alg[A^\mathbb{N}]}\!(\vec a)$ is the result of
walking up the tree from the root, consuming an element of $\vec a$ at
each interior node to determine which branch to take, and returning at
a leaf the $V$-value found there along with what remains of $\vec a$.

Now, in terms of this, the $\mathbb{T}_B$-comodel structure of
$\alg[S] \cdot \alg[A^\mathbb{N}]$ is given by
\begin{equation*}
  S \times A^\mathbb{N} \rightarrow B \times S \times A^\mathbb{N} \qquad\qquad\qquad
  (s, \vec a)  \mapsto \dbr{\gamma(s)}^{\alg[A^\mathbb{N}]}\!(\vec a)\rlap{ ,}
\end{equation*}
where $\gamma \colon S \rightarrow T_A(B \times S)$ is the residual
comodel structure of $\alg[S]$. 
This function takes a state $s_0$ and stream $\vec a$ to
the triple $(b_0, s_1, \partial^k \vec a)$ obtained by walking up $k$
nodes of the tree $\gamma(s)$ to the leaf $(b_0, s_1)$. If we view
this comodel structure as a triple of maps
\begin{equation*}
\mathsf{hd} \colon S \times A^\mathbb{N} \rightarrow B \qquad \mathsf{next} \colon S \times A^\mathbb{N} \rightarrow S \qquad \mathsf{tl} \colon S \times A^\mathbb{N} \rightarrow A^\mathbb{N}
\end{equation*}
then we can say, finally, that the trace
$\mathsf{tr}(s) \colon A^\mathbb{N} \rightarrow B^\mathbb{N}$ of $s
\in \alg[S]$
is given coinductively by:
\begin{equation*}
  \bigl(\mathsf{tr}(s)(\vec a)\bigr)_0 = \mathsf{hd}(s, \vec a) \qquad \qquad \partial \bigl(\mathsf{tr}(s)(\vec a)\bigr) = \mathsf{tr}(\mathsf{next}(s, \vec a))(\mathsf{tl}(s, \vec a))\rlap{ .}
\end{equation*}
Comparing this construction with that
of~\cite[\S3.1]{Hancock2009Representations}, done there with bare
hands, we find that they are exactly the same: the derived
co-operations $\dbr{t}$ of~\eqref{eq:18} are the
functions $eat\ t$ of~\emph{loc.\,cit.}, while our trace function
$\mathsf{tr}$ is their function $eat_\infty$. 

We have thus shown that each state $s$ of a $\mathbb{T}_A$-residual
$\mathbb{T}_B$-comodel encodes a function
$\mathsf{tr}(s) \colon A^\mathbb{N} \rightarrow B^\mathbb{N}$; but for
these functions to be reasonable stream processors, they should be
\emph{continuous} for the profinite topologies. While this may be
shown with little effort, we may in fact see it without \emph{any} effort via
a comodel-theoretic argument. We first need:

\begin{defi}[Tensor of a residual comodel and a topological
  comodel]
  \label{def:toptensor}
  Let $\mathbb{T}$ and $\mathbb{R}$ be theories, let $\alg[S]$ be an
  $\mathbb{R}$-residual $\mathbb{T}$-comodel, and $\alg[M]$ a
  topological $\mathbb{R}$-comodel. The \emph{tensor product}
  $\alg[S] \cdot \alg[M]$ is the topological $\mathbb{T}$-comodel
  with underlying space $S \cdot M$ and
  co-operations~\eqref{eq:5}.
\end{defi}

Once again, the justification for this definition will be given below;
assuming it for now, the desired continuity of each $\mathsf{tr}(s)$
is immediate. For indeed, viewing $\alg[A^\mathbb{N}]$ and
$\alg[B^\mathbb{N}]$ as final topological comodels with the profinite
topology, there is a unique map of topological $\mathbb{T}_B$-comodels
$\alg[S] \cdot \alg[A^\mathbb{N}] \rightarrow \alg[B^\mathbb{N}]$. Its
underlying function is the unique map of $\cat{Set}$-comodels from
Definition~\ref{def:5}, but the extra information we now gain is the
\emph{continuity} of this map---which says precisely that each
$\mathsf{tr}(s) \colon A^\mathbb{N} \rightarrow B^\mathbb{N}$ is
continuous, as desired.

The comodel perspective allows us also to say something about
\emph{composition} of stream processors. Given a $\mathbb{T}_A$-residual
$\mathbb{T}_B$-comodel $\alg[S]$, whose states encode the continuous
functions $\mathsf{tr}(s) \colon A^\mathbb{N} \rightarrow
B^\mathbb{N}$ for each $s \in \alg[S]$, and a $\mathbb{T}_B$-residual
$\mathbb{T}_C$-comodel $\alg[P]$, whose states encode the continuous
functions $\mathsf{tr}(p) \colon B^\mathbb{N} \rightarrow
C^\mathbb{N}$ for each $p \in \alg[P]$, we may define the \emph{tensor
  product} $\alg[P] \cdot \alg[S]$, which is a
$\mathbb{T}_A$-residual $\mathbb{T}_C$-comodel whose states encode
precisely the continuous functions
$\mathsf{tr}(p) \circ \mathsf{tr}(s) \colon A^\mathbb{N} \rightarrow
C^\mathbb{N}$ for $s \in \alg[S]$ and $p \in \alg[P]$. The general
definition is as follows; again, this will be justified formally by
Definition~\ref{def:2} below.



\begin{defi}[Tensor product of two residual comodels]
  \label{def:twocomodels}
  Let $\mathbb{T}$, $\mathbb{R}$ and $\mathbb{V}$ be theories,
  let $\alg[S]$ be an $\mathbb{R}$-residual $\mathbb{T}$-comodel
  and let $\alg[P]$ be a $\mathbb{T}$-residual
  $\mathbb{V}$-comodel. The tensor product $\alg[P]\cdot \alg[S]$
is the $\mathbb{T}$-residual $\mathbb{V}$-comodel with underlying set
$P \times S$ and co-operations
\begin{equation*}
  \dbr{\sigma}^{\alg[P] \cdot \alg[S]} \colon P \times S \xrightarrow{\dbr{\sigma}^{\alg[P]} \times S}
  T(\abs{\sigma} \times P) \times S \xrightarrow{(t,s) \mapsto \dbr{t}^{\alg[S]}(s)} T(\abs \sigma \times P \times S)\rlap{ .}
\end{equation*}
\end{defi}

When in this definition, $\alg[S]$ is a $\mathbb{T}_A$-residual
$\mathbb{T}_B$-comodel and $\alg[P]$ is a $\mathbb{T}_B$-residual
$\mathbb{T}_C$-comodel, the tensor $\alg[P] \cdot \alg[S]$ is the
$\mathbb{T}_A$-residual $\mathbb{T}_C$-comodel with underlying set $P
\times S$ and structure map
\begin{equation}\label{eq:22}
  \dbr{\mathsf{read}}^{\alg[P] \cdot \alg[S]} \colon P \times S \xrightarrow{\dbr{\mathsf{read}}^{\alg[P]} \times S}
  T_B(C \times P) \times S \xrightarrow{(t,s) \mapsto \dbr{t}^{\alg[S]}(s)} T_{A}(C \times P \times S)\rlap{ .}
\end{equation}
To understand this, we must now unfold the definition of
$\dbr{t}^{\alg[S]}$, which is given by structural recursion over
$T_B(C \times P)$ as in~\eqref{eq:4}:
\begin{equation}
  \label{eq:23}
\begin{aligned}
  \dbr{(c, p)}^{\alg[S]}(s) & = (c, p, s) & & \text{for $(c, p) \in C \times P \subseteq T_B(C \times P)$}\\
  \dbr{\mathsf{read}(\lambda b.\, t_b)}^{\alg[S]}(s) &=   \dbr{\mathsf{read}}^{\alg[S]}(s)\bigl(\lambda (b, s').\, \dbr{t_b}^{\alg[S]}(s')\bigr) & & \text{for $t \in T_B(C \times P)^B$.}
\end{aligned}
\end{equation}
Here, in the second clause, $\dbr{\mathsf{read}}^{\alg[S]}(s)$ is a
term in $T_{A}(B \times S)$, into which we are substituting the
$B \times S$-indexed family of terms
$\dbr{t_b}^{\alg[S]}(s') \in T_{A}(C \times P \times S)$ to obtain the
desired term in $T_{A}(C \times P \times S)$.

Let us now see that the states of $\alg[P] \cdot \alg[S]$ encode
precisely the composites of the continuous functions encoded by the
states of $\alg[P]$ and $\alg[S]$. Despite the complexity of our
description of $\alg[P] \cdot \alg[S]$, the proof of this fact is
trivial.

\begin{prop}\label{prop:spr}
  Suppose that $\alg[S]$ is a $\mathbb{T}_A$-residual
  $\mathbb{T}_B$-comodel and $\alg[P]$ is a $\mathbb{T}_B$-residual
  $\mathbb{T}_C$-comodel. For any $s \in \alg[S]$ and $p \in \alg[P]$,
  we have that $\mathsf{tr}^{\alg[P] \cdot \alg[S]}(p,s) =
  \mathsf{tr}^{\alg[P]}(p) \circ \mathsf{tr}^{\alg[S]}(s) \colon
  A^\mathbb{N} \rightarrow C^\mathbb{N}$.
\end{prop}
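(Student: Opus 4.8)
The plan is to prove this by a coinductive argument, unwinding the definition of $\mathsf{tr}$ on $\alg[P]\cdot\alg[S]$ and comparing it directly with the composite $\mathsf{tr}^{\alg[P]}(p)\circ\mathsf{tr}^{\alg[S]}(s)$. Since $B^\mathbb N$ is a final $\mathbb{T}_B$-comodel, it suffices to exhibit, for each $\vec a\in A^\mathbb N$, that the two candidate streams in $C^\mathbb N$ agree on their head and that taking the tail reduces to an instance of the same statement at a new state and a new input stream; equivalently, to identify a suitable relation on $C^\mathbb N$ and show it is a bisimulation on the final $C\times(\thg)$-coalgebra, which forces equality. Concretely, I would fix $\vec a$, compute $\dbr{\gamma_{\alg[P]\cdot\alg[S]}(p,s)}^{\alg[A^\mathbb N]}(\vec a)$, and show it has the form $(c_0,(p',s'),\partial^k\vec a)$ where $c_0$ and $p'$ are precisely the head output and next state produced by running $\gamma_{\alg[P]}(p)$ against the stream $\mathsf{tr}^{\alg[S]}(s)(\vec a)\in B^\mathbb N$, while $s'$ and $\partial^k\vec a$ are the next $\alg[S]$-state and leftover $A$-stream after enough steps of $\mathsf{tr}^{\alg[S]}$ have been run to supply that many $B$-tokens.

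The key technical lemma, which I would isolate and prove first by structural induction on $t\in T_B(C\times P)$, is a ``compatibility'' identity relating $\dbr{t}^{\alg[S]}$ with the derived co-operations on $A^\mathbb N$ and $B^\mathbb N$: namely, for all $\vec a$,
\begin{equation*}
  \dbr{\dbr{t}^{\alg[S]}(s)}^{\alg[A^\mathbb N]}(\vec a)
  = \bigl(c,p,s'',\vec a''\bigr)
  \quad\text{where}\quad
  \dbr{t}^{\alg[B^\mathbb N]}\!\bigl(\mathsf{tr}^{\alg[S]}(s)(\vec a)\bigr)=(c,p,\vec b')
\end{equation*}
with $s''$ the state and $\vec a''$ the residual stream obtained by running the initial segment of $\mathsf{tr}^{\alg[S]}(s)$ that produces the $B$-stream prefix consumed in the right-hand computation. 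This is exactly the statement that ``eating'' a $T_B$-term against the tensored comodel $\alg[S]\cdot\alg[A^\mathbb N]$ coincides with first translating to $B$-tokens via $\mathsf{tr}^{\alg[S]}$ and then eating against $\alg[B^\mathbb N]$; in the base case $t=(c,p)$ both sides are trivial, and in the step case $t=\mathsf{read}(\lambda b.\,t_b)$ one expands $\dbr{\mathsf{read}}^{\alg[S]}(s)\in T_A(B\times S)$, uses~\eqref{eq:18} to eat it against $\vec a$, observes that the resulting $B$-token is the head of $\mathsf{tr}^{\alg[S]}(s)(\vec a)$ and the new $\alg[S]$-state governs $\partial(\mathsf{tr}^{\alg[S]}(s)(\vec a))$, and then applies the induction hypothesis.

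Granting this lemma applied to $t=\dbr{\mathsf{read}}^{\alg[P]}(p)$, the head of $\mathsf{tr}^{\alg[P]\cdot\alg[S]}(p,s)(\vec a)$ is $c_0=\bigl(\mathsf{tr}^{\alg[P]}(p)(\mathsf{tr}^{\alg[S]}(s)(\vec a))\bigr)_0$, and the coinductive step identifies $\partial\bigl(\mathsf{tr}^{\alg[P]\cdot\alg[S]}(p,s)(\vec a)\bigr)$ with $\mathsf{tr}^{\alg[P]\cdot\alg[S]}(p',s'')(\vec a'')$, which by the lemma again equals $\mathsf{tr}^{\alg[P]}(p')\bigl(\partial(\mathsf{tr}^{\alg[S]}(s)(\vec a))\text{'s residual}\bigr)$ matched against the corresponding tail of the right-hand composite; so the relation $\{(\mathsf{tr}^{\alg[P]\cdot\alg[S]}(p,s)(\vec a),\ \mathsf{tr}^{\alg[P]}(p)(\mathsf{tr}^{\alg[S]}(s)(\vec a)))\}$ is closed under head and tail, hence is the identity by finality. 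I expect the only real obstacle to be bookkeeping: carefully tracking how many $A$-tokens get consumed and which $\alg[S]$-state and residual $A$-stream result when $\mathsf{tr}^{\alg[S]}(s)$ is asked to emit the finitely many $B$-tokens that $\dbr{\mathsf{read}}^{\alg[P]}(p)$ reads. This is precisely what the inductive lemma is designed to package cleanly, so once it is stated correctly the rest is a routine coinduction; as the text promises, ``the proof of this fact is trivial'' modulo that unwinding.
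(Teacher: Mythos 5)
Your argument is essentially correct, but it takes a much more hands-on route than the paper. The paper's proof is a three-line finality argument: both $\mathsf{tr}^{\alg[P]\cdot\alg[S]}$ and the composite $\mathsf{tr}^{\alg[P]}\circ(\mathsf{tr}^{\alg[S]}\times\mathrm{id})$ arise by currying the two legs of a triangle of $\mathbb{T}_C$-comodel homomorphisms $\alg[P]\cdot\alg[S]\cdot\alg[A^\mathbb{N}]\to\alg[C^\mathbb{N}]$ (one direct, one factoring through $\alg[P]\cdot\alg[B^\mathbb{N}]$), and the triangle commutes because $\alg[C^\mathbb{N}]$ is final. All the real work is thereby pushed into the earlier justification (Definition~\ref{def:2}, Remark~\ref{rk:4}) that the tensor products are well defined, associative up to the evident identification, and functorial, so that $\alg[P]\cdot\mathord{!}$ is a legitimate comodel map. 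Your proposal instead unfolds exactly that hidden content: your ``compatibility'' lemma is precisely the elementwise statement that the trace map $S\times A^\mathbb{N}\to B^\mathbb{N}$ commutes with the \emph{derived} co-operations $\dbr{t}$ for $t\in T_B(C\times P)$, and your concluding bisimulation re-proves finality of $\alg[C^\mathbb{N}]$ in this instance. Both are valid; the paper's version buys brevity and makes clear that nothing specific to streams is being used, while yours is self-contained and makes the token-consumption bookkeeping explicit. One point to tighten if you write yours out: state the lemma so that its conclusion identifies the residual data on the nose, namely $\dbr{t}^{\alg[B^\mathbb{N}]}\bigl(\mathsf{tr}^{\alg[S]}(s)(\vec a)\bigr)=\bigl(c,p',\mathsf{tr}^{\alg[S]}(s'')(\vec a'')\bigr)$ where $(c,p',s'',\vec a'')=\dbr{\dbr{t}^{\alg[S]}(s)}^{\alg[A^\mathbb{N}]}(\vec a)$; your current phrasing (``the residual stream obtained by running the initial segment\dots'') leaves the third component too informal to feed the coinductive step, since that step needs the leftover $(s'',\vec a'')$ to trace to exactly the tail of $\mathsf{tr}^{\alg[S]}(s)(\vec a)$ that the right-hand composite continues from.
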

\begin{proof}
  Consider the diagram of $\mathbb{T}_C$-comodels
  \begin{equation*}
    \cd[@-1em@C-1em]{
      \alg[P] \cdot \alg[S] \cdot \alg[A^\mathbb{N}] \ar[rr]^-{\alg[P] \cdot !} \ar[dr]_-{!} & &
      \alg[P] \cdot \alg[B^\mathbb{N}] \ar[dl]^-{!} \\ & \alg[C^\mathbb{N}]
    }
  \end{equation*}
  wherein each map labelled $!$ is the unique map into a final object.
  Since $\alg[C^\mathbb{N}]$ is final among $\mathbb{T}_C$-comodels,
  this diagram clearly commutes; and now partially evaluating at
  $(p,s) \in P \times S$ yields the desired equality.
\end{proof}

Before continuing, we resolve some unfinished business by justifying
Definitions~\ref{def:4}, \ref{def:toptensor} and~\ref{def:twocomodels}
above. Our starting point will be an alternative presentation of the
notion of comodel due to~\cite{Uustalu2015Stateful}. In
\emph{op.\,cit.}, Uustalu defines a \emph{runner} for a theory
$\mathbb{T}$, with set of states $S$, to be a monad morphism
$\mathsf{T} \rightarrow \mathsf{T}_S$ from the associated monad of
$\mathbb{T}$ to the state monad $\mathsf{T}_S = (\thg \times S)^S$.
The data for such a runner are functions
$T(V) \rightarrow (V \times S)^S$ assigning to each $t \in T(V)$ a
function $\dbr{t}^{\alg[S]} \colon S \rightarrow V \times S$.
Recognising these as the data of the derived co-operations
of a $\mathbb{T}$-comodel structure on $S$, we should find the main
result of~\cite{Uustalu2015Stateful} reasonable: that
$\mathbb{T}$-comodels with underlying set $S$ are in bijection with
$\mathbb{T}$-runners with underlying set of states $S$.

While Uustalu's result is about comodels in $\cat{Set}$, it
generalises unproblematically. For any object $S$ of a category $\C$
with copowers, we have an adjunction
$(\thg) \cdot S \dashv \C(S, \thg) \colon \C \rightarrow \cat{Set}$
inducing a monad $\mathsf{T}_S = \C(S, (\thg) \cdot S)$ on
$\cat{Set}$; in~\cite{Mogelberg2014Linear} this is called the
\emph{linear-use state monad} associated to $S$. We now have the
following natural extension of Uustalu's result.

\begin{propC}[{\cite[Theorem~8.2]{Mogelberg2014Linear}}]
  \label{prop:2}
  Let $\mathbb{T}$ be an algebraic theory, $\C$ a category with
  copowers, and $S \in \C$. The following are in bijective
  correspondence:
  \begin{enumerate}
  \item $\mathbb{T}$-comodels $\alg[S]$ in $\C$ with underlying object $S$;
  \item $\mathbb{T}$-runners in $\C$, i.e., monad maps
    $\dbr{\thg}^{\alg[S]} \colon \mathsf{T} \rightarrow \mathsf{T}_S$ into the
    linear-use state monad of $S$;
  \item Functorial extensions of 
    $(\thg) \cdot S \colon \cat{Set} \rightarrow \C$ along the
    free functor into the Kleisli category:
  \begin{equation}\label{eq:2}
    \cd[@-0.5em]{
      \cat{Set} \ar[r]^-{(\thg) \cdot S} \ar[d]_-{F_\mathbb{T}} & \C \rlap{ .}\\
      \cat{Kl}(\mathbb{T}) \ar@{-->}[ur]_-{(\thg) \cdot \alg[S]}
    }
  \end{equation}
  \end{enumerate}
\end{propC}
\begin{rem}
  \label{rk:3}
  Abstractly, this proposition expresses the fact that $\cat{Kl}(\mathbb{T})$ is the
  \emph{free category with copowers containing a comodel of
    $\mathbb{T}$}; this result is originally due to
  Linton~\cite{Linton1966Some}.
\end{rem}
\begin{proof}
  As just said, the argument for (1) $\Leftrightarrow$ (2) is
  \emph{mutatis mutandis} that of~\cite[\S 3]{Uustalu2015Stateful}.
  For (2) $\Leftrightarrow$ (3), it is
  standard~\cite{Meyer1975Induced} that monad maps
  $\mathsf{T} \rightarrow \mathsf{T}_S$ correspond to extensions to
  the left~in:
  \begin{equation*}
    \cd[@-0.5em]{
      \cat{Set} \ar[r]^-{F^{\mathsf{T}_S}} \ar[d]_-{F^\mathsf{T}} & \cat{Kl}(\mathsf{T}_S) \\
      \cat{Kl}(\mathsf{T}) \ar@{-->}[ur]_-{}
    } \qquad \qquad
        \cd[@-0.5em]{
      \cat{Set} \ar[r]^-{(\thg) \cdot S} \ar[d]_-{F_\mathbb{T}} & \C_S\rlap{ .} \\
      \cat{Kl}(\mathbb{T}) \ar@{-->}[ur]_-{}
    }
  \end{equation*}
  Now the Kleisli category $\cat{Kl}(\mathsf{T}_S)$ of the linear-use
  state monad is isomorphic to the category $\C_S$ whose
  objects are sets, and whose maps $A \rightarrow B$ are $\C$-maps
  $A \cdot S \rightarrow B \cdot S$, via an isomorphism which
  identifies $F^{\mathsf{T}_S}$ with
  $(\thg) \cdot S \colon \cat{Set} \rightarrow \C_S$. Similarly we have
  $\cat{Kl}(\mathsf{T}) \cong \cat{Kl}(\mathbb{T})$ under $\cat{Set}$. So monad
  morphisms $\mathsf{T} \rightarrow \mathsf{T}_S$ correspond to
  extensions as right above: and these, by direct inspection, correspond to
  extensions as in~\eqref{eq:2}.
\end{proof}

\looseness=-1 If here $\C$ is itself the Kleisli category
$\cat{Kl}(\mathbb{R})$ of a theory $\mathbb{R}$, then the linear-use
state monad of $S \in \cat{Kl}(\mathbb{R})$ is the monad
$\mathsf{R}(\thg \times S)^S$ found as the commuting combination of
the state monad for $S$ with the monad $\mathsf{R}$ induced by
$\mathbb{R}$ (cf.~\cite[Theorem~10]{Hyland2006Combining}). Monad
maps $\mathsf{T} \rightarrow \mathsf{R}(S \times \thg)^S$ were
in~\cite{Katsumata2020Interaction} termed \emph{$\mathbb{R}$-residual
  $\mathbb{T}$-runners}, and for these the preceding result
specialises to:
\begin{prop}
  \label{prop:3}
  Let $\mathbb{T}$ and $\mathbb{R}$ be algebraic theories and let $S$
  be a set. The following are in bijective correspondence:
  \begin{enumerate}
  \item $\mathbb{R}$-residual $\mathbb{T}$-comodels $\alg[S]$ with
    underlying set $S$;
  \item $\mathbb{R}$-residual $\mathbb{T}$-runners $\dbr{\thg}^{\alg[S]} \colon
    \mathsf{T} \rightarrow \mathsf{R}(\thg \times S)^S$;
  \item Functorial extensions of
    $(\thg) \times S \colon \cat{Set} \rightarrow \cat{Set}$ through
    the Kleisli categories of $\mathbb{T}$ and $\mathbb{R}$:
  \begin{equation}\label{eq:7}
    \cd[@-0.5em@C+0.2em]{
      \cat{Set} \ar[r]^-{(\thg) \times S} \ar[d]_-{F_\mathbb{T}} & \cat{Set} \ar[d]^-{F_\mathbb{R}} \\
      \cat{Kl}(\mathbb{T}) \ar@{-->}[r]^-{(\thg) \cdot \alg[S]} &
      \cat{Kl}(\mathbb{R})\rlap{ .}
    }
  \end{equation}
  \end{enumerate}
\end{prop}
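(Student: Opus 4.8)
The plan is to obtain this as the special case of Proposition~\ref{prop:2} in which $\C$ is the Kleisli category $\cat{Kl}(\mathbb{R})$ and the object ``$S$'' there is the given set, regarded as an object of $\cat{Kl}(\mathbb{R})$; the three clauses here will then fall out of the three clauses there, once the relevant data have been made explicit. The first thing I would record is that Proposition~\ref{prop:2} applies at all: $\cat{Kl}(\mathbb{R})$ has copowers, since it has all small coproducts (the free functor $F_\mathbb{R}$, being a left adjoint, preserves them), and indeed, since $F_\mathbb{R}$ is the identity on objects, the copower $A \cdot S$ in $\cat{Kl}(\mathbb{R})$ is just the set $A \times S$, with coproduct injections the $F_\mathbb{R}$-images of the injections $S \rightarrow A \times S$ of $\cat{Set}$. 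Consequently the left adjoint $(\thg) \cdot S \colon \cat{Set} \rightarrow \cat{Kl}(\mathbb{R})$ factors as $F_\mathbb{R} \circ ((\thg) \times S)$. Clause~(1) is then immediate: by the definition of residual comodel, an $\mathbb{R}$-residual $\mathbb{T}$-comodel with underlying set $S$ is precisely a $\mathbb{T}$-comodel in $\cat{Kl}(\mathbb{R})$ with underlying object $S$, which is clause~(1) of Proposition~\ref{prop:2} for this $\C$.

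For clause~(2), I would identify the linear-use state monad $\mathsf{T}_S = \cat{Kl}(\mathbb{R})(S, (\thg) \cdot S)$ of $S \in \cat{Kl}(\mathbb{R})$. Using the previous paragraph, its value at $A$ is $\cat{Kl}(\mathbb{R})(S, A \times S) = \cat{Set}(S, R(A \times S)) = R(A \times S)^S$, so as a functor $\mathsf{T}_S = \mathsf{R}(\thg \times S)^S$; its monad structure, being determined by the adjunction $(\thg) \cdot S \dashv \cat{Kl}(\mathbb{R})(S, \thg)$, is the one exhibiting it as the commuting combination of $\mathsf{R}$ with the state monad of $S$ (cf.~\cite[Theorem~10]{Hyland2006Combining}). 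Thus clause~(2) of Proposition~\ref{prop:2} for this $\C$ reads: monad maps $\mathsf{T} \rightarrow \mathsf{R}(\thg \times S)^S$, that is, the $\mathbb{R}$-residual $\mathbb{T}$-runners of clause~(2) here.

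For clause~(3), I would feed the factorisation $(\thg) \cdot S = F_\mathbb{R} \circ ((\thg) \times S)$ into the triangle~\eqref{eq:2} of Proposition~\ref{prop:2}. An extension there is a functor $(\thg) \cdot \alg[S] \colon \cat{Kl}(\mathbb{T}) \rightarrow \cat{Kl}(\mathbb{R})$ with $(\thg) \cdot \alg[S] \circ F_\mathbb{T} = F_\mathbb{R} \circ ((\thg) \times S)$; since $F_\mathbb{T}$ is the identity on objects, such a functor necessarily acts as $A \mapsto A \times S$ on objects, and the displayed equation then says exactly that the square~\eqref{eq:7} commutes. Conversely, any functor making~\eqref{eq:7} commute arises in this way. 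So extensions as in~\eqref{eq:2} coincide with extensions as in~\eqref{eq:7}, and the proof is complete by appeal to Proposition~\ref{prop:2}.

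I do not expect a real obstacle here: the argument is entirely a matter of unfolding the structure of $\cat{Kl}(\mathbb{R})$ and quoting Proposition~\ref{prop:2}. If anything needs a moment's care it is pinning down the monad structure on $\mathsf{T}_S = \mathsf{R}(\thg \times S)^S$ as the commuting combination of $\mathsf{R}$ with the state monad; but for the bijective correspondence as stated only the underlying functor of $\mathsf{T}_S$ is actually used, and that is immediate from the hom-set computation above.
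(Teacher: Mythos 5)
Your proposal is correct and is essentially the paper's own argument: the paper offers no separate proof of Proposition~\ref{prop:3}, deriving it exactly as you do by specialising Proposition~\ref{prop:2} to $\C = \cat{Kl}(\mathbb{R})$ and identifying the linear-use state monad of $S \in \cat{Kl}(\mathbb{R})$ with $\mathsf{R}(\thg \times S)^S$. Your explicit verifications (copowers in $\cat{Kl}(\mathbb{R})$, the hom-set computation, the factorisation $(\thg)\cdot S = F_\mathbb{R} \circ ((\thg)\times S)$) are just the details the paper leaves implicit.
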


By putting together Propositions~\ref{prop:2} and~\ref{prop:3}, we
have an intuitive definition of \emph{tensor product} for a residual
comodel and a comodel, or for two residual comodels.
\begin{defi}[Tensor product of residual comodels]
  \label{def:2}\looseness=-1
  Let $\mathbb{V}$, $\mathbb{T}$, $\mathbb{R}$  be theories;
  $\alg[M]$ an $\mathbb{R}$-comodel in $\C$; $\alg[S]$ an
  $\mathbb{R}$-residual $\mathbb{T}$-comodel; and $\alg[P]$ a
  $\mathbb{T}$-residual $\mathbb{V}$-comodel. The tensor
    product $\alg[S] \cdot \alg[M]$ is the $\mathbb{T}$-comodel in
  $\C$ classified by the composite of extensions to the left
  below, while the tensor product $\alg[P] \cdot \alg[S]$ is
  the $\mathbb{R}$-residual $\mathbb{V}$-comodel classified by the
  composite to the right:
  \begin{equation}\label{eq:1}
    \cd[@C+0.1em@-0.3em]{
      \cat{Set} \ar[r]^-{(\thg) \times S} \ar[d]_-{F_\mathbb{T}} &
      \cat{Set} \ar[r]^-{(\thg) \cdot M} \ar[d]_-{F_\mathbb{R}} & \C & &
      \cat{Set} \ar[r]^-{(\thg) \times P} \ar[d]_-{F_\mathbb{V}} &
      \cat{Set} \ar[r]^-{(\thg) \times S} \ar[d]_-{F_\mathbb{T}} &
      \cat{Set} \ar[d]^-{F_\mathbb{R}} & &
      \\
      \cat{Kl}(\mathbb{T}) \ar@{-->}[r]^-{(\thg) \cdot \alg[S]} &
      \cat{Kl}(\mathbb{R}) \ar@{-->}[ur]_-{(\thg) \cdot \alg[M]} & & & 
      \cat{Kl}(\mathbb{V}) \ar@{-->}[r]^-{(\thg) \cdot \alg[P]} &
      \cat{Kl}(\mathbb{T}) \ar@{-->}[r]^-{(\thg) \cdot \alg[S]} &
      \cat{Kl}(\mathbb{R}) \rlap{ .}
    }
  \end{equation}
\end{defi}
In particular, when $\C = \cat{Set}$ and $\C = \cat{Top}$, the tensor
product $\alg[S] \cdot \alg[M]$ specialises to those of
Definitions~\ref{def:4} and~\ref{def:toptensor} above; while the
tensor product $\alg[P] \cdot \alg[S]$ yields Definition~\ref{def:twocomodels}.

\begin{rem}
  \label{rk:4}
  Here is another perspective on Definition~\ref{def:2}. To the left
  of~(\ref{eq:1}), the functor $(\thg) \cdot \alg[M]$ preserves
  copowers, and so lifts to a functor
  $\cat{Comod}(\mathbb{T}, \cat{Kl}(\mathbb{R})) \rightarrow
  \cat{Comod}(\mathbb{T}, \C)$, whose value at $\alg[S]$ is the tensor
  product $\alg[S] \cdot \alg[M]$. We can obtain
  $\alg[P] \cdot \alg[S]$ to the right similarly.
\end{rem}

\section{Intensional stream processors as a final residual comodel}
\label{sec:intens-stre-proc}
The arguments of the previous section were given for an arbitrary
$\mathbb{T}_A$-residual $\mathbb{T}_B$-comodel $\alg[S]$; but as
in~\cite{Hancock2009Representations}, it is natural to consider the
final residual comodel in particular. To this end, we should first
clarify the correct notion of \emph{morphism} between residual
comodels.

\begin{defi}[Map of residual comodels]
  \label{defn:17}
  Let $\mathbb{T}$ and $\mathbb{R}$ be theories, and let $\alg[S]$ and
  $\alg[U]$ be $\mathbb{R}$-residual $\mathbb{T}$-comodels. A map of
  residual comodels $\alg[S] \to \alg[U]$ is a function
  $f \colon S \rightarrow U$ such that
  $\dbr{\sigma}^{\alg[U]} \circ f = R(\abs \sigma \times f) \circ
  \dbr{\sigma}^{\alg[S]}$ for all operations $\sigma$ in the signature
  of $\mathbb{T}$.
\end{defi}
\begin{rem}
  \label{rk:1}
  Given that an $\mathbb{R}$-residual comodel is a comodel in
  $Kl(\mathbb{R})$, we might expect a map of residual comodels to be a
  map in $Kl(\mathbb{R})$, rather than one in $\cat{Set}$. The reason
  for our choice is not pure expediency; it has to do with an
  enrichment of the category of theories in the category of comonads
  on $\cat{Set}$, currently being investigated by the authors
  of~\cite{Katsumata2020Interaction}, and which exploits the general
  \emph{Sweedler theory} of~\cite{Anel2013Sweedler}. Working through
  the calculations, one finds that for two theories $\mathbb{R}$ and
  $\mathbb{T}$, the category of coalgebras for the hom-comonad
  $\langle\mathbb{R}, \mathbb{T}\rangle$ is the category of residual
  comodels, with precisely the maps indicated in
  Definition~\ref{defn:17}.
\end{rem}

With this clarification made, we see that, in particular, the category
of $\mathbb{T}_A$-residual $\mathbb{T}_B$-comodels is simply the
category of $T_A(B \times \thg)$-coalgebras, and so we have:
\begin{defi}[Intensional stream processors]
  \label{def:9}
  The \emph{type of intensional $A$-$B$-stream processors} is the
  final $\mathbb{T}_A$-residual $\mathbb{T}_B$-comodel
  $\alg[{I_{AB}}]$, i.e., the final $T_A(B \times \thg)$-coalgebra
  \begin{equation}\label{eq:17}
    \theta_{AB} \colon I_{AB} \rightarrow T_A(B \times I_{AB})\rlap{ .}
  \end{equation}
  The \emph{reflection} function is the
  trace function of the final residual comodel:
  \begin{equation*}
    \mathsf{reflect} \colon I_{AB} \rightarrow \cat{Top}(A^\mathbb{N}, B^\mathbb{N}) \qquad \qquad s \mapsto \mathsf{tr}(s) \colon A^\mathbb{N} \rightarrow B^\mathbb{N}\rlap{ ,}
  \end{equation*}
  where here we write $\cat{Top}$ for the category of topological
  spaces and continuous maps.
\end{defi}

As well as reflection, \cite{Hancock2009Representations} also defines
a \emph{reification} function
$\mathsf{reify} \colon \cat{Top}(A^\mathbb{N}, B^\mathbb{N})
\rightarrow I_{AB}$ that implements each continuous function by a
state of the final comodel, and which satisfies
$\mathsf{reflect} \circ \mathsf{reify} = \mathrm{id}$. This means that
$\mathsf{reflect}$ is surjective---but crucially, it is \emph{not}
injective. To show this, we must first note that, by the usual
techniques, the terminal coalgebra $I_{AB}$ may be described as
follows: it is the set of all finite or infinite $A$-ary branching
trees, with interior nodes labelled with elements of $B^\ast$ (i.e.,
lists of elements of $B$), with leaves labelled by elements of
$B^\mathbb{N}$, and where no infinite path of interior nodes is
labelled by the empty list.
\begin{exa}
  \label{ex:3}  
  Fix an element $b \in B$ and consider the following two
  $\mathbb{T}_A$-residual $\mathbb{T}_B$-comodel structures on
  $\{\ast\}$:
  \begin{equation}\label{eq:8}
    \qquad \text{(i) }\ast \mapsto (b, \ast) \qquad \qquad \text{and} \qquad \qquad
    \text{(ii) }\ast \mapsto \textsf{read}(\lambda a.\,(b, \ast))\rlap{ .}
  \end{equation}
  In both comodels, the unique state $\ast$ encodes the continuous
  function $A^\mathbb{N} \rightarrow B^\mathbb{N}$ sending every
  stream $\vec a$ to $(b,b,b,b,\dots)$. However, these states yield
  different elements of the final comodel $I_{AB}$: (i) gives the
  trivial tree $\tau_0$ whose root is labelled by $(b,b,b,\dots)$, while
  (ii) gives the purely infinite $A$-ary branching tree $\tau_1$ with
  every node labelled by a single $b$.
\end{exa}

Intuitively, the two states of $I_{AB}$ in this example differ in that
the first ignores its input stream entirely, and simply outputs $b$'s
without cease; while the second frivolously consumes a single
$A$-token before emitting each $b$. So $I_{AB}$ is a set of
\emph{intensional} representations of stream processors. This will
lead us neatly on to the second part of the paper, where we give a
comodel-theoretic presentation of \emph{extensional} stream
processors, i.e., the set $\cat{Top}(A^\mathbb{N}, B^\mathbb{N})$, and
an explanation in these terms of where the reification function
of~\cite{Hancock2009Representations} comes from.

Before we do this, let us see how the tensor product of residual
comodels allows us to give an account of the \emph{lazy composition}
of intensional stream processors
from~\cite[\S4]{Hancock2009Representations}.

\begin{defi}\label{def:lazy}
  The \emph{lazy composition} of intensional stream processors is
  given by the unique map of $\mathbb{T}_A$-residual
  $\mathbb{T}_C$-comodels
  $\mathsf{comp} \colon \alg[I_{BC}] \cdot \alg[I_{AB}] \rightarrow
  \alg[I_{AC}]$, where in the domain we take the tensor product of
  residual comodels of Definition~\ref{def:2}.
\end{defi}

Let us now unpack this definition to see that our composition agrees
with the one given in~\cite[\S4]{Hancock2009Representations}. First,
as a special case of~\eqref{eq:22}, the $\mathbb{T}_A$-residual
$\mathbb{T}_C$-comodel structure on $\alg[I_{BC}] \cdot \alg[I_{AB}]$
has co-operation
$\dbr{\mathsf{read}}^{\alg[I_{BC}] \cdot \alg[I_{AB}]}$ given by:
\begin{equation*}
  I_{BC} \times I_{AB} \xrightarrow{\theta_{BC} \times  \mathrm{id}}
  T_B(C \times I_{BC}) \times I_{AB} \xrightarrow{(t,\sigma) \mapsto \dbr{t}^{\alg[I_{AB}]}(\sigma)} T_{A}(C \times I_{BC} \times I_{AB})\rlap{ ,}
\end{equation*}
where, as in~\eqref{eq:17}, we write $\theta_{BC}$ for
$\dbr{\mathsf{read}}^{\alg[I_{BC}]}$. Furthermore, as
in~\eqref{eq:23}, we have $\dbr{t}^{\alg[I_{AB}]}$ defined recursively
by:
\begin{equation}
  \label{eq:25}
\begin{aligned}
  \dbr{(c, \tau)}^{\alg[I_{AB}]}(\sigma) & = (c, \tau, \sigma) && \text{for $(c, \tau) \in C \times I_{BC}$, $\sigma \in I_{AB}$}\\
  \dbr{\mathsf{read}(t)}^{\alg[I_{AB}]}(\sigma) &=   \theta_{AB}(\sigma)\bigl(\lambda (b, \sigma').\, \dbr{t_b}^{\alg[I_{AB}]}(\sigma')\bigr) && \text{for $t \in T_B(C \times I_{BC})^B$, $\sigma \in I_{AB}$.}
\end{aligned}
\end{equation}

We now compare this with the composition function of~\cite[\S
4.1]{Hancock2009Representations}, which was obtained as follows.
First, the authors define
a $\mathbb{T}_A$-residual $\mathbb{T}_C$-comodel structure
$\chi \colon S \rightarrow T_{A}(C \times S)$
on the set $S = T_{B}(C \times I_{BC}) \times T_{A}(B \times I_{AB})$,
via the following clauses:
\begin{align*}
  \chi\bigl((c,\tau), u) & = (c, \theta_{BC}(\tau), u) && \text{for $(c,\tau) \in C \times I_{BC}$}\\
  \chi\bigl(\mathsf{read}(t), (b, \sigma)\bigr) & = \chi(t_b, \theta_{AB}(\sigma)) && \text{for $t \in T_B(C \times I_{BC})^B$, $(b,\sigma) \in B \times I_{AB}$}\\
  \chi\bigl(\mathsf{read}(t), \mathsf{read}(u)\bigr) & = \mathsf{read}(\lambda a.\, \chi(\mathsf{read}(t), u_a)) && \text{for $t \in T_B(C \times I_{BC})^B$, $u \in T_A(B \times I_{AB})^A$;}
\end{align*}
they now induce by finality of $\alg[I_{AC}]$ a unique map of residual
comodels $u \colon \alg[S] \rightarrow \alg[I_{AC}]$; and finally, they define the
composition map $I_{BC} \times I_{AB} \rightarrow I_{AC}$ as $u
\circ (\theta_{BC} \times \theta_{AB})$.

\begin{prop}
  The lazy composition of Definition~\ref{def:lazy} coincides with
  that of~\cite{Hancock2009Representations}.
\end{prop}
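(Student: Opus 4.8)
The plan is to leverage the universal property of $\alg[I_{AC}]$ as the final $\mathbb{T}_A$-residual $\mathbb{T}_C$-comodel. By Definition~\ref{def:lazy}, $\mathsf{comp}$ is \emph{the} map of residual comodels $\alg[I_{BC}] \cdot \alg[I_{AB}] \to \alg[I_{AC}]$; and the composition of~\cite{Hancock2009Representations} is by construction the function $u \circ (\theta_{BC} \times \theta_{AB})$, where $u \colon \alg[S] \to \alg[I_{AC}]$ is obtained by finality as a map of residual comodels. Since maps of residual comodels compose, it therefore suffices to prove that $\theta_{BC} \times \theta_{AB} \colon I_{BC} \times I_{AB} \to S = T_B(C \times I_{BC}) \times T_A(B \times I_{AB})$ is itself a map of $\mathbb{T}_A$-residual $\mathbb{T}_C$-comodels $\alg[I_{BC}] \cdot \alg[I_{AB}] \to \alg[S]$: for then $u \circ (\theta_{BC} \times \theta_{AB})$ is a composite of residual-comodel maps into $\alg[I_{AC}]$, hence equal to $\mathsf{comp}$ by finality.

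Unwinding Definition~\ref{defn:17} (with $\mathbb{T} = \mathbb{T}_C$ and $\mathbb{R} = \mathbb{T}_A$) and using the description~\eqref{eq:22} of $\dbr{\mathsf{read}}^{\alg[I_{BC}] \cdot \alg[I_{AB}]}$ — namely $\dbr{\mathsf{read}}^{\alg[I_{BC}] \cdot \alg[I_{AB}]}(\tau, \sigma) = \dbr{\theta_{BC}(\tau)}^{\alg[I_{AB}]}(\sigma)$ — this reduces to the single identity
\[
  \chi\bigl(\theta_{BC}(\tau), \theta_{AB}(\sigma)\bigr) \;=\; T_A\bigl(C \times (\theta_{BC} \times \theta_{AB})\bigr)\bigl(\dbr{\theta_{BC}(\tau)}^{\alg[I_{AB}]}(\sigma)\bigr)
\]
for all $\tau \in I_{BC}$, $\sigma \in I_{AB}$. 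I would prove the more general statement that $\chi(t, \theta_{AB}(\sigma)) = T_A(C \times (\theta_{BC} \times \theta_{AB}))\bigl(\dbr{t}^{\alg[I_{AB}]}(\sigma)\bigr)$ for every term $t \in T_B(C \times I_{BC})$ and every $\sigma \in I_{AB}$, the case $t = \theta_{BC}(\tau)$ being what is wanted, by induction on the finite $T_B$-term $t$. The leaf case $t = (c, \tau)$ is immediate from the first of the displayed clauses for $\chi$ and the first clause of~\eqref{eq:25}. For $t = \mathsf{read}(\lambda b.\, t_b)$, where $\dbr{t}^{\alg[I_{AB}]}(\sigma) = \theta_{AB}(\sigma)\bigl(\lambda(b,\sigma').\,\dbr{t_b}^{\alg[I_{AB}]}(\sigma')\bigr)$, I would run a side induction on the finite $T_A$-term $u$ to establish $\chi(\mathsf{read}(\lambda b.\,t_b), u) = T_A(C \times (\theta_{BC} \times \theta_{AB}))\bigl(u(\lambda(b,\sigma').\,\dbr{t_b}^{\alg[I_{AB}]}(\sigma'))\bigr)$, then specialise to $u = \theta_{AB}(\sigma)$: the base case $u = (b_0, \sigma')$ uses the second clause for $\chi$ and feeds into the outer inductive hypothesis at the strictly smaller term $t_{b_0}$, while the step case $u = \mathsf{read}(\lambda a.\, u_a)$ uses the third clause for $\chi$ together with the fact (from~\eqref{eq:20}) that substitution commutes past a head $\mathsf{read}$, and invokes the side hypothesis at each $u_a$.

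With this identity in hand, $\theta_{BC} \times \theta_{AB}$ is a map of $\mathbb{T}_A$-residual $\mathbb{T}_C$-comodels, so $u \circ (\theta_{BC} \times \theta_{AB})$ is one from $\alg[I_{BC}] \cdot \alg[I_{AB}]$ to $\alg[I_{AC}]$, whence $u \circ (\theta_{BC} \times \theta_{AB}) = \mathsf{comp}$ by finality, as required. I expect the only real labour to be the bookkeeping in the nested induction: checking that the clause-by-clause threading of $\theta_{AB}(\sigma)$ through the $T_B$-structure of $t$ performed by $\dbr{t}^{\alg[I_{AB}]}$ matches exactly the case split of $\chi$ on whether its first argument is a leaf or a $\mathsf{read}$, and — when it is a $\mathsf{read}$ — whether its second argument is a leaf or a $\mathsf{read}$. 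Everything else is formal and flows from the universal property of the final residual comodel.
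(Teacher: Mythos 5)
Your proposal is correct and follows essentially the same route as the paper: reduce by finality of $\alg[I_{AC}]$ to showing $\theta_{BC} \times \theta_{AB}$ is a map of residual comodels, and verify this by structural induction on $t \in T_B(C \times I_{BC})$. The only difference is cosmetic: where you run an explicit side induction on $u$ to handle the case split in $\chi$'s second and third clauses, the paper first collapses those two clauses into the single substitution form $\chi(\mathsf{read}(t),u) = u(\lambda(b,\sigma).\,\chi(t_b,\theta_{AB}(\sigma)))$ via~\eqref{eq:20} and then needs only the one induction.
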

\begin{proof}
  Given the definitions of the two maps, it suffices to show that
  $\theta_{BC} \times \theta_{AB}$ is a map of residual comodels
  $\alg[I_{BC}] \cdot \alg[I_{AB}] \rightarrow \alg[S]$, i.e., that
  the outside of
\begin{equation*}
  \cd[@C+3em]{
    I_{BC} \times I_{AB} \ar[r]^-{\theta_{BC} \times \mathrm{id}} \ar[d]_-{\theta_{BC} \times \theta_{AB}} &
    T_B(C \times I_{BC}) \times I_{AB} \ar[dl]^-{\mathrm{id} \times \theta_{AB}}\ar[r]^-{(t,\sigma) \mapsto \dbr{t}^{\alg[I_{AB}]}(\sigma)} &
    T_{A}(C \times I_{BC} \times I_{AB}) \ar[d]^-{T_A(C \times \theta_{BC} \times \theta_{AB})} \\
    S \ar[rr]_(0.55){\chi} & &
    T_A(C \times S)
  }
\end{equation*}
commutes. Clearly the left triangle commutes, so we need only check
the same for the right square. From~\eqref{eq:25}, the upper composite
$f \colon T_B(C \times I_{BC}) \times I_{AB} \rightarrow T_A(C \times
S)$ around this square is the map defined recursively by
\begin{equation*}
\begin{aligned}
  f\bigl((c, \tau),\sigma\bigr) & = (c, \theta_{BC}(\tau), \theta_{AB}(\sigma)) && \text{for $(c, \tau) \in C \times I_{BC}$, $\sigma \in I_{AB}$}\\
  f\bigl(\mathsf{read}(t),\sigma\bigr) &=   \theta_{AB}(\sigma)\bigl(\lambda (b, \sigma').\, f(t_b,\sigma')\bigr) && \text{for $t \in T_B(C \times I_{BC})^B$, $\sigma \in I_{AB}$.}
\end{aligned}
\end{equation*}
while by~\eqref{eq:20}, the clauses
defining $\chi$ can be rewritten as
\begin{equation*}
\begin{aligned}
  \chi\bigl((c,\tau), u) & = (c, \theta_{BC}(\tau), u) && \text{for $(c,\tau) \in C \times I_{BC}$, $u \in T_A(B \times I_{AB})$}\\
  \chi\bigl(\mathsf{read}(t), u\bigr) & = u(\lambda (b,\sigma).\, \chi(t_{b}, \theta_{AB}(\sigma))) && \text{for $t \in T_B(C \times I_{BC})^B$, $u \in T_A(B \times I_{AB})$.}
\end{aligned}
\end{equation*}
Comparing these two formulae, it now follows by structural induction
on $t$ that $f(t, \sigma) = \chi(t, \theta_{AB}(\sigma))$ for all
$(t, \sigma) \in T_{B}(C \times I_{BC}) \times I_{AB}$, as desired.
\end{proof}

We may now deduce, as in~\cite[\S 4.2]{Hancock2009Representations},
that composition of stream processsors corresponds to composition of
the underlying continuous functions. Much as in
Proposition~\ref{prop:spr}, the proof in our setting is trivial.

\begin{prop}
  The following diagram commutes for all $A,B,C$:
  \begin{equation}\label{eq:21}
    \cd{
      I_{BC} \times I_{AB} \ar[d]_-{\mathsf{reflect} \times \mathsf{reflect}} \ar[r]^-{\mathsf{comp}} & I_{AC} \ar[d]^-{\mathsf{reflect}} \\
      \mathsf{Top}(B^\mathbb{N}, C^\mathbb{N}) \times 
      \mathsf{Top}(A^\mathbb{N}, B^\mathbb{N}) \ar[r]^-{\circ} & \mathsf{Top}(A^\mathbb{N}, C^\mathbb{N})
    }
  \end{equation}
\end{prop}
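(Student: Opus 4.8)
The plan is to deduce the proposition from Proposition~\ref{prop:spr}, using the fact that $\mathsf{comp}$, being a map of residual comodels, preserves traces. First I would isolate a general lemma: any map of $\mathbb{T}_A$-residual $\mathbb{T}_B$-comodels $f \colon \alg[S] \to \alg[U]$ satisfies $\mathsf{tr}^{\alg[U]} \circ f = \mathsf{tr}^{\alg[S]}$ as functions $S \to \cat{Top}(A^\mathbb{N}, B^\mathbb{N})$. To prove it, note that by Remark~\ref{rk:4} tensoring with the final topological comodel $\alg[A^\mathbb{N}]$ is functorial, carrying $f$ to a map of topological $\mathbb{T}_B$-comodels $f \cdot \alg[A^\mathbb{N}] \colon \alg[S] \cdot \alg[A^\mathbb{N}] \to \alg[U] \cdot \alg[A^\mathbb{N}]$ (with underlying function $f \times \mathrm{id}$); postcomposing with the unique map $\alg[U] \cdot \alg[A^\mathbb{N}] \to \alg[B^\mathbb{N}]$ and using finality of $\alg[B^\mathbb{N}]$ qua topological $\mathbb{T}_B$-comodel forces the composite to be the unique map $\alg[S] \cdot \alg[A^\mathbb{N}] \to \alg[B^\mathbb{N}]$, and partial evaluation at $s \in S$ then gives $\mathsf{tr}^{\alg[U]}(f(s)) = \mathsf{tr}^{\alg[S]}(s)$.

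Granting the lemma, the proposition is immediate. Recall from Definition~\ref{def:9} that $\mathsf{reflect}$ is the trace function of the relevant final residual comodel, and from Definition~\ref{def:lazy} that $\mathsf{comp} \colon \alg[I_{BC}] \cdot \alg[I_{AB}] \to \alg[I_{AC}]$ is a map of $\mathbb{T}_A$-residual $\mathbb{T}_C$-comodels. Applying the lemma (with $B$ there replaced by $C$) to $\mathsf{comp}$, then Proposition~\ref{prop:spr}, we get for all $(\tau, \sigma) \in I_{BC} \times I_{AB}$:
\begin{equation*}
  \mathsf{reflect}\bigl(\mathsf{comp}(\tau, \sigma)\bigr) = \mathsf{tr}^{\alg[I_{BC}] \cdot \alg[I_{AB}]}(\tau, \sigma) = \mathsf{tr}^{\alg[I_{BC}]}(\tau) \circ \mathsf{tr}^{\alg[I_{AB}]}(\sigma) = \mathsf{reflect}(\tau) \circ \mathsf{reflect}(\sigma)\rlap{ ,}
\end{equation*}
which is exactly the commutativity of~\eqref{eq:21}. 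Equivalently, and avoiding the lemma, one can run the whole argument diagrammatically in the style of the proof of Proposition~\ref{prop:spr}: the two maps of $\mathbb{T}_C$-comodels $\alg[I_{BC}] \cdot \alg[I_{AB}] \cdot \alg[A^\mathbb{N}] \to \alg[C^\mathbb{N}]$, one factoring through $\mathsf{comp} \cdot \alg[A^\mathbb{N}]$ and $\alg[I_{AC}] \cdot \alg[A^\mathbb{N}]$ and the other through $\alg[I_{BC}] \cdot \alg[B^\mathbb{N}]$, agree by finality of $\alg[C^\mathbb{N}]$, and partial evaluation at $(\tau, \sigma)$ recovers~\eqref{eq:21}.

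The one point that needs genuine care — and the only real obstacle — is the tensor-product bookkeeping underlying the lemma: that $f \cdot \alg[A^\mathbb{N}]$ is a map of $\mathbb{T}_B$-comodels with the expected underlying function, which is exactly Remark~\ref{rk:4} applied with $\C = \cat{Top}$, together with the associativity isomorphism $(\alg[I_{BC}] \cdot \alg[I_{AB}]) \cdot \alg[A^\mathbb{N}] \cong \alg[I_{BC}] \cdot (\alg[I_{AB}] \cdot \alg[A^\mathbb{N}])$, which was already invoked tacitly in the proof of Proposition~\ref{prop:spr}. Everything beyond that is a routine appeal to the finality of the final (topological) comodels, just as in the preceding propositions.
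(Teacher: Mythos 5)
Your proposal is correct and is essentially the paper's argument: the paper proves the result in one stroke by noting that the square of topological $\mathbb{T}_C$-comodels $\alg[I_{BC}] \cdot \alg[I_{AB}] \cdot \alg[A^\mathbb{N}] \to \alg[C^\mathbb{N}]$ commutes by finality and then currying — which is exactly the "equivalently, avoiding the lemma" version in your second paragraph. Your primary route merely refactors the same finality argument into a small trace-preservation lemma for maps of residual comodels plus Proposition~\ref{prop:spr}, which is a harmless (and arguably tidier) decomposition.
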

\begin{proof}
  Consider the diagram of topological $\mathbb{T}_C$-comodels and
  comodel homomorphisms:
  \begin{equation*}
    \cd[@+1em]{
      \alg[I_{BC}] \cdot \alg[I_{AB}] \cdot \alg[A^\mathbb{N}] \ar[d]_-{\alg[I_{BC}] \cdot !} \ar[r]^-{! \cdot \alg[A^\mathbb{N}]} & \alg[I_{AC}] \cdot \alg[A^\mathbb{N}] \ar[d]^-{!} \\
      \alg[I_{BC}] \cdot \alg[B^\mathbb{N}] \ar[r]^-{!} & \alg[C^\mathbb{N}]
    }
  \end{equation*}
  where each ``$!$'' denotes a unique map to a final object. 
  Since $\alg[C^\mathbb{N}]$ is final, this diagram commutes, and
  currying around the two sides yields the corresponding two sides
  of~\eqref{eq:21}, which thus also commutes.
\end{proof}

Before continuing, let us note that other kinds of residual comodel
tensor product are also interesting in this context. For example, if
we are given a $\mathbb{P}_f^+$-residual $\mathbb{T}_A$-comodel
$\alg[S]$ (i.e., a labelled transition system with label-set $A$)
together with a $\mathbb{T}_A$-residual $\mathbb{T}_B$-comodel
$\alg[P]$, then we can tensor them together to get a
$\mathbb{P}_f^+$-residual $\mathbb{T}_B$-comodel
$\alg[P] \cdot \alg[S]$, i.e., a labelled transition system with
label-set $B$; and given a state $s \in \alg[S]$ and a state
$p \in \alg[P]$, we have the state $(p,s) \in \alg[P] \cdot \alg[S]$
in which the transition system $\alg[S]$ produces a stream of
$A$-labels starting from state $s$, and feeds them into the stream
processor $\alg[P]$ starting from state $p$ in order to produce a
stream of $B$-values. Just as before, we can calculate explicitly the
state machine $\alg[P] \cdot \alg[S]$, and see that it is \emph{lazy}
in the sense that $\alg[P]$ requests only the minimal possible number
of $A$-tokens from the transition system $\alg[S]$ in order to produce
each output $B$-token.

\section{Extensional stream processors as a final bimodel}
\label{sec:extens-stre-proc}

In Section~\ref{sec:intens-stre-proc}, we characterised the set of
intensional stream processors as a final $\mathbb{T}_A$-residual
$\mathbb{T}_B$-comodel. In this section, we give the main result of
the paper, characterising the set of extensional stream processors
$\cat{Top}(A^\mathbb{N}, B^\mathbb{N})$ as a final
\emph{bimodel}~\cite{Freyd1966Algebra, Tall1970Representable,
  Bergman1996Co-groups} for $\mathbb{T}_A$ and $\mathbb{T}_B$.

\begin{defi}[Bimodel]
  Let $\mathbb{T}$ and $\mathbb{R}$ be theories. An
  \emph{$\mathbb{R}$-$\mathbb{T}$-bimodel} $\alg[K]$ is an
  $\mathbb{R}$-model $(\alg[K], \dbr{\thg}_{\alg[K]})$ endowed with
  $\mathbb{T}$-comodel structure $\dbr{\thg}^{\alg[K]}$ in the category
  $\cat{Mod}(\mathbb{R})$ of $\mathbb{R}$-models.
\end{defi}

The main difficulty in working with $\mathbb{R}$-$\mathbb{T}$-bimodels
is handling copowers in $\cat{Mod}(\mathbb{R})$. A simple case is that
of \emph{free} $\mathbb{R}$-models: a copower of free models is free,
and so we have canonical isomorphisms
$B \cdot \alg[R](V) \cong \alg[R](B \times V)$, which for convenience,
we will assume are in fact \emph{identities}, i.e., that the chosen
copower $B \cdot \alg[R](V)$ is $\alg[R](B \times V)$. The
$\mathbb{R}$-$\mathbb{T}$-bimodels with free underlying
$\mathbb{R}$-model are easy to identify: they correspond precisely to
$\mathbb{R}$-residual $\mathbb{T}$-comodels, where the
$\mathbb{R}$-$\mathbb{T}$-bimodel $\alg[R](\alg[S])$ corresponding to
the $\mathbb{R}$-residual $\mathbb{T}$-comodel $\alg[S]$ has
underlying $\mathbb{R}$-model $\alg[R](S)$ and co-operations
$\dbr{\sigma}^{\alg[R](\alg[S])} =
\bigl(\dbr{\sigma}^{\alg[S]}\bigr)^\dagger \colon \alg[R](S)
\rightarrow \alg[R](\abs{\sigma} \times S)$, where $(\thg)^\dagger$ is
the Kleisli extension operation of Lemma~\ref{lem:4}.

\looseness=-1
To understand what we gain by looking at bimodels with non-free
underlying model, it is helpful to think in terms of quotients by
bisimulations. If $\alg[S]$ is an $\mathbb{R}$-residual
$\mathbb{T}$-comodel, then we could define
(cf.~\cite[Definition~5.2]{Pattinson2015Sound}) a \emph{bisimulation}
on $\alg[S]$ to be an equivalence relation $\sim$ on $S$ such that
each co-operation $\dbr{\sigma}^{\alg[S]}$ sends $\sim$-related states
to $\approx$-related computations in $R(\abs{\sigma} \times S)$, where
$\approx$ is the congruence generated by $(i, s) \approx (i, s')$
whenever $s \sim s'$. The definition ensures that the residual comodel
structure descends to the quotient set $S \quot \sim$; however, this
only gives the possibility of identifying operationally equivalent
\emph{states}, and not operationally equivalent \emph{computations}
over states. The following more generous definition rectifies this.

\begin{defi}[$\mathbb{R}$-bisimulation]
  Let $\mathbb{R}$ and $\mathbb{T}$ be theories. For any
  $\mathbb{R}$-congruence $\sim$ on the free model $\alg[R](V)$ and
  any set $B$, the congruence $\sim_{B}$ on
  $\alg[R](B \times V)$ is that generated by
  \begin{equation*}
    t \sim u \text{ in } R(V) \qquad \implies \qquad
    t(\lambda s. (b,s)) \sim_{B} u(\lambda s. (b,s)) \text{ for all $b \in B$.}
  \end{equation*}
  If $\alg[S]$ is an $\mathbb{R}$-residual $\mathbb{T}$-comodel, then
  a congruence on $\alg[R](S)$ is an \emph{$\mathbb{R}$-bisimulation}
  if the co-operations
  $\dbr{\sigma}^{\alg[R](\alg[S])}  =
  \bigl(\dbr{\sigma}^{\alg[S]}\bigr)^\dagger \colon \alg[R](S) \rightarrow
  \alg[R](\abs \sigma \times S)$ of the associated bimodel send
  $\sim$-congruent terms to $\sim_{\abs{\sigma}}$-congruent terms.
\end{defi}
\begin{lem}
  \label{lem:2}
  Let $\alg[S]$ be an $\mathbb{R}$-residual $\mathbb{T}$-comodel and
  $\sim$ an $\mathbb{R}$-bisimulation on $\alg[R](S)$. There is a
  unique structure of $\mathbb{R}$-$\mathbb{T}$-bimodel on the
  quotient $\mathbb{R}$-model $\alg[K] = \alg[R](S) \quot \sim$ for
  which the the quotient map
  $q \colon \alg[R](S) \twoheadrightarrow \alg[K]$ becomes a map of
  bimodels $\alg[R](\alg[S]) \twoheadrightarrow \alg[K]$.
\end{lem}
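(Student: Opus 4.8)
The plan is to transport the $\mathbb{T}$-comodel structure on $\alg[R](S)$ in $\cat{Mod}(\mathbb{R})$ across the quotient map $q$, using the universal property of $q$ as a coequaliser of $\mathbb{R}$-models. First I would observe that, since $\alg[K] = \alg[R](S)\quot\sim$ is by definition the quotient $\mathbb{R}$-model, the map $q \colon \alg[R](S) \twoheadrightarrow \alg[K]$ is the coequaliser in $\cat{Mod}(\mathbb{R})$ of the pair of projections from the congruence $\sim$, viewed as a submodel of $\alg[R](S) \times \alg[R](S)$. To put a $\mathbb{T}$-comodel structure on $\alg[K]$ amounts, by Definition~\ref{def:3}, to giving for each operation symbol $\sigma$ of $\mathbb{T}$ a co-operation $\dbr{\sigma}^{\alg[K]} \colon \alg[K] \rightarrow \abs{\sigma}\cdot\alg[K]$ in $\cat{Mod}(\mathbb{R})$ such that the $\mathbb{T}$-equations hold among the derived co-operations. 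I would produce each $\dbr{\sigma}^{\alg[K]}$ by composing $\dbr{\sigma}^{\alg[R](\alg[S])}$ with the appropriate quotient map and factoring through $q$.

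The key steps, in order, are as follows. (1) For each $\sigma$, consider the composite $\alg[R](S) \xrightarrow{\dbr{\sigma}^{\alg[R](\alg[S])}} \alg[R](\abs\sigma\times S) \xrightarrow{\abs\sigma\cdot q} \abs\sigma \cdot \alg[K]$, where the second map is the copower in $\cat{Mod}(\mathbb{R})$ of $q$ (concretely, $\abs\sigma\cdot q$ is the quotient of $\alg[R](\abs\sigma\times S)=\abs\sigma\cdot\alg[R](S)$ by the congruence $\sim_{\abs\sigma}$, using our convention that $\abs\sigma\cdot\alg[R](S)=\alg[R](\abs\sigma\times S)$). (2) Use the defining property of an $\mathbb{R}$-bisimulation --- that $\dbr{\sigma}^{\alg[R](\alg[S])}$ carries $\sim$-congruent terms to $\sim_{\abs\sigma}$-congruent terms --- to conclude that this composite coequalises the pair defining $q$, hence factors uniquely as $\dbr{\sigma}^{\alg[K]}\circ q$ for a (necessarily unique) $\mathbb{R}$-model map $\dbr{\sigma}^{\alg[K]}\colon\alg[K]\rightarrow\abs\sigma\cdot\alg[K]$; this uniqueness is exactly what forces the bimodel structure to be unique once we insist $q$ be a bimodel map. (3) Check that the derived co-operations $\dbr{t}^{\alg[K]}$ for $t\in T(V)$ satisfy $\dbr{t}^{\alg[K]}\circ q = (\,\text{the quotient of}\ \dbr{t}^{\alg[R](\alg[S])})$; this follows by a routine induction on $t$ using the recursive clauses for derived co-operations and the fact that $q$ (being epi) reflects the necessary equalities, together with compatibility of the quotients $\sim_V$ under substitution. (4) Deduce that the $\mathbb{T}$-equations hold for $\dbr{\thg}^{\alg[K]}$: for each equation $t=u$ of $\mathbb{T}$ we have $\dbr{t}^{\alg[R](\alg[S])}=\dbr{u}^{\alg[R](\alg[S])}$ since $\alg[R](\alg[S])$ is a bimodel, hence $\dbr{t}^{\alg[K]}\circ q=\dbr{u}^{\alg[K]}\circ q$, and since $q$ is epi in $\cat{Mod}(\mathbb{R})$ we get $\dbr{t}^{\alg[K]}=\dbr{u}^{\alg[K]}$. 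This makes $\alg[K]$ an $\mathbb{R}$-$\mathbb{T}$-bimodel and $q$ a bimodel map, and the uniqueness from step~(2) finishes the argument.

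The main obstacle I expect is step~(1)--(2): making precise that the copower $\abs\sigma\cdot q$ in $\cat{Mod}(\mathbb{R})$ really is the quotient of $\alg[R](\abs\sigma\times S)$ by $\sim_{\abs\sigma}$, and hence that the $\mathbb{R}$-bisimulation condition says exactly that the outer composite in step~(1) descends along $q$. This rests on the observation that copowers of quotients are computed as quotients --- more precisely, that the functor $\abs\sigma\cdot(\thg)\colon\cat{Mod}(\mathbb{R})\rightarrow\cat{Mod}(\mathbb{R})$, being a left adjoint, preserves the regular epimorphism $q$ and its kernel pair --- so that the congruence generated on $\abs\sigma\cdot\alg[R](S)$ by pushing $\sim$ forward is precisely $\sim_{\abs\sigma}$ as defined just above. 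Once this identification is in hand, everything else is bookkeeping: the factorisations exist and are unique because $q$ is a regular (in particular, strong) epimorphism in $\cat{Mod}(\mathbb{R})$, and the verification of the $\mathbb{T}$-equations reduces, via epi-cancellation, to the corresponding equations in the free bimodel $\alg[R](\alg[S])$, which hold by Definition~\ref{def:3} applied in $\cat{Mod}(\mathbb{R})$.
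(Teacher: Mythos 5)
Your proposal is correct and follows essentially the same route as the paper's proof: identify the copower $\abs\sigma\cdot\alg[K]$ with the quotient $\alg[R](\abs\sigma\times S)\quot\sim_{\abs\sigma}$ (the paper phrases this as ``$\alg[R](B\times S)\quot\sim_B$ is a presentation of the copower $B\cdot\alg[K]$''), use the $\mathbb{R}$-bisimulation condition to descend the basic co-operations along $q$, observe that the derived co-operations then descend as well so that the $\mathbb{T}$-equations transfer, and get uniqueness from $q$ being epi. One cosmetic remark: what you need from the left adjoint $\abs\sigma\cdot(\thg)$ is that it preserves the coequaliser presentation of $\alg[K]$ (left adjoints need not preserve kernel pairs), but this does not affect the argument since the congruence generated by the image of $\sim$ is $\sim_{\abs\sigma}$ by definition.
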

\begin{proof}
  If $\alg[R](S) \quot \mathord{\sim} = \alg[K]$ then
  $\alg[R](B \times S) \quot \sim_B$ is a presentation of the copower
  $B \cdot \alg[K]$. So the assumption that $\sim$ is an
  $\mathbb{R}$-bisimulation ensures that each basic co-operation of
  $\alg[R](\alg[S])$ descends to a co-operation on $\alg[K]$, as to the
  left in:
  \begin{equation*}
    \cd[@-0.5em@C+0.5em]{
      \alg[R](S) \ar[r]^-{\dbr{\sigma}^{\alg[R](\alg[S])}} \ar@{->>}[d]_-{q} &
      \alg[R](\abs \sigma \times S) \ar@{->>}[d]^-{q_{\abs \sigma}} & & 
      \alg[R](S) \ar[r]^-{\dbr{t}^{\alg[R](\alg[S])}} \ar@{->>}[d]_-{q} &
      \alg[R](V \times S) \ar@{->>}[d]^-{q_V} \\
      \alg[K] \ar@{-->}[r]^-{\dbr{\sigma}^{\alg[K]}} & \abs{\sigma} \cdot \alg[K] & & 
      \alg[K] \ar@{-->}[r]^-{\dbr{t}^{\alg[K]}} & V \cdot \alg[K]\rlap{ .}
    }
  \end{equation*}
  Since $\alg[R](\abs{\sigma} \times S)$ is the copower
  $\abs \sigma \cdot \alg[R](S)$, and the quotient map
  $q_{\abs \sigma}$ is the copower $\abs \sigma \cdot q$, it follows
  that the \emph{derived} co-operations of $\alg[R](\alg[S])$ descend
  to the corresponding \emph{derived} co-operations on $\alg[K]$, as
  to the right above; whence the satisfaction of the
  $\mathbb{T}$-comodel equations for $\alg[R](\alg[S])$ implies the
  corresponding satisfaction for $\alg[K]$. So $\alg[K]$ is an
  $\mathbb{R}$-$\mathbb{T}$-bimodel, and clearly this is the
  \emph{unique} bimodel structure making $q$ into a bimodel
  homomorphism.
\end{proof}

We can use this construction to explain how the passage from the final
$\mathbb{T}_A$-residual $\mathbb{T}_B$-comodel to the final
$\mathbb{T}_A$-$\mathbb{T}_B$-bimodel will collapse the intensionality
we saw in Example~\ref{ex:3}.

\begin{exa}
  Consider the two $\mathbb{T}_A$-residual $\mathbb{T}_B$-comodels
  $\alg[S]_1$ and $\alg[S]_2$ of Example~\ref{ex:3}. While there is
  clearly no scope for quotienting by a bisimulation on the set of
  states $\{\ast\}$, we \emph{can} non-trivially quotient each by a
  $\mathbb{T}_A$-bisimulation on $\alg[T]_A(\ast)$: namely the
  $\mathbb{T}_A$-congruence on $\alg[T]_A(\ast)$ generated by
  $\ast \sim \mathsf{read}(\lambda a.\, \ast)$. It is easy to see that
  this is a $\mathbb{T}_A$-bisimulation for both $\alg[S]_1$ and
  $\alg[S]_2$, and so we obtain quotient
  $\mathbb{T}_A$-$\mathbb{T}_B$-bimodels
  $\alg[T]_A(\alg[S]_1) \quot \sim$ and
  $\alg[T]_A(\alg[S]_2) \quot \sim$. In fact, these are visibly the
  \emph{same} bimodel $\alg[K]$, with underlying $\mathbb{T}_A$-model
  the final model $\{\ast\}$, and with $\mathbb{T}_B$-comodel
  structure
  $\dbr{\mathsf{read}}^{\alg[K]} \colon \alg[K] \rightarrow B \cdot
  \alg[K]$ given by the $b$th coproduct injection. So we have a cospan
  of bimodels $\alg[T]_A(\alg[S]_1) \twoheadrightarrow \alg[K]
  \twoheadleftarrow \alg[T]_A(\alg[S]_2)$, which in particular implies
  that the states $\ast$ of $\alg[T]_A(\alg[S]_1)$ and
  $\alg[T]_A(\alg[S]_2)$ must be identified in a \emph{final}
  $\mathbb{T}_A$-$\mathbb{T}_B$-bimodel.
\end{exa}

This example provides supporting evidence for the main theorem we
shall now prove: that the set $\cat{Top}(A^\mathbb{N}, B^\mathbb{N})$
of extensional stream processors is a final
$\mathbb{T}_A$-$\mathbb{T}_B$-bimodel. Before giving this, let us
mention some related results:

\begin{prop}
  For any set $B$, the final $\mathbb{P}_f^+$-$\mathbb{T}_B$-bimodel
  is given by the set of topologically closed subsets of
  $B^\mathbb{N}$, with the $\mathbb{P}_f^+$-model structure given by
  binary union, and with $\mathbb{T}_B$-comodel structure given as
  in~\cite[Proposition~71]{Garner2018Hypernormalisation}.
\end{prop}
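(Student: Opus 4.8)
In the language of this paper the statement is~\cite[\S7]{Garner2018Hypernormalisation}, and my plan is to recover it by combining Lemma~\ref{lem:2} with the classical trace theory of non-terminating finitely branching labelled transition systems. Write $\mathcal{K}$ for the purported final bimodel: its underlying $\mathbb{P}_f^+$-model is the semilattice of nonempty closed subsets of $B^{\mathbb{N}}$ under union, and its co-operation sends a closed set $C$ to the pair $\bigl(\pi_0(C),\,(C_b)_{b \in \pi_0(C)}\bigr)$, where $\pi_0(C) = \{x_0 : \vec x \in C\}$ and $C_b = \{\partial\vec x : \vec x \in C,\ x_0 = b\}$, valued in the copower $B \cdot \mathcal{K}$ in $\cat{Mod}(\mathbb{P}_f^+)$---which concretely is the semilattice of pairs $(U,(C_b)_{b \in U})$ with $U \subseteq B$ finite nonempty and each $C_b \in \mathcal{K}$, joined by union of index sets and pointwise union. (For this co-operation to land in the copower one needs $\pi_0(C)$ finite; when $B$ is infinite one accordingly restricts $\mathcal{K}$ to the closed sets all of whose iterated coordinate-projections are finite, i.e.\ the compact ones. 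That $\mathcal{K}$ so described is a $\mathbb{P}_f^+$-$\mathbb{T}_B$-bimodel---there being no $\mathbb{T}_B$-equations to check---is \cite[Proposition~71]{Garner2018Hypernormalisation}, which I would take as given.) The strategy is then: (i) present an arbitrary bimodel as a bisimulation quotient of a free bimodel on an LTS; (ii) show that free bimodel has a unique map to $\mathcal{K}$; (iii) check that map coequalises the bisimulation, and conclude.

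For (i): given a $\mathbb{P}_f^+$-$\mathbb{T}_B$-bimodel $\alg[L]$, take $S$ to be its underlying set and $q \colon \alg[R](S) \twoheadrightarrow \alg[L]$ the canonical quotient of $\mathbb{P}_f^+$-models (here $\mathbb{R} = \mathbb{P}_f^+$). Since the copower functor preserves surjections, $B \cdot q$ is surjective, so one may choose for each $s$ a preimage in $P_f^+(B \times S)$ of $\dbr{\mathsf{read}}^{\alg[L]}(q(\{s\})) \in B \cdot \alg[L]$; this equips $S$ with a $\mathbb{P}_f^+$-residual $\mathbb{T}_B$-comodel structure $\alg[S]$ (a non-terminating finitely branching $B$-labelled transition system) for which $\dbr{\mathsf{read}}^{\alg[L]} \circ q = (B \cdot q) \circ \dbr{\mathsf{read}}^{\alg[R](\alg[S])}$. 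Hence $q$ is a map of bimodels $\alg[R](\alg[S]) \twoheadrightarrow \alg[L]$, $\ker q$ is an $\mathbb{R}$-bisimulation, and by Lemma~\ref{lem:2}, $\alg[L] \cong \alg[R](\alg[S]) \quot \ker q$. For (ii): by freeness of its underlying $\mathbb{P}_f^+$-model, a bimodel map $\alg[R](\alg[S]) \to \mathcal{K}$ is a function $f \colon S \to \mathcal{K}$, and matching $\dbr{\mathsf{read}}$ against the explicit description of $B \cdot \mathcal{K}$ shows $f$ is a bimodel map exactly when $f(s) = \bigcup_{i=1}^{n} \{b_i\vec x : \vec x \in f(s_i)\}$ whenever $\dbr{\mathsf{read}}^{\alg[S]}(s) = \{(b_1,s_1),\dots,(b_n,s_n)\}$. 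This is precisely the coinductive equation for the set $\mathsf{tr}(s) \subseteq B^{\mathbb{N}}$ of infinite traces of the state $s$ of the LTS $\alg[S]$.

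So (ii) reduces to showing this equation has a unique solution valued in nonempty closed sets. That $s \mapsto \mathsf{tr}(s)$ is one is routine, each $\mathsf{tr}(s)$ being nonempty (non-termination) and compact (finite branching, via König's lemma). For uniqueness, any solution $f$ satisfies $f(s) \subseteq \mathsf{tr}(s)$ by reading off a run from an element of $f(s)$; conversely, unwinding the equation $k$ times shows that $f(s)$ meets every basic clopen neighbourhood of a prescribed $\vec b \in \mathsf{tr}(s)$, whereupon a decreasing-intersection argument---legitimate since $f(s)$ is closed in the compact set $\mathsf{tr}(s)$---forces $\vec b \in f(s)$. I expect this uniqueness step (equivalently: for non-terminating finitely branching systems the trace set is the \emph{unique} fixed point of the trace operator), together with the bookkeeping over ``closed'' versus ``compact'' when $B$ is infinite, to be the only non-formal part of the argument. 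Finally, for (iii): the unique map $\alg[R](\alg[S]) \to \mathcal{K}$ is $s \mapsto \mathsf{tr}(s)$, and it coequalises any $\mathbb{R}$-bisimulation $\sim$ on $\alg[R](\alg[S])$---this is the easy half of ``bisimulation implies trace equivalence''~\cite{Glabbeek1990Linear}: an induction on $n$, using that $B \cdot \alg[R](S)$ decomposes $\sim_B$-equivalence coordinatewise, shows that $\sim$-related terms have equal sets of length-$n$ finite traces, and since each $\mathsf{tr}(s)$ is determined by its finite traces (supports being finite), $\sim$-related terms have equal trace sets. Hence the map descends along the epimorphism $q$ to a necessarily unique map of bimodels $\alg[L] \to \mathcal{K}$. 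So $\mathcal{K}$ receives exactly one bimodel homomorphism from every $\mathbb{P}_f^+$-$\mathbb{T}_B$-bimodel: it is the final one, with the structure described above.
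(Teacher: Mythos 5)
The paper offers no proof of this proposition---it is stated as an import from \cite[\S 7]{Garner2018Hypernormalisation}---so your argument is necessarily a different route; it is, as far as I can check, a correct one. It also differs in kind from the strategy the paper uses for the analogous Theorem~\ref{thm:2}, which constructs the adjunction~\eqref{eq:9}, proves that the right adjoint preserves coproducts, lifts to categories of comodels, and invokes preservation of terminal objects by right adjoints; the cited reference proceeds along broadly similar abstract lines. Your proof instead is elementary and coinductive: step (i) is precisely the $\mathbb{P}_f^+$-analogue of Lemma~\ref{lem:6}, and you are right that it is unproblematic here because $\mathbb{T}_B$ has no equations, so \emph{any} choice of preimages under the surjection $B \cdot q$ yields a residual comodel and the ``abstract hypernormalisation'' caveat attached to Lemma~\ref{lem:6} does not bite; steps (ii) and (iii)---uniqueness of nonempty-closed-set-valued solutions of the trace equation via K\"onig's lemma and a limit-point argument, and descent along an $\mathbb{R}$-bisimulation via equality of finite trace sets (using $\ker(B\cdot q) = (\ker q)_B$ from the proof of Lemma~\ref{lem:2})---are the standard arguments and are sound. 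What your route buys is a self-contained proof making visible \emph{why} the final bimodel computes trace equivalence; what it costs is that it leans on $\mathbb{T}_B$ being equation-free and does not generalise as the adjunction argument does.

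One point you relegate to a parenthesis deserves promotion: the restriction to \emph{nonempty} closed sets, and to \emph{compact} ones when $B$ is infinite, is not optional bookkeeping but a correction to the literal statement. The co-operation must take values in the copower $B \cdot \alg[K]$ in $\cat{Mod}(\mathbb{P}_f^+)$, whose elements have finite nonempty support in $B$; hence the co-operation $C \mapsto (\pi_0(C), (C_b)_b)$ is undefined at $C = \emptyset$, and for infinite $B$ a closed set with infinite first projection falls outside the copower altogether. The object your proof actually exhibits as final is the set of nonempty compact subsets of $B^\mathbb{N}$ (equivalently, nonempty closed subsets when $B$ is finite), which is consistent with the finality argument, since traces of non-terminating finitely branching systems are nonempty and compact. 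You should state the proposition in that corrected form rather than leaving the discrepancy implicit.
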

\begin{prop}
  For any set $B$, the final $\mathbb{D}$-$\mathbb{T}_B$-bimodel is
  given by the set of probability distributions on $B^\mathbb{N}$,
  with the $\mathbb{D}$-model structure given by the usual convex
  combination of probability distributions, and with
  $\mathbb{T}_B$-comodel structure given as
  in~\cite[Proposition~79]{Garner2018Hypernormalisation}.
\end{prop}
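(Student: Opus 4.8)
The plan is to deduce this from the study of (non-free) bimodels in \cite[\S7]{Garner2018Hypernormalisation}, recast in the language of the present section; indeed the statement is in essence \cite[Proposition~79]{Garner2018Hypernormalisation}. Write $\alg[F]$ for the putative final bimodel: its underlying $\mathbb{D}$-model is the set of probability distributions on $B^\mathbb{N}$ under convex combination $\mu +_r \nu$, and its $\mathbb{T}_B$-comodel structure $\dbr{\mathsf{read}}^{\alg[F]} \colon \alg[F] \to B \cdot \alg[F]$ reads off the first coordinate: using the description of the copower $B \cdot \alg[F]$ in $\cat{Mod}(\mathbb{D})$, it sends $\mu$ to $\sum_{b \in B} \lambda_b\,(b, \mu_b)$, where $\lambda_b = \mu\{W \in B^\mathbb{N} : W_0 = b\}$ and $\mu_b$ is the pushforward along $\partial$ of the conditional distribution $\mu(\thg \mid W_0 = b)$. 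The first thing to verify is that this makes $\alg[F]$ a $\mathbb{D}$-$\mathbb{T}_B$-bimodel at all; since $\mathbb{T}_B$ carries no equations, the only non-formal point is that $\dbr{\mathsf{read}}^{\alg[F]}$ is well defined and convex as a map into the copower --- which is exactly \cite[Proposition~79]{Garner2018Hypernormalisation}, and which also dictates the precise reading of ``probability distributions on $B^\mathbb{N}$'' (see below).

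For finality, take an arbitrary $\mathbb{D}$-$\mathbb{T}_B$-bimodel $\alg[K]$ with comodel structure $\rho \colon K \to B \cdot K$; unwinding the copower as above, $\rho(k) = \sum_{b} \lambda_b(k)\,(b, k_b)$ with $(\lambda_b(k))_{b}$ a finitely supported distribution on $B$ and each $k_b \in K$. Iterating $\rho$ --- just as traces were computed in Section~\ref{sec:runn-resid-runn}, now one stage at a time --- assigns to each $k \in K$ and each finite word $w = b_1 \cdots b_n \in B^\ast$ a scalar $p_k(w) \in [0,1]$, determined by $p_k(\epsilon) = 1$ and $p_k(bw') = \lambda_b(k)\cdot p_{k_b}(w')$. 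Since the weights of each $\rho(k')$ sum to $1$, the family $(p_k(w))_{w \in B^\ast}$ is a consistent family of finite-dimensional marginals, so by the Kolmogorov extension theorem it determines a unique probability distribution $\mu_k$ on $B^\mathbb{N}$ with $\mu_k\{W : w \text{ a prefix of } W\} = p_k(w)$. One then checks that $k \mapsto \mu_k$ is the unique bimodel homomorphism $\alg[K] \to \alg[F]$: it is convex because $\rho$ is convex and hence, by induction on $\abs w$, each $p_{(\thg)}(w) \colon K \to [0,1]$ is affine, so that $\mu_{k +_r k'} = \mu_k +_r \mu_{k'}$; it commutes with the $\mathbb{T}_B$-comodel structures by the very recursion defining $p_k$; and any bimodel homomorphism $f$ must satisfy $f(k)\{W : w \text{ a prefix of } W\} = p_k(w)$ by induction on $\abs w$ (using that $f$ preserves $\dbr{\mathsf{read}}$), whence $f(k) = \mu_k$ by uniqueness of the extension.

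I expect the two genuine obstacles to be precisely those already flagged for bimodels at the start of this section. First, making the copower $B \cdot \alg[K]$ in $\cat{Mod}(\mathbb{D})$ concrete enough that the decomposition $\rho(k) = \sum_b \lambda_b(k)\,(b, k_b)$ and the definition of $\dbr{\mathsf{read}}^{\alg[F]}$ are genuinely well defined and convexity-respecting --- this is the substance of \cite[\S7]{Garner2018Hypernormalisation}. Second, the measure-extension step, which also forces ``probability distributions on $B^\mathbb{N}$'' to be read as those all of whose finite-dimensional marginals are finitely supported (so that $\dbr{\mathsf{read}}^{\alg[F]}$ lands in the copower), this being automatic when $B$ is finite, in which case $\alg[F]$ is simply the space of Borel probability measures on the Cantor space $B^\mathbb{N}$. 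The preceding proposition for $\mathbb{P}_f^+$ is formally parallel, with ``consistent family of finite marginals / Kolmogorov extension'' replaced by ``prefix- and extension-closed subtree of $B^\ast$ / its set of infinite branches'', producing a closed subset of $B^\mathbb{N}$ in place of a measure (\cite[Proposition~71]{Garner2018Hypernormalisation}).
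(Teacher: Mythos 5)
The paper does not actually prove this proposition: it is stated as a quotation of \cite[\S 7, Proposition~79]{Garner2018Hypernormalisation}, so there is no in-paper argument to match yours against. What you have written is, in effect, a correct unfolding of that citation, and the route you take is the natural one: identify the structure map as first-marginal-plus-conditional, obtain finality by iterating $\rho$ to produce a consistent family of finite-dimensional marginals, extend to a measure on $B^\mathbb{N}$, and get uniqueness by induction on prefixes. Two points on which your sketch genuinely leans, and which you rightly flag as the real content, deserve emphasis. First, everything rests on the canonical-form description of the copower $B \cdot \boldsymbol{K}$ in $\cat{Mod}(\mathbb{D})$: that every element is \emph{uniquely} a formal convex combination $\sum_b \lambda_b\,(b,k_b)$ with $\lambda$ finitely supported and $k_b$ determined whenever $\lambda_b \neq 0$. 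Uniqueness of this representation for abstract convex sets (barycentric algebras) is a nontrivial fact --- it is exactly what makes $\lambda_b(k)$ and $k_b$ well defined, makes your affineness induction for $p_{(\thg)}(w)$ go through (note the reweighting $(k+_r k')_b = k_b +_s k'_b$ with $s = r\lambda_b(k)/(r\lambda_b(k)+r^\ast\lambda_b(k'))$), and makes the uniqueness clause of finality work; it is established in the cited reference and should be invoked explicitly rather than treated as notation. Second, your observation that for infinite $B$ one must read ``probability distributions on $B^\mathbb{N}$'' as those whose finite-dimensional marginals are finitely supported is correct and necessary: the copower is a filtered colimit of finite sub-copowers, so $\dbr{\mathsf{read}}$ cannot land in it otherwise, and this is automatic only when $B$ is finite. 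With those two inputs made explicit, your argument is complete and is essentially the proof the citation points to.
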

In both of these examples, the elements of the final bimodel can be
seen as \emph{traces} for states of the corresponding residual
comodels (which, we recall, are non-terminating labelled transition
systems and finitely branching probabilistic generative systems
respectively.) It is thus consistent for us to think of the final
$\mathbb{T}_A$-$\mathbb{T}_B$-bimodel
as providing an ``object of
traces'' for $A$-$B$-stream processors.

To show that this final bimodel can be identified with
$\cat{Top}(A^\mathbb{N}, B^\mathbb{N})$, we will first construct an
adjunction as to the left in
\begin{equation}\label{eq:9}
  \cd[@C+2em]{
    {\cat{Mod}(\mathbb{T}_A)} \ar@<-4.5pt>[r]_-{(\thg) \otimes \alg[A^\mathbb{N}]} \ar@{<-}@<4.5pt>[r]^-{\cat{Top}(\alg[A^\mathbb{N}], \thg)} \ar@{}[r]|-{\top} &
    {\cat{Top}} 
  } \qquad 
    \cd[@C+2em]{
    {\cat{Comod}(\mathbb{T}_B, \cat{Mod}(\mathbb{T}_A))} \ar@<-4.5pt>[r]_-{(\thg) \otimes \alg[A^\mathbb{N}]} \ar@{<-}@<4.5pt>[r]^-{\cat{Top}(\alg[A^\mathbb{N}], \thg)} \ar@{}[r]|-{\top} &
    {\cat{Comod}(\mathbb{T}_B, \cat{Top})} 
  }
\end{equation}
We then show that
\emph{both} directions of this adjunction preserve coproducts, so in
particular copowers; it will then follow that the adjunction to the
left lifts to one as to the right on $\mathbb{T}_B$-comodels. The
right adjoint of this lifted adjunction, like any right adjoint,
will preserve terminal objects, and so must send the final topological
$\mathbb{T}_B$-comodel $\alg[B^\mathbb{N}]$ to a final
$\mathbb{T}_A$-$\mathbb{T}_B$-bimodel, with underlying set
$\cat{Top}(A^\mathbb{N}, B^\mathbb{N})$.

To construct the adjunction to the left in~\eqref{eq:9} we apply a
standard result of category-theoretic universal
algebra~(cf.~\cite[Theorem~2]{Freyd1966Algebra}). For 
self-containedness we give a full proof.

\begin{prop}
  \label{prop:1}
  Let $\C$ be a category with copowers and $\alg[S]$ a
  $\mathbb{T}$-comodel in $\C$. For any object $C \in \C$, the hom-set
  $\C(S, C)$ bears a structure of $\mathbb{T}$-model $\C(\alg[S], C)$ with operations
  \begin{equation}\label{eq:16}
    \dbr{\sigma}_{\C(\alg[S], C)}(\lambda i.\, S \xrightarrow{f_i} C) = S \xrightarrow{\dbr{\sigma}^{\alg[S]}} \abs{\sigma} \cdot S \xrightarrow{\langle f_i\rangle_{i \in \abs{\sigma}}} C
  \end{equation}
  where $\langle f_i \rangle_{i \in \abs{\sigma}}$ is the copairing of
  the $f_i$'s. As $C$ varies, this assignment underlies a functor
  $\C(\alg[S], \thg) \colon \C \rightarrow \cat{Mod}(\mathbb{T})$.
  If $\C$ is cocomplete, this functor has a left adjoint
  $(\thg) \otimes \alg[S] \colon \cat{Mod}(\mathbb{T}) \rightarrow
  \C$.
\end{prop}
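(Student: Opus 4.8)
The plan is to verify in three stages that the stated formula defines a functor into $\cat{Mod}(\mathbb{T})$, and then to produce the left adjoint by a standard argument. First, I would check that \eqref{eq:16} really gives a $\mathbb{T}$-model structure on $\C(S,C)$. Since $\C(\thg, C) \colon \C^{\mathrm{op}} \to \cat{Set}$ sends copowers in $\C$ to powers in $\cat{Set}$ (the copower $\abs\sigma \cdot S$ has the universal property making $\C(\abs\sigma \cdot S, C) \cong \C(S,C)^{\abs\sigma}$ via copairing), the co-operations $\dbr{\sigma}^{\alg[S]} \colon S \to \abs\sigma \cdot S$ of the $\mathbb{T}$-comodel $\alg[S]$ are carried by $\C(\thg, C)$ to operations $\C(S,C)^{\abs\sigma} \to \C(S,C)$, and these are precisely the maps \eqref{eq:16}. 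The equations of $\mathbb{T}$ hold because $\alg[S]$ is a model of $\mathbb{T}$ in $\C^{\mathrm{op}}$: applying the contravariant functor $\C(\thg,C)$ turns the co-operation-level equations into equations among the derived operations of $\C(\alg[S],C)$. (Concretely, one shows by induction on $t \in \Sigma(V)$ that $\dbr t_{\C(\alg[S],C)}(\lambda v.\, f_v) = \langle f_v \rangle_{v} \circ \dbr t^{\alg[S]}$, where $\dbr t^{\alg[S]} \colon S \to V \cdot S$ is the derived co-operation; then $\dbr t^{\alg[S]} = \dbr u^{\alg[S]}$ for each equation $t=u$ of $\mathbb{T}$ immediately gives $\dbr t_{\C(\alg[S],C)} = \dbr u_{\C(\alg[S],C)}$.)

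Second, functoriality: given $g \colon C \to C'$ in $\C$, post-composition $g \circ (\thg) \colon \C(S,C) \to \C(S,C')$ is a $\mathbb{T}$-model homomorphism because $g \circ \langle f_i \rangle_i = \langle g \circ f_i \rangle_i$ by the universal property of the copower, so $g_\ast$ commutes with each $\dbr\sigma$; and $g_\ast$ is visibly functorial in $g$. This gives the functor $\C(\alg[S], \thg) \colon \C \to \cat{Mod}(\mathbb{T})$.

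Third, the adjoint. Suppose $\C$ is cocomplete. I would invoke the adjoint functor theorem (or the explicit tensor construction of Freyd): $\C(\alg[S], \thg)$ has a left adjoint provided it preserves limits and satisfies the solution-set condition. It preserves limits because the forgetful functor $\cat{Mod}(\mathbb{T}) \to \cat{Set}$ creates them and $\C(S,\thg) \colon \C \to \cat{Set}$ preserves limits (it is representable); so $\C(\alg[S],\thg)$ preserves any limit that exists in $\C$, in particular all small limits when $\C$ is complete. Rather than worrying about completeness hypotheses, the cleaner route — and the one I expect to be the main technical content — is to build $(\thg) \otimes \alg[S]$ directly as a reflexive coequalizer of copowers: for a $\mathbb{T}$-model $\alg[K]$ presented as a coequalizer of free models $\alg[T](\Sigma \times K) \rightrightarrows \alg[T](K) \to \alg[K]$, one sets $\alg[K] \otimes \alg[S]$ to be the corresponding coequalizer $\Sigma \times K \cdot S \rightrightarrows K \cdot S \to \alg[K] \otimes \alg[S]$ in $\C$, using that $\C(\thg, C)$ sends this colimit to the equalizer computing $\cat{Mod}(\mathbb{T})(\alg[K], \C(\alg[S], C))$. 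The hard part will be checking that this coequalizer is independent of the chosen presentation and that the resulting natural bijection $\C(\alg[K] \otimes \alg[S], C) \cong \cat{Mod}(\mathbb{T})(\alg[K], \C(\alg[S], C))$ is natural in both variables; this is a diagram-chase using the universal property of copowers together with the description of $\mathbb{T}$-model homomorphisms out of a coequalizer. Cocompleteness of $\C$ is exactly what guarantees these coequalizers and copowers exist.
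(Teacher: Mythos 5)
Your proposal is correct and, in substance, matches the paper's proof: the first part is verified exactly as you say, by observing that $\C(\thg,C)\colon\C^{\mathrm{op}}\to\cat{Set}$ carries copowers to powers and hence $\mathbb{T}$-comodels to $\mathbb{T}$-models, with functoriality by postcomposition; and the left adjoint is built, as you propose, as a coequaliser of copowers of $S$. The one place where the paper's organisation saves you real work is the step you flag as ``the hard part''. Rather than choosing a presentation of $\alg[K]$ by free models and then checking independence of the presentation and naturality, the paper directly unwinds what a $\mathbb{T}$-model homomorphism $\alpha\colon\alg\to\C(\alg[S],C)$ is, transposes it across the adjunction $(\thg)\cdot S\dashv\C(S,\thg)$ to a map $\bar\alpha\colon X\cdot S\to C$ coequalising, for each $\sigma$, an explicit parallel pair $X^{\abs\sigma}\cdot S\rightrightarrows X\cdot S$ (one leg built from $\dbr{\sigma}_{\alg}$, the other from $\dbr{\sigma}^{\alg[S]}$ and evaluation), and then defines $\alg\otimes\alg[S]$ as the joint coequaliser of these pairs. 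The adjunction bijection is then immediate from the universal property of that coequaliser, natural in $C$ by construction, and there is no presentation to vary. Two small points: your canonical relations object should be something like $\coprod_{\sigma}K^{\abs\sigma}$ rather than $\Sigma\times K$ (the latter is only right for unary signatures); and you were right to abandon the adjoint functor theorem route, since completeness of $\C$ is not among the hypotheses.
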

\begin{proof}
  For any $C \in \C$, the hom-functor
  $\C(\thg, C) \colon \C^\mathrm{op} \rightarrow \cat{Set}$ sends
  copowers in $\C$ to powers in $\cat{Set}$, and so sends
  $\mathbb{T}$-comodels in $\C$ to $\mathbb{T}$-models in $\cat{Set}$.
  In particular, the $\mathbb{T}$-model induced by $\alg[S] \in
  \cat{Comod}(\mathbb{T}, \C)$ is $\C(\alg[S], C)$ with the operations
  defined above. The functoriality of this assignment is clear, so it
  remains to exhibit the desired adjoint when $\C$ is cocomplete.

  To this end, note that a $\mathbb{T}$-model homomorphism $\alpha \colon \alg[X]
  \rightarrow \C(\alg[S], C)$ is equally a function $\alpha \colon X
  \rightarrow \C(S,C)$ such that, for all basic
  $\mathbb{T}$-operations $\sigma$ and all $\vec x \in X^{\abs
    \sigma}$, we have
  \begin{equation*}
    S \xrightarrow{\alpha(\dbr{\sigma}_{\alg}(\vec x))} C \qquad  =\qquad  S \xrightarrow{\dbr{\sigma}^{\alg[S]}} \abs{\sigma} \cdot S \xrightarrow{\langle \alpha(x_i) \rangle_{i \in \abs{\sigma}}} C\rlap{ .}
  \end{equation*}
  Transposing under $(\thg) \cdot S \dashv \C(S, \thg)
  \colon \C \rightarrow \cat{Set}$, this is equally to give a
  map $\bar \alpha \colon X \cdot S \rightarrow C$ in $\C$ such that,
  for each basic $\mathbb{T}$-operation $\sigma$, postcomposition with $\bar
  \alpha$ equalises the two maps
  \begin{equation*}
    X^{\abs{\sigma}} \cdot S \xrightarrow{\dbr{\sigma}_{\alg} \cdot C} X \cdot S \qquad \quad
    X^{\abs{\sigma}} \cdot S \xrightarrow{X^{\abs \sigma} \cdot \dbr{\sigma}^{\alg}} X^{\abs \sigma} \cdot (\abs \sigma \cdot S) \cong
    (X^{\abs \sigma} \times \abs \sigma) \cdot S \xrightarrow{\mathrm{ev} \cdot S} X \cdot S\text{ .}
  \end{equation*}
  Thus, defining $\alg \otimes \alg[S]$ to be the joint coequaliser of
  these parallel pairs as $\sigma$ varies across the basic
  $\mathbb{T}$-operations, we have bijections
  $\C(\alg \otimes \alg[S], C) \cong \cat{Mod}(\mathbb{T})(\alg,
  \C(\alg[S], C))$ natural in $C \in \C$, so that $\alg \otimes
  \alg[S]$ is the value at $\alg$
  of the desired left adjoint $(\thg) \otimes \alg[S]$.
\end{proof}

\begin{rem}
  \label{rk:2}
  Again, the final part of this result expresses an abstract fact:
  $\cat{Mod}(\mathbb{T})$ is the \emph{free cocomplete category
    containing a comodel of $\mathbb{T}$}. 
\end{rem}

In particular, we may apply the preceding result when $\C$ is the
cocomplete category $\cat{Top}$ and $\alg[S]$ is the final topological
$\mathbb{T}_A$-comodel $\alg[A^\mathbb{N}]$ to obtain an adjunction as
to the left in~\eqref{eq:9}. We now show that both directions of this
adjunction preserve coproducts, and so in particular copowers. Since
left adjoints always preserve colimits, there is only work to
do for the right adjoint
$\cat{Top}(\alg[A^\mathbb{N}], \thg) \colon \cat{Top} \rightarrow
\cat{Mod}(\mathbb{T}_A)$. First we spell out that, on objects, this
functor acts by taking a space $C$ to the set of continuous functions
$\cat{Top}(A^\mathbb{N}, C)$, under the $A$-ary magma structure
$\mathsf{split}$ that takes a family $(f_a : a \in A)$ of 
functions to the function $\mathsf{split}(\vec f)$ with
\begin{equation}\label{eq:10}
  \mathsf{split}(\vec f)(\vec a) = f_{a_0}(\partial \vec a)\rlap{ .}
\end{equation}
In other words, $\mathsf{split}(\vec f)$ consumes the first token
$a_0$ of its input and then continues as $f_{a_0}$ on the rest of its
input; note that $\mathsf{split}$ is in fact invertible, with
inverse given by the function
$\mathsf{split}^{-1}(f) = (f(a\thg) : a \in A)$. This describes the
action of 
$\cat{Top}(\alg[A^\mathbb{N}], \thg) \colon \cat{Top} \rightarrow
\cat{Mod}(\mathbb{T}_A)$ on objects; on morphisms, it simply acts by postcomposition. 

The following result is the main piece of serious work needed to
complete our result; it refines the topological arguments described
in~\cite[Theorem~2.1]{Hancock2009Representations}, and used there to
construct the reification function for intensional stream processors.

\begin{prop}
  \label{prop:4}
  The functor $\cat{Top}(\alg[A^\mathbb{N}], \thg) \colon \cat{Top} \rightarrow
  \cat{Mod}(\mathbb{T}_A)$ preserves coproducts.
\end{prop}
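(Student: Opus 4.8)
The plan is as follows. Since left adjoints preserve all colimits, it suffices to show that the right adjoint $G := \cat{Top}(\alg[A^\mathbb{N}], \thg)$ sends a coproduct of spaces $(C_j)_{j \in J}$ to the coproduct in $\cat{Mod}(\mathbb{T}_A)$; that is, that the canonical comparison homomorphism $\phi \colon \coprod_j G(C_j) \to G(\coprod_j C_j)$, the unique $\mathbb{T}_A$-model map whose composite with each coproduct injection is postcomposition with $C_j \hookrightarrow \coprod_j C_j$, is a bijection. I would first make the coproduct in $\cat{Mod}(\mathbb{T}_A)$ explicit: by a routine well-founded recursion, the coproduct $\coprod_j M_j$ of $A$-ary magmas $(M_j, \xi_j)$ has as underlying set the well-founded $A$-branching trees with leaves labelled in $\coprod_j \abs{M_j}$, modulo the congruence which collapses any node all $A$ of whose children are leaves lying in a single component $M_j$ to the leaf holding the value $\xi_j$ of that $A$-tuple; the magma operation is formation of a new root. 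Equivalently, each class has a unique \emph{normal-form} representative: a tree no internal node of which has its entire subtree labelled within a single component.

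The substance of the argument is to produce an explicit inverse $\psi$ to $\phi$, and this is where the topological content of~\cite[Theorem~2.1]{Hancock2009Representations} is used. Given a continuous $f \colon A^\mathbb{N} \to \coprod_j C_j$, the sets $U_j := f^{-1}(C_j)$ are clopen and partition $A^\mathbb{N}$, and continuity of $f$ into the coproduct is exactly the statement that every $\vec a$ has a finite prefix $w$ with cylinder $[w] \subseteq U_j$ for some (then necessarily unique) $j$. So the set $D \subseteq A^{<\omega}$ of \emph{minimal} such $w$ is a prefix-free bar; its downward closure $\hat D$ is therefore a well-founded $A$-branching tree whose leaves are precisely the elements of $D$. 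Labelling each leaf $w \in D$ by the pair $(j, f_w)$, where $[w] \subseteq U_j$ and $f_w \in G(C_j)$ is $f|_{[w]}$ transported along the canonical homeomorphism $[w] \cong A^\mathbb{N}$, yields an element $\psi(f)$ of $\coprod_j G(C_j)$; and $\hat D$ is automatically in normal form, since an internal node $w$ of $\hat D$ whose subtree lay entirely within one component would satisfy $[w] \subseteq U_j$, contradicting minimality of $D$.

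It then remains to check that $\phi$ and $\psi$ are mutually inverse. Unwinding $\phi$ on a normal-form tree $t$: given $\vec a$, one walks down $t$ guided by the successive tokens of $\vec a$ until reaching the leaf at the unique frontier-prefix $w$ of $\vec a$, say labelled $(j, g)$, and outputs $g(\partial^{|w|}\vec a)$ viewed in $C_j \subseteq \coprod_j C_j$; a direct computation from~\eqref{eq:10} shows this $\phi$ is a magma homomorphism, respects the collapsing congruence, and restricts to the coproduct injections, so it is indeed the comparison map. For $\phi \circ \psi = \mathrm{id}$, walking down $\hat D$ by $\vec a$ reaches the leaf $w \prec \vec a$ with $[w] \subseteq U_j$, and the output is $f(w \cdot \partial^{|w|}\vec a) = f(\vec a)$ in the $j$-th summand, i.e.\ $f(\vec a)$. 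For $\psi \circ \phi = \mathrm{id}$, the bar associated to $f = \phi(t)$ is the frontier of $t$ itself — no earlier splitting can occur precisely because $t$ is in normal form — and the recovered leaf labels agree, so $\psi(\phi(t)) = t$ on the nose. I expect the main obstacle to be pinning down the coproduct of $A$-ary magmas correctly when $A$ is infinite (so that well-founded trees of genuinely infinite depth must be admitted) and tracking the collapsing congruence through the bijection; the one properly topological step, that $D$ is a bar, is immediate from continuity, and everything else is bookkeeping.
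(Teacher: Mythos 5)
Your argument is correct, but it takes a genuinely different route from the paper's. The paper never describes the coproduct of $A$-ary magmas explicitly: it first proves that the postcomposition maps $\iota_i \circ (\thg)$ are \emph{jointly epimorphic} in $\cat{Mod}(\mathbb{T}_A)$ by a contradiction argument (countably many dependent choices produce a stream $\vec a$ at which $f$ could not be continuous), and then establishes the universal property by presenting the candidate coproduct as a quotient of a free $A$-ary magma on symbols $[f,i]$ and checking by structural induction that any cocone respects the kernel congruence of that quotient. You instead compute the coproduct of $A$-ary magmas as normal-form well-founded trees and build an explicit two-sided inverse to the comparison map, the key point being that the minimal cylinders on which a continuous $f \colon A^{\mathbb{N}} \rightarrow \Sigma_i X_i$ is component-constant form a prefix-free bar whose tree is automatically in normal form by minimality. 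The topological input is identical in both proofs---continuity into a coproduct forces the component to be locally constant on a cylinder around each point---but you use it positively to construct the inverse, whereas the paper uses it negatively to get surjectivity, handling injectivity and the mediating map separately. What your version buys is an explicit description of $\cat{Top}(\alg[A^{\mathbb{N}}], \Sigma_i X_i)$ as a coproduct (close in spirit to the paper's Lemma~\ref{lem:5} and to the reification of Ghani--Hancock--Pattinson) and the avoidance of dependent choice; the cost is that you must actually justify the normal-form presentation of coproducts of $A$-ary magmas for possibly infinitely branching well-founded trees (uniqueness of normal forms is best done by defining the normalisation function by well-founded recursion rather than by appeal to strong normalisation, since trees may have infinitely many nodes), and you should record the small verifications that every internal node of $\widehat{D}$ has a full set of $A$ children and that absence of infinite branches places $\widehat{D}$ in the least fixed point $\mu X.\, V + X^A$.
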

\begin{proof}
    Given spaces $(X_i : i \in I)$, we have the coproduct
  injections $\iota_i \colon X_i \rightarrow \Sigma_i X_i$ in
  $\cat{Top}$, and must show that the
  family of postcomposition maps
  \begin{equation}
    \label{eq:6}
      \bigl(\,\iota_i \circ (\thg) \colon \cat{Top}(\alg[A^\mathbb{N}], X_i) \rightarrow \cat{Top}(\alg[A^\mathbb{N}], \Sigma_i X_i)\,\bigr)_{i \in I}
  \end{equation}
  constitute a coproduct cocone in $\cat{Mod}(\mathbb{T}_A)$. We first show:
  \begin{lem}
    \label{lem:3}
    The maps~\eqref{eq:6} are jointly epimorphic in
    $\cat{Mod}(\mathbb{T}_A)$.
  \end{lem}
  \begin{proof}
    We show that the sub-$A$-ary magma
    $\alg[M] \subseteq \cat{Top}(\alg[A^\mathbb{N}], \Sigma_i X_i)$
    generated by the image of the maps~\eqref{eq:6} is all of
    $\cat{Top}(\alg[A^\mathbb{N}], \Sigma_i X_i)$. So suppose not;
    then there exists some continuous
    $f \colon A^\mathbb{N} \rightarrow \Sigma_i X_i$ with
    $f \notin M$. Since we have 
    $f = \mathsf{split}(\mathsf{split}^{-1}(f)) = \mathsf{split}(\lambda a.\, f(a\thg))$, we can find $a_0 \in A$ with
    $f(a_0\thg) \notin M$. Now repeating the argument with
    $f(a_0\thg)$, we can find $a_1 \in A$ with
    $f(a_0 a_1 \thg) \notin M$; and continuing in this fashion, making
    countably many dependent choices, we find some
    $\vec a \in A^\mathbb{N}$ such that for all $n$,
    the continuous function
    $f(a_0 a_1 \dots a_n \thg) \colon A^\mathbb{N} \rightarrow \Sigma_i
    X_i$ is not in $M$. In particular, none of these functions factor
    through any $X_i$; but as $f(\vec a) \in X_i$ for some $i$,
    this means there is \emph{no} open neighbourhood of $\vec a$ which
    is mapped by $f$ into the open neigbourhood $X_i$ of $f(\vec a)$,
    contradicting the continuity of $f$.
  \end{proof}
  Thus, to complete the proof, we need only show that, for a cocone
  $( p_i \colon \cat{Top}(\alg[A^\mathbb{N}], X_i) \rightarrow \alg[Y])_{i \in I}$
  in $\cat{Mod}(\mathbb{T}_A)$, there exists
  \emph{some} map $p \colon \cat{Top}(\alg[A^\mathbb{N}], \Sigma_i X_i) \rightarrow \alg[Y]$
  with $p \circ \cat{Top}(\alg[A^\mathbb{N}],\iota_i) = p_i$ for each $i$. To this end, consider
  the diagram of $A$-ary magmas
  \begin{equation*}
    \cd[@!C@C-3em@-1em]{
      & {\alg[N]} \ar@{->>}[dl]_-{\varepsilon} \ar[dr]^-{\tilde p} \\
      {\cat{Top}(\alg[A^\mathbb{N}], \Sigma_i X_i)} \ar@{-->}[rr]^{p} & &
      {\alg[Y]}
    }
  \end{equation*}
  where $\alg[N] = (N, \nu)$ is the free $A$-ary magma generated by symbols
  $[f,i]$ for $i \in I$ and
  $f \in \cat{Top}(\alg[A^\mathbb{N}], X_i)$, where $\varepsilon$
  sends $[f,i]$ to $\iota_i f$ and where $\tilde p$ sends $[f,i]$ to
  $p_i(f)$. It suffices to exhibit a factorisation $p$ of $\tilde p$
  through $\varepsilon$ as displayed. Now by the lemma above,
  $\varepsilon$ is epimorphic, and so the coequaliser of its
  kernel-congruence; so to obtain such a factorisation, it suffices to
  show that if $x,y \in N$ satisfy
  $\varepsilon(x) = \varepsilon(y)$, then they satisfy
  $\tilde p(x) = \tilde p(y)$. We do so by induction on the total
  number of magma operations $\nu$ in $x$ and $y$:
  \begin{itemize}
  \item If $x = [f,i]$ and $y = [g,j]$ then
    $\varepsilon(x) = \varepsilon(y)$ says that
    $\iota_i f = \iota_j g$, which is possible only if $i = j$ and
    $f = g$. So $x = y$ and so certainly $\tilde p(x) = \tilde p(y)$.
  \item If $x = [f,i]$ and $y = \nu(\lambda a.\, y_a)$ then on taking
    $x_a = [f(a\thg), i]$ for each $a$, we get from $\varepsilon(x)
    = \varepsilon(y)$ that
    \begin{equation*}
      \mathsf{split}(\lambda a.\, \varepsilon(x_a)) =
      \mathsf{split}(\lambda a.\, \iota_i f(a\thg)) =
      \iota_i f = \varepsilon(x) = \varepsilon(y) = \varepsilon(\nu(\lambda a.\, y_a)) = \mathsf{split}(\lambda a.\, \varepsilon(y_a))
    \end{equation*}
    which, since $\mathsf{split}$ is invertible, implies that
    $\varepsilon(x_a) = \varepsilon(y_a)$ for each $a \in A$. By
    induction, we have $\tilde p(x_a) = \tilde p(y_a)$ for each $a$,
    and so we have the desired equality:
    \begin{equation*}
      \tilde p(x) = p_i(f) = \mathsf{split}(\lambda a.\, p_i(f(a\thg))) =
      \mathsf{split}(\lambda a.\, \tilde p(x_a)) =
      \mathsf{split}(\lambda a.\, \tilde p(y_a)) = 
      \tilde p(\nu(\lambda a.\, y_a) = \tilde p(y)\rlap{ .}
    \end{equation*}

  \item The case where $x = \nu(\lambda a.\, x_a)$ and $y = [g,j]$ is dual.
    
  \item Finally, if $x = \nu(\lambda a.\, x_a)$ and $y = \nu(\lambda
    a.\, y_a)$, then from $\varepsilon(x) = \varepsilon(y)$ we get
    \begin{equation*}
      \mathsf{split}(\lambda a.\, \varepsilon(x_a)) = \varepsilon(\nu(\lambda a.\, x_a)) = \varepsilon(x) = \varepsilon(y) = \varepsilon(\nu(\lambda a.\, y_a)) =
      \mathsf{split}(\lambda a.\, \varepsilon(y_a))
    \end{equation*}
    and so by invertibility of $\mathsf{split}$ that
    $\varepsilon(x_a) = \varepsilon(y_a)$ for all $a$. By induction,
    $\tilde p(x_a) = \tilde p(y_a)$ for all $a$,
    and so the desired equality
    \begin{equation*}
      \tilde p(x) = \tilde p(\nu(\lambda a.\, x_a) = \mathsf{split}(\lambda a.\, \tilde p(x_a)) =
      \mathsf{split}(\lambda a.\, \tilde p(y_a)) = 
      \tilde p(\nu(\lambda a.\, y_a) = \tilde p(y)\rlap{ .} \qedhere
    \end{equation*}
  \end{itemize}
\end{proof}

Using this result, we can conclude the argument as explained above.
Since both adjoints to the left of~\eqref{eq:9} preserve
coproducts, the adjunction lifts to an adjunction between
categories of $\mathbb{T}_B$-comodels as to the right. In particular,
the lifted right adjoint sends the final topological
$\mathbb{T}_B$-comodel to a final
$\mathbb{T}_A$-$\mathbb{T}_B$-bimodel, so giving our main theorem:

\begin{thm}
  \label{thm:2}
  For any sets $A$ and $B$, the final $\mathbb{T}_A$-$\mathbb{T}_B$-bimodel
  $\alg[E_{AB}]$ is given by the set of continuous functions
  $\cat{Top}(A^\mathbb{N}, B^\mathbb{N})$ with
  the $\mathbb{T}_A$-model structure of~\eqref{eq:10}, and with the
  $\mathbb{T}_B$-comodel structure map
  \begin{equation}\label{eq:12}
    \cat{Top}(\alg[A^\mathbb{N}], B^\mathbb{N})
    \xrightarrow{(g,n) \circ (\thg)} \cat{Top}(\alg[A^\mathbb{N}],
    B \cdot B^{\mathbb{N}})
    \xrightarrow{\cong}
    B \cdot \cat{Top}(\alg[A^\mathbb{N}],
    B^{\mathbb{N}})\rlap{ ,}
  \end{equation}
  whose first part is postcomposition with~\eqref{eq:3} and whose
  second part is the canonical isomorphism coming from the fact that
  $\cat{Top}(\alg[A^\mathbb{N}], \thg) \colon \cat{Top} \rightarrow
  \cat{Mod}(\mathbb{T}_A)$ preserves coproducts.
\end{thm}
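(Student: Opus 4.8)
The plan is to carry out the outline sketched just above the statement. First I would invoke Proposition~\ref{prop:1} with $\C = \cat{Top}$, which is cocomplete, and with $\alg[S]$ the final topological $\mathbb{T}_A$-comodel $\alg[A^\mathbb{N}]$, obtaining the adjunction $(\thg) \otimes \alg[A^\mathbb{N}] \dashv \cat{Top}(\alg[A^\mathbb{N}], \thg)$ displayed to the left of~\eqref{eq:9}. As recorded in the discussion preceding Proposition~\ref{prop:4}, its right adjoint sends a space $C$ to the set $\cat{Top}(A^\mathbb{N}, C)$ equipped with the $A$-ary magma operation $\mathsf{split}$ of~\eqref{eq:10}, and acts by postcomposition on morphisms. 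Now the left adjoint preserves all colimits, while the right adjoint preserves coproducts by Proposition~\ref{prop:4}; in particular, both directions of the adjunction preserve copowers.

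Next I would lift this adjunction to the categories of $\mathbb{T}_B$-comodels shown to the right of~\eqref{eq:9}. Since $\mathbb{T}_B$ has a single operation $\mathsf{read}$ of arity $B$ and no equations, a $\mathbb{T}_B$-comodel in a category with copowers is merely an object $S$ equipped with a map $S \to B \cdot S$; hence, exactly as in Remark~\ref{rk:4}, any copower-preserving functor $F$ carries such data, via $FS \to F(B \cdot S) \cong B \cdot FS$, to a functor between the associated categories of $\mathbb{T}_B$-comodels. A routine check using naturality of the copower-comparison isomorphisms shows that the unit and counit of $(\thg) \otimes \alg[A^\mathbb{N}] \dashv \cat{Top}(\alg[A^\mathbb{N}], \thg)$ lift to morphisms of $\mathbb{T}_B$-comodels, so we obtain an adjunction between $\cat{Comod}(\mathbb{T}_B, \cat{Mod}(\mathbb{T}_A))$ --- which is the category of $\mathbb{T}_A$-$\mathbb{T}_B$-bimodels --- and $\cat{Comod}(\mathbb{T}_B, \cat{Top})$.

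Finally, the lifted functor $\cat{Top}(\alg[A^\mathbb{N}], \thg)$, being a right adjoint, preserves the terminal object. By Lemma~\ref{lem:1} (together with the explicit description of the final $\mathbb{T}_B$-comodel at the end of Section~\ref{sec:streams-as-final}), the terminal object of $\cat{Comod}(\mathbb{T}_B, \cat{Top})$ is $\alg[B^\mathbb{N}]$: the set $B^\mathbb{N}$ with its prodiscrete topology and the structure map of~\eqref{eq:3}. Hence the final $\mathbb{T}_A$-$\mathbb{T}_B$-bimodel $\alg[E_{AB}]$ is the image of $\alg[B^\mathbb{N}]$ under the lifted right adjoint: its underlying $\mathbb{T}_A$-model is $\cat{Top}(\alg[A^\mathbb{N}], B^\mathbb{N})$, i.e.\ the set $\cat{Top}(A^\mathbb{N}, B^\mathbb{N})$ with the operation~\eqref{eq:10}; and, by the construction of the lift in the previous step, its $\mathbb{T}_B$-comodel structure is obtained by applying $\cat{Top}(\alg[A^\mathbb{N}], \thg)$ to the structure map $(g,n) \colon B^\mathbb{N} \to B \cdot B^\mathbb{N}$ of $\alg[B^\mathbb{N}]$ and then postcomposing with the comparison isomorphism $\cat{Top}(\alg[A^\mathbb{N}], B \cdot B^\mathbb{N}) \cong B \cdot \cat{Top}(\alg[A^\mathbb{N}], B^\mathbb{N})$. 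This is precisely the map~\eqref{eq:12}, completing the proof.

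Almost all of the substance sits in Proposition~\ref{prop:4}, which I take as already proved; the rest is formal. The two points deserving care are (i) the naturality bookkeeping making the unit and counit into $\mathbb{T}_B$-comodel morphisms, so that the adjunction genuinely lifts, and (ii) unwinding the abstract comparison isomorphism furnished by Proposition~\ref{prop:4} so as to recognise the abstractly-defined comodel structure on $\cat{Top}(A^\mathbb{N}, B^\mathbb{N})$ as the explicit map~\eqref{eq:12}. I would also pause to confirm that the copowers $B \cdot \cat{Top}(\alg[A^\mathbb{N}], B^\mathbb{N})$ taken in $\cat{Mod}(\mathbb{T}_A)$ are ordinary coproducts there, so that the conclusion of Proposition~\ref{prop:4} is exactly the hypothesis the lifting requires.
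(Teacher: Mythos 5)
Your proposal is correct and follows essentially the same route as the paper: apply Proposition~\ref{prop:1} to $\alg[A^\mathbb{N}]$ in $\cat{Top}$, use Proposition~\ref{prop:4} to see that both adjoints preserve coproducts, lift the adjunction to $\mathbb{T}_B$-comodel categories, and conclude by preservation of terminal objects, with Lemma~\ref{lem:1} identifying the final topological $\mathbb{T}_B$-comodel. Your extra care over the lifting of the unit and counit and the unwinding of the comparison isomorphism to obtain~\eqref{eq:12} is sound bookkeeping that the paper leaves implicit.
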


We now describe~\eqref{eq:12} more concretely, but first
we describe copowers in
$\cat{Mod}(\mathbb{T}_A)$.

\begin{lem}
  \label{lem:5}
  For any $\mathbb{T}_A$-model $\alg = (X, \xi)$ and set $B$, the
  copower $B \cdot \alg$ may be found as either: (i) the quotient of $\alg[T]_A(B \times
  X)$ by the congruence which identifies
  \begin{equation}\label{eq:13}
    \cd[@-1em]{
      (b, x_a) \ar@{-}[dr]^-{} & \cdots & (b, x_{a'}) \ar@{-}[dl]_-{} \\
      & \bullet \ar@{-}[d]_-{} \\
      & {}
    } \qquad \qquad \sim \qquad \qquad 
    \cd[@-1.3em]{\phantom{(b,f_a)}
      \\(b, \xi(\lambda a.\, x_a))\rlap{ ;} \ar@{-}[d]_-{} \\ {}
    } 
  \end{equation} or: (ii) the subset of
  $T_A(B \times X)$ on those $A$-ary branching trees where no
  non-trivial subtree has all its leaves labelled by the same element
  of $B$, with the $\mathbb{T}_A$-model structure map $\upsilon$ being that of $T_A(B \times
  X)$ except that $\upsilon(\lambda a.\, (b, x_a)) = (b, \xi(\lambda a.\,
  x_a))$.
\end{lem}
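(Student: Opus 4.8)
The plan is to establish (i) by the standard construction of coproducts in the category of models of an algebraic theory, and then to deduce (ii) by producing an explicit normalisation map identifying the tree set $K$ of part~(ii) with the quotient of part~(i).

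\emph{Part (i).} For any theory, the coproduct of a family of models is the free model on the disjoint union of their underlying sets, modulo the least congruence making each set-level coproduct injection a homomorphism. For $B$ copies of $\alg = (X,\xi)$ this is the free $\mathbb{T}_A$-model $\alg[T]_A(B \times X)$ modulo the least congruence $\sim$ forcing each $x \mapsto (b,x)$ to preserve the $A$-ary operation, and the single family of identifications so imposed is exactly~\eqref{eq:13}, namely $\mathsf{read}(\lambda a.\,(b,x_a)) \sim (b,\xi(\lambda a.\,x_a))$ for all $b \in B$ and $(x_a) \in X^A$. The universal property is immediate from Lemma~\ref{lem:4}: a model map out of $\alg[T]_A(B\times X)$ is a function $h \colon B\times X \to Y$, and it annihilates $\sim$ precisely when the induced map equates the two sides of each generator, i.e.\ when $h(b,\xi(\lambda a.\,x_a)) = \dbr{\mathsf{read}}_{\alg[Y]}(\lambda a.\,h(b,x_a))$ for all $b$ --- that is, precisely when each $h(b,\thg)\colon \alg \to \alg[Y]$ is a model map. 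So maps out of $\alg[T]_A(B\times X)\quot\sim$ are $B$-tuples of maps out of $\alg$, which exhibits this quotient, with injections $x \mapsto [(b,x)]$, as the copower $B\cdot\alg$.

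\emph{Part (ii).} Call a tree \emph{$b$-monochromatic} if it has a leaf and all its leaves carry $B$-component $b$ (we assume $A\neq\emptyset$, so every tree has a leaf; the case $A=\emptyset$ needs only the pointed-set version of (i)). Then $K$ is the set of trees with no non-trivial monochromatic subtree; note $K$ is closed under subtrees and that a non-leaf tree lying in $K$ is non-monochromatic, being a non-trivial subtree of itself. Define $r\colon T_A(B\times X)\to K$ by well-founded recursion: $r(b,x)=(b,x)$, and $r(\mathsf{read}(\lambda a.\,t_a))$ equals $(b,\xi(\lambda a.\,x_a))$ if every $r(t_a)$ is a leaf $(b,x_a)$ with one common $b$, and equals $\mathsf{read}(\lambda a.\,r(t_a))$ otherwise. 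By induction one checks: (a) $r$ lands in $K$ --- its first clause yields a leaf, and in its second clause $\mathsf{read}(\lambda a.\,r(t_a))$ is non-monochromatic, since otherwise each $r(t_a)$ would be $b_0$-monochromatic for one fixed $b_0$, hence a leaf (being in $K$), forcing the first clause instead; (b) $t\sim r(t)$ for every $t$, using~\eqref{eq:13} in the first clause and the congruence property of $\sim$ in the second; (c) $r$ restricts to the identity on $K$, because a non-leaf tree of $K$, being non-monochromatic, meets the second clause, which returns it unchanged. Now $r(\mathsf{read}(\lambda a.\,t_a))$ depends only on the tuple $(r(t_a))_a$, and $r$ sends both sides of every generator of~\eqref{eq:13} to a common value, so the relation $\{(t,u): r(t)=r(u)\}$ is a $\mathbb{T}_A$-congruence containing those generators, hence contains $\sim$; thus $t\sim u$ implies $r(t)=r(u)$. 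Consequently the composite $K\hookrightarrow T_A(B\times X)\twoheadrightarrow T_A(B\times X)\quot\sim$ is a bijection --- surjective since $[t]=[r(t)]$ with $r(t)\in K$, injective since $s\sim s'$ for $s,s'\in K$ gives $s=r(s)=r(s')=s'$. Transporting the model structure of part~(i) along this bijection, the $A$-ary operation on $K$ becomes $(s_a)_a\mapsto r(\mathsf{read}(\lambda a.\,s_a))$, which is exactly the map $\upsilon$ of the statement, and the injections become $x\mapsto(b,x)$.

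\emph{Main obstacle.} The delicate point is bridging the purely \emph{local} rewriting of~\eqref{eq:13} --- collapsing a node all of whose children are identically-labelled leaves --- with the \emph{global} condition defining $K$, which forbids monochromatic subtrees of arbitrary depth. The reconciliation, used tacitly above in showing $r$ lands in and fixes $K$, is the elementary observation that a monochromatic tree with a non-leaf child contains a strictly smaller monochromatic subtree, so that every minimal monochromatic subtree has the shape occurring in~\eqref{eq:13}; this makes ``no subtree of the form in~\eqref{eq:13}'' equivalent to ``no non-trivial monochromatic subtree'', which is precisely what links the normal forms of the rewriting to the set $K$.
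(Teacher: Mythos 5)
Your proof is correct, and in substance it is a fully elaborated version of the paper's two-sentence argument, though the elaboration takes a somewhat different route. For part (i) the paper simply appeals to the presentation $\alg[T]_A(B\times X)\quot\sim_B$ established inside the proof of Lemma~\ref{lem:2}, taking $\sim$ to be the kernel of $\mathrm{id}^\dagger\colon \alg[T]_A(X)\twoheadrightarrow\alg$; this tacitly requires knowing that the congruence generated by the single family of identifications~\eqref{eq:13} coincides with the a priori larger congruence $\sim_B$. You instead verify the copower's universal property directly from Lemma~\ref{lem:4}, which is more self-contained and avoids that unstated comparison of congruences. For part (ii) the paper asserts that the trees of the subset are the normal forms of the strongly normalising rewrite system obtained by orienting~\eqref{eq:13} left-to-right; your recursively defined retraction $r$, together with the checks that it lands in $K$, fixes $K$, refines $\sim$, and is refined by $\sim$, packages exactly the same idea while sidestepping any explicit termination or confluence argument. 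Your closing observation---that every minimal non-trivial monochromatic subtree has precisely the redex shape of~\eqref{eq:13}, so that the local rewrite rule detects the global condition defining $K$---is the point the paper leaves implicit. Your caveat about $A=\emptyset$ is also well taken: in that degenerate case description (ii) and the formula for $\upsilon$ break down, although this case is irrelevant to the paper's applications.
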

\begin{proof}
  (i) is the presentation $\alg[T]_A(B \times X) \quot \sim_B$ from
  Lemma~\ref{lem:2} when $\sim$ is the congruence associated to the
  quotient
  $\mathrm{id}^\dagger \colon \alg[T]_A(X) \twoheadrightarrow \alg$. As for
  (ii), these elements are the normal forms for the strongly
  normalising rewrite system obtained by applying~\eqref{eq:13} from
  left to right.
\end{proof}

Via presentation (i), we may thus describe~\eqref{eq:12} by
associating to each $f \in \cat{Top}(A^\mathbb{N}, B^\mathbb{N})$ a
suitable tree in
$T_A(B \times \cat{Top}(A^\mathbb{N}, B^\mathbb{N}))$. For this, we
use the identification of $B^\mathbb{N}$ with $B \cdot B^\mathbb{N}$
via $\vec b \mapsto (b_0, \partial \vec b)$, together with
Lemma~\ref{lem:3}, to see that $f$ lies in the closure under the
$A$-ary magma operation $\mathsf{split}$ on
$\cat{Top}(A^\mathbb{N}, B^\mathbb{N})$ of the set of those
$g \colon A^\mathbb{N} \rightarrow B^\mathbb{N}$ for which
$g(\vec a)_0$ is constant. (This expresses algebraically the fact
that, for each $\vec a \in A^\mathbb{N}$, there is some finite initial
segment $a_0 \dots a_k$ of $\vec a$ such that 
$f(\vec a')_0 = f(\vec a)_0$ whenever
$a_0 \dots a_k = a'_0 \dots a'_k$.)

Choosing any such presentation of $f$ gives a well-founded $A$-ary
tree (encoding the applications of $\mathsf{split}$) with leaves
labelled by functions $g \colon A^\mathbb{N} \rightarrow B^\mathbb{N}$
with $g(\vec a)_0$ constant. Each such $g$ is equally specified by the
constant $b = g(\vec a)_0$, and the function
$h = \partial \circ g \colon A^\mathbb{N} \rightarrow B^\mathbb{N}$,
so that our leaf labels are equally elements in
$B \times \cat{Top}(A^\mathbb{N}, B^\mathbb{N})$: so altogether we
have an element of
$T_A(B \times \cat{Top}(A^\mathbb{N}, B^\mathbb{N}))$. Note that
choosing a different presentation of $f$ would yield a different
element of $T_A(B \times \cat{Top}(A^\mathbb{N}, B^\mathbb{N}))$;
however, our theory ensures that these elements are congruent
under~\eqref{eq:13}, so yielding a well-defined element of
$B \cdot \cat{Top}(\alg[A^\mathbb{N}], B^\mathbb{N})$.

\section{Comparing intensional and extensional stream processors}
\label{sec:comp-intens-extens}

To conclude the paper, we examine the unique maps from an arbitrary
$\mathbb{T}_A$-$\mathbb{T}_B$-bimodel to the final one, showing that
these act as expected via the trace function of
Definition~\ref{def:5}; and, finally, we give a comodel-theoretic
explanation of ``normalisation-by-trace-evaluation'' for intensional
stream processors.

We begin with a small refinement of Proposition~\ref{prop:1}.

\begin{propC}[{\cite[Theorem~4.4]{Plotkin2008Tensors}}]
  \label{prop:5}
  Let $\C$ be a cocomplete category and $\alg[S]$ a
  $\mathbb{T}$-comodel in $\C$. The functor $(\thg) \otimes \alg[S]$
  of Proposition~\ref{prop:1} may be chosen to render
  commutative the following diagram, whose top edge is as
  in~\eqref{eq:2}, and whose left edge is as in Definition~\ref{def:7}:
  \begin{equation}\label{eq:14}
    \cd[@-0.7em]{
      \cat{Kl}(\mathbb{T}) \ar[r]^-{(\thg) \cdot \alg[S]} \ar[d]_-{I_\mathbb{T}} & \C
      \rlap{ .}\\
      \cat{Mod}(\mathbb{T}) \ar[ur]_-{(\thg) \otimes \alg[S]}
    }
  \end{equation}
\end{propC}

\begin{proof}
    For a free
  $\mathbb{T}$-model $\alg[T](V)$, we have natural bijections
  $\cat{Mod}(\mathbb{T})(\alg[T](V), \C(\alg[S], C)) \cong
  \cat{Set}(V, \C(S, C)) \cong \C(V \cdot S, C)$, and so we may take
  $\alg[T](V) \otimes \alg[S] = V \cdot S$. This makes~\eqref{eq:14}
  commute on objects. On morphisms, given 
  $\theta^\dagger \colon \alg[T](V) \rightarrow \alg[T](W)$ in
  $\cat{Mod}(\mathbb{T})$, its image 
  $\theta^\dagger \otimes \alg[S] \colon V \cdot S \rightarrow W \cdot
  S$ under $(\thg) \otimes \alg[S]$ is, by adjointness, the unique map making:
  \begin{equation*}
    \cd[@-0.7em]{
      \C(W \cdot S, C) \ar[r]^-{\cong} \ar[d]_-{(\thg) \circ (\theta^\dagger \otimes \alg[S])} &
      \cat{Set}(W, \C(S,C)) \ar[r]^-{\cong}
      \ar@{.>}[d] &
      \cat{Mod}(\mathbb{T})(\alg[T]W, \C(\alg[S], C)) \ar[d]^-{(\thg) \circ \theta^+}\\
      \C(V \cdot S, C) \ar[r]^-{\cong} &
      \cat{Set}(V, \C(S,C)) \ar[r]^-{\cong} & \cat{Mod}(\mathbb{T})(\alg[T]V, \C(\alg[S], C))
    }
  \end{equation*}
  commute for all $C \in \C$. Now, the unique dotted map making the
  right square commute is, by the freeness of $\alg[T](V)$, the
  function
  \begin{equation*}
    (f_w \in \C(S,C) : w \in W)  \qquad \mapsto \qquad
    (\dbr{\theta(v)}_{\C(\alg[S],C)}(\vec f) : v \in V)\rlap{ .}
  \end{equation*}
  But by induction on~\eqref{eq:16}, we have
  $\dbr{\theta(v)}_{\C(\alg[S], C)}(\vec f) = S
  \xrightarrow{\dbr{\theta(v)}^{\alg[S]}} W \cdot S
  \xrightarrow{\langle f_i\rangle_{i \in \abs{\sigma}}} C$; whence we
  must have
  $\theta^\dagger \otimes \alg[S] = \langle \dbr{\theta(v)}^{\alg[S]}
  \rangle_{v \in V} = \theta \cdot \alg[S]$ as desired.
\end{proof}

We now characterise the unique maps to $\alg[E_{AB}]$ from bimodels
induced by residual comodels.

\begin{prop}
  \label{prop:6}
  Let $\alg[S]$ be a $\mathbb{T}_A$-residual $\mathbb{T}_B$-comodel.
  The unique bimodel map from the associated bimodel
  $f \colon \alg[T]_A(\alg[S]) \rightarrow \alg[E_{AB}]$ is $\mathsf{tr}^\dagger$, the
  homomorphic extension of the trace function
  $\mathsf{tr} \colon S \rightarrow \cat{Top}(A^\mathbb{N},
  B^\mathbb{N})$ of Definition~\ref{def:5}.
\end{prop}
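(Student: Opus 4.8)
The plan is to show directly that $\mathsf{tr}^\dagger$ is a map of bimodels; since $\alg[E_{AB}]$ is final among $\mathbb{T}_A$-$\mathbb{T}_B$-bimodels, it then coincides with $f$. It is immediate from Lemma~\ref{lem:4} that $\mathsf{tr}^\dagger \colon \alg[T]_A(S) \to \alg[E_{AB}]$ is a homomorphism of the underlying $\mathbb{T}_A$-models, so the only thing to check is that it respects the $\mathbb{T}_B$-comodel structure in $\cat{Mod}(\mathbb{T}_A)$. Rather than verify this by hand, I would deduce it from the lifted adjunction on the right of~\eqref{eq:9} together with the construction of $\mathsf{tr}$ in Definition~\ref{def:5}.

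The first step is to identify the topological $\mathbb{T}_B$-comodel $\alg[T]_A(\alg[S]) \otimes \alg[A^\mathbb{N}]$ — the image of the bimodel $\alg[T]_A(\alg[S])$ under the (coproduct-preserving, hence lifted) left adjoint of~\eqref{eq:9} — with the tensor product $\alg[S] \cdot \alg[A^\mathbb{N}]$ of Definition~\ref{def:toptensor}. Indeed, $\alg[T]_A(\alg[S])$ is the image of the $\mathbb{T}_A$-residual $\mathbb{T}_B$-comodel $\alg[S]$ under the lift to $\mathbb{T}_B$-comodels of the coproduct-preserving functor $I_{\mathbb{T}_A} \colon \cat{Kl}(\mathbb{T}_A) \to \cat{Mod}(\mathbb{T}_A)$ of Definition~\ref{def:7}; by Proposition~\ref{prop:5} the composite $(\thg) \otimes \alg[A^\mathbb{N}] \circ I_{\mathbb{T}_A}$ is $(\thg) \cdot \alg[A^\mathbb{N}] \colon \cat{Kl}(\mathbb{T}_A) \to \cat{Top}$; and since passing to categories of comodels is functorial in coproduct-preserving functors (cf.~Remark~\ref{rk:4}), applying the lifted left adjoint of~\eqref{eq:9} to $\alg[T]_A(\alg[S])$ yields the lift of $(\thg) \cdot \alg[A^\mathbb{N}]$ evaluated at $\alg[S]$, which by Definition~\ref{def:2} is exactly $\alg[S] \cdot \alg[A^\mathbb{N}]$.

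With this identification in place, the lifted adjunction gives a bijection between maps of bimodels $\alg[T]_A(\alg[S]) \to \alg[E_{AB}]$ and maps of topological $\mathbb{T}_B$-comodels $\alg[S] \cdot \alg[A^\mathbb{N}] \to \alg[B^\mathbb{N}]$; the latter set has a unique element, which is by Definition~\ref{def:5} the map whose partial evaluations are the traces $\mathsf{tr}(s)$. So it remains only to check that the bimodel map adjunct to this unique comodel map has underlying $\mathbb{T}_A$-model homomorphism $\mathsf{tr}^\dagger$. Since the right adjoint of~\eqref{eq:9} is a lift of $\cat{Top}(\alg[A^\mathbb{N}], \thg)$ along forgetful functors, this underlying homomorphism is computed by the unlifted transposition, which via the freeness of $\alg[T]_A(S)$ and the adjunction $(\thg) \cdot A^\mathbb{N} \dashv \cat{Top}(A^\mathbb{N}, \thg) \colon \cat{Top} \to \cat{Set}$ sends a continuous map $\psi \colon S \times A^\mathbb{N} \to B^\mathbb{N}$ to the homomorphic extension of $s \mapsto \psi(s, \thg)$; applied to the defining map of $\mathsf{tr}$ this yields precisely $\mathsf{tr}^\dagger$, as required. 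I expect the coproduct-level identification of the two tensor products in the second paragraph to be the only real point of substance; alternatively, one could prove the proposition by a direct computation pitting~\eqref{eq:12}, the explicit copower of Lemma~\ref{lem:5}, and the coinductive description of $\mathsf{tr}$ against one another, at the cost of considerably more bookkeeping.
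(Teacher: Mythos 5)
Your proposal is correct and takes essentially the same route as the paper's own proof: identify $\alg[T]_A(\alg[S]) \otimes \alg[A^\mathbb{N}]$ with the tensor product $\alg[S] \cdot \alg[A^\mathbb{N}]$ via Proposition~\ref{prop:5} and Remark~\ref{rk:4}, then transpose the unique topological $\mathbb{T}_B$-comodel map into $\alg[B^\mathbb{N}]$ across the lifted adjunction of~\eqref{eq:9}, observing that the transposition is currying followed by homomorphic extension. No gaps to report.
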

\begin{proof}
  Since
  $(\thg) \otimes \alg[A^\mathbb{N}] \colon \cat{Mod}(\mathbb{T}_A)
  \rightarrow \cat{Top}$ restricts back along $I_{\mathbb{T}_A}$ to
  $(\thg) \cdot \alg[A^\mathbb{N}] \colon \cat{Kl}(\mathbb{T}_A)
  \rightarrow \cat{Top}$, its lifting to a functor on
  $\mathbb{T}_B$-comodels must, by Remark~\ref{rk:4}, restrict along
  $I_{\mathbb{T}_A}$ to the tensor product of Definition~\ref{def:2}.
  So the unique $\mathbb{T}_B$-comodel map
  $\alg[T]_A(\alg[S]) \otimes \alg[A^\mathbb{N}] \rightarrow
  \alg[B^\mathbb{N}]$ must be the unique map
  $\alg[S] \cdot \alg[A^\mathbb{N}] \rightarrow \alg[B^\mathbb{N}]$ of
  Definition~\ref{def:5}. By the proof of Proposition~\ref{prop:5},
  transposing this latter map to a bimodel map
  $\alg[T]_A(\alg[S]) \rightarrow \alg[E_{AB}]$ is achieved by first
  currying---which yields the trace function $\mathsf{tr} \colon S
  \rightarrow \cat{Top}(A^\mathbb{N}, B^\mathbb{N})$---and then extending homomorphically.
\end{proof}

We now do the same for the unique maps to $\alg[E_{AB}]$ from
\emph{arbitrary} $\mathbb{T}_A$-$\mathbb{T}_B$-bimodels. To do so, we
show that every such bimodel arises in a canonical way from the
construction of Lemma~\ref{lem:2}. Note that this is \emph{not} true
for bimodels over arbitrary theories; it relies on a special property
of the theory $\mathbb{T}_A$, namely that it admits \emph{abstract
  hypernormalisation} in the sense
of~\cite{Garner2018Hypernormalisation}. See section 7 of
\emph{op.~cit.} for a more detailed explanation of this phenomenon.
\begin{lem}
  \label{lem:6}
  Let $\alg[K]$ be a $\mathbb{T}_A$-$\mathbb{T}_B$-bimodel. The composite
  \begin{equation*}
    \gamma = K \xrightarrow{\dbr{\mathsf{read}}^{\alg[K]}} B \cdot K \xrightarrow{\ \ \subseteq \ } T_A(B \times K)
  \end{equation*}
  where we take $B \cdot \alg[K] \subseteq T_{A}(B \times K)$ as in
  Lemma~\ref{lem:5}(ii), endows $K$ with the structure of a
  $\mathbb{T}_A$-residual $\mathbb{T}_B$-comodel $\check{\alg[K]}$.
  The congruence on $\alg[T]_A(K)$ generating the $\mathbb{T}_A$-model
  quotient map
  $\mathrm{id}_K^\dagger \colon \alg[T]_A(K) \twoheadrightarrow
  \alg[K]$ is a $\mathbb{T}_A$-bisimulation for $\check{\alg[K]}$ and
  the quotient bimodel is precisely $\alg[K]$.
\end{lem}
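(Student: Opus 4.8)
The plan is to reduce all three assertions to the commutativity of a single square of $\mathbb{T}_A$-model homomorphisms. First observe that the claim that $\gamma$ makes $K$ into a $\mathbb{T}_A$-residual $\mathbb{T}_B$-comodel $\check{\alg[K]}$ requires nothing: since $\mathbb{T}_B$ is freely generated by the single $B$-ary symbol $\mathsf{read}$ subject to no equations, a $\mathbb{T}_A$-residual $\mathbb{T}_B$-comodel is exactly a $T_A(B\times\thg)$-coalgebra, so the function $\gamma\colon K\to T_A(B\times K)$ determines one, with $\dbr{\mathsf{read}}^{\check{\alg[K]}}=\gamma$.

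The heart of the argument is the square
\begin{equation*}
  \cd[@C+1.5em]{
    \alg[T]_A(K) \ar[r]^-{\gamma^\dagger} \ar@{->>}[d]_-{\mathrm{id}_K^\dagger} & \alg[T]_A(B\times K) \ar@{->>}[d]^-{q_B} \\
    \alg[K] \ar[r]^-{\dbr{\mathsf{read}}^{\alg[K]}} & B\cdot\alg[K]\rlap{ ,}
  }
\end{equation*}
in which $q_B=B\cdot\mathrm{id}_K^\dagger$ is the quotient map exhibiting $B\cdot\alg[K]=\alg[T]_A(B\times K)\quot\sim_B$ as in the proof of Lemma~\ref{lem:2}, and in which I identify $B\cdot\alg[K]$ with the subset of normal forms inside $T_A(B\times K)$ via Lemma~\ref{lem:5}(ii), so that the inclusion $\subseteq$ occurring in the definition of $\gamma$ is a section of $q_B$. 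All four maps are $\mathbb{T}_A$-model homomorphisms: $\gamma^\dagger$ and $q_B$ because Kleisli extensions and copowers of homomorphisms are homomorphisms, $\mathrm{id}_K^\dagger$ by construction, and $\dbr{\mathsf{read}}^{\alg[K]}$ because it is by hypothesis a morphism in $\cat{Mod}(\mathbb{T}_A)$. Since $\alg[T]_A(K)$ is free on $K$, I would verify commutativity by precomposing with $\eta_K\colon K\to T_A(K)$: the top-then-right composite becomes $q_B\circ\gamma^\dagger\circ\eta_K=q_B\circ\gamma=q_B\circ{\subseteq}\circ\dbr{\mathsf{read}}^{\alg[K]}=\dbr{\mathsf{read}}^{\alg[K]}$, using that $\subseteq$ sections $q_B$, while the left-then-bottom composite becomes $\dbr{\mathsf{read}}^{\alg[K]}\circ\mathrm{id}_K^\dagger\circ\eta_K=\dbr{\mathsf{read}}^{\alg[K]}$; so the square commutes.

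With the square in hand the remaining assertions are formal. Because $\sim_B=\ker q_B$ and $\sim=\ker\mathrm{id}_K^\dagger$, commutativity says precisely that $\gamma^\dagger=\dbr{\mathsf{read}}^{\alg[T]_A(\check{\alg[K]})}$ carries $\sim$-congruent terms to $\sim_B$-congruent terms, which is the defining condition for $\sim$ to be a $\mathbb{T}_A$-bisimulation for $\check{\alg[K]}$. Finally, by Lemma~\ref{lem:2} the quotient bimodel $\alg[T]_A(\check{\alg[K]})\quot\sim$ has underlying $\mathbb{T}_A$-model $\alg[T]_A(K)\quot\sim=\alg[K]$, and its $\mathbb{T}_B$-comodel co-operation is the unique map $\delta\colon\alg[K]\to B\cdot\alg[K]$ with $\delta\circ\mathrm{id}_K^\dagger=q_B\circ\gamma^\dagger$; the square exhibits $\dbr{\mathsf{read}}^{\alg[K]}$ as such a $\delta$, so the quotient bimodel is precisely $\alg[K]$. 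The only non-routine ingredient is the compatibility of the two descriptions of $B\cdot\alg[K]$ in Lemma~\ref{lem:5} --- equivalently, the fact that $q_B$ is at once $B\cdot\mathrm{id}_K^\dagger$ and the normalisation map, so that $\subseteq$ is its section --- and this is exactly what Lemma~\ref{lem:5} provides; so I expect the write-up to be short.
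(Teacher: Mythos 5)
Your proposal is correct and follows essentially the same route as the paper: both reduce everything to the commutativity of the square relating $\gamma^\dagger$, $\mathrm{id}_K^\dagger$, $B\cdot\mathrm{id}_K^\dagger$ and $\dbr{\mathsf{read}}^{\alg[K]}$, verify it by freeness of $\alg[T]_A(K)$ using that the inclusion of normal forms from Lemma~\ref{lem:5}(ii) is a section of the quotient map onto $B\cdot\alg[K]$, and then read off the bisimulation and quotient-bimodel claims. Your write-up is merely a little more explicit than the paper's about why the residual-comodel claim is automatic and how the two remaining assertions follow from the square.
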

\begin{proof}
  Only the final sentence requires any verification; it will follow if
  we can show that the square of $\mathbb{T}_A$-model maps to the left
  below is commutative:
  \begin{equation*}
    \cd[@C+0.5em]{
      \alg[T]_A(K) \ar[r]^-{\gamma^\dagger} \ar@{->>}[d]_-{\mathrm{id}_K^\dagger} &
      \alg[T]_A(B \times K) \ar@{->>}[d]^-{B \cdot \mathrm{id}_K^\dagger} \\
      \alg[K] \ar[r]^-{\dbr{\mathsf{read}}^{\alg[K]}} & B \cdot \alg[K]
    } \qquad \qquad
    \cd[@C+0.5em]{
      K \ar[r]^-{\dbr{\mathsf{read}}^{\alg[K]}} \ar@{->>}[d]_-{\mathrm{id}} &
      B \cdot K \ar[r]^-{\iota} &
      T_A(B \times K) \ar@{->>}[d]^-{B \cdot \mathrm{id}_K^\dagger} \\
      K \ar[rr]^-{\dbr{\mathsf{read}}^{\alg[K]}} & & B \cdot K
    }
  \end{equation*}
  which by freeness will happen just when the diagram to the right
  also commutes. But the map $B \cdot \mathrm{id}_K^\dagger$ therein
  is the quotient map by the congruence of~\eqref{eq:13}, of which
  $\iota$ must be a section since it selects a family of
  equivalence-class representatives.
\end{proof}

If $\alg[K]$ is a bimodel, then $\check{\alg[K]}$ 
is the \emph{maximally lazy} realisation of $\alg[K]$ as a residual comodel,
wherein the program associated to each state $k \in \alg[K]$ reads
the absolute minimum number of input $A$-tokens required to
determine the next output $B$-token, with all subsequent reading
from $A$ handed off (via the $\mathbb{T}_A$-model structure on
$\alg[K]$) to the continuation state.

\begin{prop}
  \label{prop:7}
  Let $\alg[K]$ be a $\mathbb{T}_A$-$\mathbb{T}_B$-bimodel. The image
  of $k \in \alg[K]$ under the unique bimodel map $\alg[K] \rightarrow
  \alg[E_{AB}]$ is the continuous function $\mathsf{tr}^{\check{\alg[K]}}(k) \colon
  A^\mathbb{N} \rightarrow B^\mathbb{N}$.
\end{prop}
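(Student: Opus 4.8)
The plan is to reduce the statement to Proposition~\ref{prop:6} by exploiting the presentation of $\alg[K]$ as a quotient bimodel provided by Lemma~\ref{lem:6}. Write $!_{\alg[K]} \colon \alg[K] \to \alg[E_{AB}]$ for the unique bimodel map, and recall from Lemma~\ref{lem:6} that $\alg[K]$ is the quotient of the bimodel $\alg[T]_A(\check{\alg[K]})$ associated to the residual comodel $\check{\alg[K]}$, with quotient map $q \colon \alg[T]_A(\check{\alg[K]}) \twoheadrightarrow \alg[K]$ a bimodel homomorphism whose underlying $\mathbb{T}_A$-model map is $\mathrm{id}_K^\dagger$.

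First I would form the composite bimodel map $!_{\alg[K]} \circ q \colon \alg[T]_A(\check{\alg[K]}) \to \alg[E_{AB}]$. Since $\alg[E_{AB}]$ is the \emph{final} $\mathbb{T}_A$-$\mathbb{T}_B$-bimodel, this composite is \emph{the} unique bimodel map out of $\alg[T]_A(\check{\alg[K]})$, which by Proposition~\ref{prop:6} (applied to the residual comodel $\check{\alg[K]}$) is the homomorphic extension $(\mathsf{tr}^{\check{\alg[K]}})^\dagger$ of the trace function. Passing to underlying functions gives $!_{\alg[K]} \circ \mathrm{id}_K^\dagger = (\mathsf{tr}^{\check{\alg[K]}})^\dagger \colon T_A(K) \to \cat{Top}(A^\mathbb{N}, B^\mathbb{N})$. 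I would then precompose both sides with the inclusion of variables $\eta_K \colon K \to T_A(K)$: the left side becomes the restriction of $!_{\alg[K]}$ to $K$, since $\mathrm{id}_K^\dagger \circ \eta_K = \mathrm{id}_K$, while the right side becomes $\mathsf{tr}^{\check{\alg[K]}}$ by Lemma~\ref{lem:4}. Hence $!_{\alg[K]}(k) = \mathsf{tr}^{\check{\alg[K]}}(k)$ for every $k \in \alg[K]$, as required.

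There is no genuine obstacle beyond bookkeeping here: the substantive inputs are exactly Lemma~\ref{lem:6} (that $\check{\alg[K]}$ is a well-defined residual comodel and that $q$ is a bimodel homomorphism exhibiting $\alg[K]$ as its quotient) and Proposition~\ref{prop:6} (that the unique bimodel map out of a residual-comodel-induced bimodel is the homomorphic extension of its trace). The only point deserving a moment's care is that the finality argument must be run in the category of $\mathbb{T}_A$-$\mathbb{T}_B$-bimodels, so that one uses $q$ and $!_{\alg[K]}$ qua bimodel maps rather than merely qua $\mathbb{T}_A$-model maps or $\mathbb{T}_B$-comodel maps; this is precisely what Lemma~\ref{lem:6} guarantees.
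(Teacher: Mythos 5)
Your proposal is correct and is essentially the paper's own argument: both use Lemma~\ref{lem:6} to present $\alg[K]$ as a quotient of $\alg[T]_A(\check{\alg[K]})$, invoke finality of $\alg[E_{AB}]$ to get the commuting triangle, identify its left edge via Proposition~\ref{prop:6} as $(\mathsf{tr}^{\check{\alg[K]}})^\dagger$, and then evaluate at $k$ (your precomposition with $\eta_K$ is just the explicit form of the paper's ``tracing $k \in K \subseteq T_A(K)$ around the triangle'').
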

\begin{proof}
  By Lemma~\ref{lem:6} we have a quotient map of bimodels
  $\mathrm{id}_K^\dagger \colon \alg[T]_A(\check{\alg[K]})
  \twoheadrightarrow \alg[K]$ which of necessity fits into a commuting
  triangle
  \begin{equation*}
    \cd[@!C@-1em@C-0.8em]{
      \alg[T]_A(\check{\alg[K]}) \ar[dr]_-{!} \ar[rr]^-{\mathrm{id}_K^\dagger} & &
      \alg[K]\rlap{ .} \ar[dl]^-{!} \\ & \alg[E_{AB}]
    }
  \end{equation*}
  The left edge of this triangle is by Proposition~\ref{prop:6} the
  homomorphic extension of
  $\mathsf{tr}{\check{\alg[K]}} \colon K \rightarrow
  \cat{Top}(A^\mathbb{N}, B^\mathbb{N})$.
  Thus, tracing the element $k \in K \subseteq T_A(K)$ around the two
  sides of this triangle yields the result.
\end{proof}

Finally, we give use the above results to give a comodel-theoretic
reconstruction of the \emph{reification} of each continuous function
on streams by an intensional stream processor; this is the function
$rep_\infty$ of~\cite{Hancock2009Representations}.

\begin{defi}
  The function $\mathsf{reify} \colon E_{AB} \rightarrow I_{AB}$ is
  the underlying map of the unique residual comodel map
  $\alg[\check{E}_{AB}] \rightarrow \alg[I_{AB}]$.
\end{defi}

As the notation suggests, we have:
\begin{prop}
  \label{prop:8}
  $\mathsf{reflect} \circ \mathsf{reify} = \mathrm{id}_{E_{AB}}$.
\end{prop}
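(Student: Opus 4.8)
The plan is to combine Proposition~\ref{prop:7} with the way \(\mathsf{reflect}\) and \(\mathsf{reify}\) are defined, and to recognise the composite \(\mathsf{reflect} \circ \mathsf{reify}\) as a trace computation that can only be the identity. Concretely, \(\mathsf{reify}\) is by definition the underlying function of the unique residual comodel map \(\alg[\check{E}_{AB}] \to \alg[I_{AB}]\); and \(\mathsf{reflect}\) is by Definition~\ref{def:9} the trace function \(\mathsf{tr}^{\alg[I_{AB}]} \colon I_{AB} \to \cat{Top}(A^\mathbb{N}, B^\mathbb{N})\) of the final \(\mathbb{T}_A\)-residual \(\mathbb{T}_B\)-comodel. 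So the composite \(\mathsf{reflect}\circ\mathsf{reify}\) is the map obtained by transporting \(\mathsf{tr}^{\alg[I_{AB}]}\) along the residual comodel map \(\alg[\check{E}_{AB}] \to \alg[I_{AB}]\); but trace is functorial for residual comodel maps — a residual comodel map \(h \colon \alg[S] \to \alg[U]\) satisfies \(\mathsf{tr}^{\alg[U]} \circ h = \mathsf{tr}^{\alg[S]}\), since both express the partial evaluation of the unique map of \(\mathbb{T}_B\)-comodels into \(\alg[B^\mathbb{N}]\) obtained by tensoring with \(\alg[A^\mathbb{N}]\). Hence \(\mathsf{reflect}\circ\mathsf{reify} = \mathsf{tr}^{\alg[\check{E}_{AB}]} \colon E_{AB} \to \cat{Top}(A^\mathbb{N}, B^\mathbb{N})\).

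It remains to identify \(\mathsf{tr}^{\alg[\check{E}_{AB}]}\) with the identity on \(E_{AB} = \cat{Top}(A^\mathbb{N}, B^\mathbb{N})\). This is exactly where Proposition~\ref{prop:7} applies: for the bimodel \(\alg[K] = \alg[E_{AB}]\), the unique bimodel map \(\alg[E_{AB}] \to \alg[E_{AB}]\) is of course \(\mathrm{id}_{E_{AB}}\); and Proposition~\ref{prop:7} tells us that this unique map sends each \(k \in E_{AB}\) to \(\mathsf{tr}^{\check{\alg[E_{AB}]}}(k)\). Comparing these two descriptions of the same map gives \(\mathsf{tr}^{\alg[\check{E}_{AB}]}(f) = f\) for every \(f \in E_{AB}\), i.e., \(\mathsf{tr}^{\alg[\check{E}_{AB}]} = \mathrm{id}_{E_{AB}}\). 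Chaining this with the previous paragraph yields \(\mathsf{reflect}\circ\mathsf{reify} = \mathrm{id}_{E_{AB}}\).

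The one point requiring a small argument is the functoriality of trace under residual comodel maps, i.e.\ that \(\mathsf{tr}^{\alg[I_{AB}]}\) precomposed with the residual comodel map \(\alg[\check{E}_{AB}] \to \alg[I_{AB}]\) equals \(\mathsf{tr}^{\alg[\check{E}_{AB}]}\). I would derive this from Remark~\ref{rk:4}: tensoring with \(\alg[A^\mathbb{N}]\) is functorial on the category of \(\mathbb{T}_B\)-comodels in \(\cat{Kl}(\mathbb{T}_A)\), so a residual comodel map \(h \colon \alg[S] \to \alg[U]\) induces a map of \(\mathbb{T}_B\)-comodels \(\alg[S]\cdot\alg[A^\mathbb{N}] \to \alg[U]\cdot\alg[A^\mathbb{N}]\) over which the two canonical maps into the final comodel \(\alg[B^\mathbb{N}]\) agree by finality; partially evaluating recovers \(\mathsf{tr}^{\alg[U]}\circ h = \mathsf{tr}^{\alg[S]}\). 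Alternatively — and perhaps more cleanly for the write-up — I would avoid stating this separately and instead run the single diagram of \(\mathbb{T}_B\)-comodels
\begin{equation*}
  \cd[@!C@-1em@C-0.8em]{
    \alg[\check{E}_{AB}] \cdot \alg[A^\mathbb{N}] \ar[dr]_-{!} \ar[rr]^-{\mathsf{reify}\,\cdot\,\alg[A^\mathbb{N}]} & &
    \alg[I_{AB}] \cdot \alg[A^\mathbb{N}] \ar[dl]^-{!} \\ & \alg[B^\mathbb{N}]
  }
\end{equation*}
which commutes since \(\alg[B^\mathbb{N}]\) is final, then partially evaluate at \(f \in E_{AB}\): the top-then-right path gives \(\mathsf{reflect}(\mathsf{reify}(f))\), while the left path gives \(\mathsf{tr}^{\check{\alg[E_{AB}]}}(f)\), which is \(f\) by Proposition~\ref{prop:7}. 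I expect no real obstacle here; the only thing to be careful about is keeping straight that \(E_{AB}\) plays three roles at once — as a bimodel, as the residual comodel \(\check{\alg[E_{AB}]}\), and as the plain set of continuous functions — and checking that the ``unique bimodel endomorphism of \(\alg[E_{AB}]\) is the identity'' invocation of Proposition~\ref{prop:7} is legitimate, which it is since \(\alg[E_{AB}]\) is the \emph{final} bimodel.
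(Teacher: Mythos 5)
Your proof is correct, but it is packaged differently from the paper's. The paper forms a single triangle of bimodel maps out of $\alg[T]_A(\alg[\check{E}_{AB}])$: Proposition~\ref{prop:6} identifies $\mathsf{reflect}^\dagger$ as the unique bimodel map $\alg[T]_A(\alg[I_{AB}]) \rightarrow \alg[E_{AB}]$, Lemma~\ref{lem:6} identifies $\mathrm{id}_{E_{AB}}^\dagger$ as the unique bimodel map $\alg[T]_A(\alg[\check{E}_{AB}]) \rightarrow \alg[E_{AB}]$, finality of $\alg[E_{AB}]$ forces the triangle through $\alg[T]_A(\mathsf{reify})$ to commute, and precomposing with $\eta$ finishes. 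You instead (a) prove that trace is invariant under maps of residual comodels --- a fact the paper never states explicitly, though your derivation of it from Remark~\ref{rk:4} and finality of $\alg[B^\mathbb{N}]$ is sound and mirrors the proof of Proposition~\ref{prop:spr} --- which gives $\mathsf{reflect} \circ \mathsf{reify} = \mathsf{tr}^{\alg[\check{E}_{AB}]}$, and then (b) apply Proposition~\ref{prop:7} with $\alg[K] = \alg[E_{AB}]$, where the unique bimodel endomorphism is the identity by finality, to get $\mathsf{tr}^{\alg[\check{E}_{AB}]} = \mathrm{id}$. Since Proposition~\ref{prop:7} is itself proved from Proposition~\ref{prop:6} and Lemma~\ref{lem:6}, both arguments rest on the same foundations; yours buys a more transparent reading (``$\mathsf{reify}$ preserves traces, and the maximally lazy realisation of the final bimodel has identity trace'') at the cost of one extra lemma, while the paper's stays entirely at the level of bimodel finality and is slightly shorter. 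Your closing caveats --- that $E_{AB}$ wears three hats, and that invoking the uniqueness of the bimodel endomorphism requires finality of $\alg[E_{AB}]$ --- are exactly the right points to flag, and both check out.
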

\begin{proof}
  By Proposition~\ref{prop:6}, the unique
  $\mathbb{T}_A$-$\mathbb{T}_B$-bimodel map
  $\alg[T]_A(\alg[I_{AB}]) \rightarrow \alg[E_{AB}]$ is
  $\mathsf{reflect}^\dagger$, while by Lemma~\ref{lem:6},
  $\mathrm{id}^\dagger_{E_{AB}}$ is the unique bimodel
  $\mathbb{T}_A$-$\mathbb{T}_B$-bimodel map
  $\alg[T]_A(\alg[\check{E}_{AB}]) \rightarrow \alg[E_{AB}]$. So we
  have a (necessarily commuting) triangle of
  $\mathbb{T}_A$-$\mathbb{T}_B$-bimodel maps:
  \begin{equation*}
    \cd[@-1em]{
      \alg[T]_A(\alg[\check{E}_{AB}])
      \ar[rr]^-{\alg[T]_A(\mathsf{reify})}
      \ar[dr]_-{\mathrm{id}_{E_{AB}}^{\dagger}} & & \alg[T]_A(\alg[I_{AB}]) \ar[dl]^-{\mathsf{reflect}^\dagger}\\ & \alg[E_{AB}]
    }
  \end{equation*}
  and precomposing with $\eta \colon E_{AB} \rightarrow T_A(E_{AB})$
  yields the result.
\end{proof}

This result is proved in a more general context in~\cite[\S
7.3]{Garner2018Hypernormalisation}, and as explained there, the
composite $\mathsf{reify} \circ \mathsf{reflect}$ implements
\emph{normalisation-by-trace-evaluation}: given an intensional stream
processor, it first computes its underlying trace
$A^\mathbb{N} \rightarrow B^\mathbb{N}$, and then via the reification
function produces from this a maximally lazy intensional stream
processor realising this trace. For instance, under this procedure,
the trees $\tau_1, \tau_2 \in I_{AB}$ of Example~\ref{ex:3} will both
normalise to $\tau_1$.

\bibliography{bibliography}
\bibliographystyle{alphaurl}

\end{document}